\documentclass[11pt,a4paper,oneside]{article}

\usepackage[utf8]{inputenc}
\usepackage[T1]{fontenc}
\usepackage{lmodern}
\usepackage[provide=*,english]{babel}
\usepackage[a4paper,margin=1in]{geometry}
\IfFileExists{setspace.sty}{\usepackage{setspace}\onehalfspacing}{}
\IfFileExists{parskip.sty}{\usepackage{parskip}}{}
\usepackage{amsmath,amssymb,amsthm,bm}
\IfFileExists{mathtools.sty}{\usepackage{mathtools}}{}
\IfFileExists{booktabs.sty}{\usepackage{booktabs}}{}
\IfFileExists{longtable.sty}{\usepackage{longtable}}{}
\IfFileExists{pdflscape.sty}{\usepackage{pdflscape}}{}
\numberwithin{equation}{section}
\IfFileExists{natbib.sty}{\usepackage[round,authoryear]{natbib}}{}
\usepackage{xcolor}
\definecolor{linkblue}{HTML}{1A73E8}
\usepackage[
 colorlinks=true,
 linkcolor=linkblue,
 citecolor=linkblue,
 urlcolor=linkblue
]{hyperref}
\IfFileExists{cleveref.sty}{%
\usepackage[capitalise,nameinlink]{cleveref}
\crefname{equation}{Eq.}{Eqs.}
\crefname{section}{Section}{Sections}
\crefname{subsection}{Section}{Sections}
\crefname{assumption}{Assumption}{Assumptions}
\crefname{theorem}{Theorem}{Theorems}
\crefname{proposition}{Proposition}{Propositions}
\crefname{lemma}{Lemma}{Lemmas}
\crefname{corollary}{Corollary}{Corollaries}
\crefname{definition}{Definition}{Definitions}
}{}
\providecommand{\cref}[1]{\ref{#1}}
\providecommand{\Cref}[1]{\ref{#1}}
\newcommand{\headingcref}[2]{\texorpdfstring{\cref{#1}}{#2}}

\IfFileExists{aliascnt.sty}{\usepackage{aliascnt}}{}

\theoremstyle{plain}
\newtheorem{theorem}{Theorem}[section]

\newaliascnt{proposition}{theorem}
\newtheorem{proposition}[proposition]{Proposition}
\aliascntresetthe{proposition}

\newaliascnt{lemma}{theorem}
\newtheorem{lemma}[lemma]{Lemma}
\aliascntresetthe{lemma}

\newaliascnt{corollary}{theorem}
\newtheorem{corollary}[corollary]{Corollary}
\aliascntresetthe{corollary}

\theoremstyle{definition}
\newaliascnt{definition}{theorem}
\newtheorem{definition}[definition]{Definition}
\aliascntresetthe{definition}

\newaliascnt{assumption}{theorem}
\newtheorem{assumption}[assumption]{Assumption}
\aliascntresetthe{assumption}

\theoremstyle{remark}
\newaliascnt{remark}{theorem}

\aliascntresetthe{remark}

\newcommand{\R}{\mathbb{R}}
\newcommand{\E}{\mathbb{E}}
\newcommand{\Var}{\mathrm{Var}}
\newcommand{\Prb}{\mathbb{P}}
\newcommand{\ind}{\mathbf{1}}
\newcommand{\expit}{\operatorname{expit}}
\newcommand{\Law}{\mathcal{L}}

\title{Win-Ratio Regression for Prioritized Composite Outcomes in
Observational Studies: Doubly Robust and Efficient Estimation with Future-Score
Correction}
\author{%
  Zhuochao Huang\thanks{Department of Statistics, University of Florida.
    Corresponding author: Zhuochao Huang,
    \texttt{zhuochao.huang@ufl.edu}}%
  \and Lucy Shao\thanks{Division of Biostatistics and Bioinformatics,
    Herbert Wertheim School of Public Health and Human Longevity Science,
    University of California San Diego.}%
  \and Yi Guo\thanks{Department of Health Outcomes and Biomedical Informatics,
    College of Medicine, University of Florida.}%
  \and Xin M. Tu\footnotemark[2]%
  \and Changyong Feng\thanks{Department of Biostatistics and Computational Biology,
    University of Rochester Medical Center.}%
  \and Tuo Lin\thanks{Department of Biostatistics, University of Florida.
    Corresponding author: Tuo Lin,
    \texttt{tuolin@ufl.edu}}%
}
\date{}

\begin{document}
\maketitle

\begin{abstract}

Prioritized pairwise outcomes are useful when clinical events follow a natural hierarchy, but censoring before pair resolution complicates estimation. We develop a win-ratio regression framework for this setting by defining a complete-data target over follow-up and deriving an estimating equation for the observed data. The central idea is future-score correction (FC): when censoring prevents later pairwise comparisons from being observed, the method replaces the remaining score with its conditional expectation given the observed history. This correction recovers pairwise information beyond that provided by inverse censoring weights alone. Additionally, we incorporate treatment weighting and baseline outcome augmentation to address baseline confounding. Together, these components yield double robustness for treatment assignment and censoring. Inference is obtained from U-statistic theory. Under standard regularity conditions, the AIPW-FC estimator is asymptotically normal and efficient when all nuisance functions are correctly specified. Simulations with 30\%, 50\%, and 65\% censoring show that efficiency gains from future-score correction increase with the censoring rate, with relative efficiency reaching 1.50 under 65\% censoring and near-nominal coverage for AIPW-FC. An application to OneFlorida electronic health record data illustrates the method for a composite outcome that prioritizes death over hospitalization.

\end{abstract}

\section{Introduction}
\label{sec:introduction}

Composite outcomes are common in clinical studies that evaluate treatment benefit across multiple event types. These event outcomes, including death, recurrence, and hospitalization, often differ in clinical importance and motivate analyses leveraging a prespecified clinical hierarchy. Building on the pairwise-comparison approach of Finkelstein and Schoenfeld and the generalized pairwise-comparison framework of Buyse \citep{finkelstein1999combining,buyse2010generalized}, the win-ratio method determines each pair's win status by comparing component outcomes in a prespecified hierarchy \citep{pocock2012win}. Subsequent work developed inference for the win ratio \citep{luo2015alternative,bebu2016large}, broadened the family of win statistics to include metrics such as net benefit and win odds \citep{brunner2021win,dong2023win}, and applied them in trial design \citep{pocock2024win,barnhart2025trial}.

A win estimand is determined not only by the component hierarchy, but also by the rule for ties and by the time frame over which a pair is compared. When the comparison window is left implicit, the resulting win ratio can depend on censoring and realized follow-up in addition to the underlying outcome process \citep{oakes2016win,mao2024defining,li2024elusiveness}. Under right censoring, however, the win, loss, or tie status of a pair may not be observable by the chosen horizon. A lower-priority outcome can be used only after the higher-priority outcome has been observed to be tied; if censoring occurs before that tie status is known, the pair remains unresolved even when a lower-priority outcome would have favored one subject.

Existing censoring-adjusted win-statistic methods have made important progress. Inverse-probability-of-censoring weighting can recover restricted win probabilities under independent or covariate-dependent censoring \citep{dong2020inverse,dong2021adjusting,cui2025ipcw,cao2026generalized}. Recent methods for missing or partially observed hierarchical outcomes further show that unresolved comparisons can still contribute information through conditional probabilities or augmented weighting \citep{liu2026estimation,liu2026estimationtwo,li2026probabilistic,fang2026improving}. These developments are primarily framed around fixed-horizon win summaries or endpoint-level missingness. In longitudinal follow-up, however, the scientific question may not be limited to who wins at a prespecified horizon. It may also concern how the pairwise comparison behaves over time: whether the pair is still unresolved, whether one subject is currently favored under the priority rule, and how long that status persists before later outcomes update the comparison. For a target that accumulates these time-specific pairwise statuses over follow-up, censoring removes not just a final outcome but the future portion of the pairwise comparison that would otherwise continue to contribute information. Standard inverse weighting identifies the complete-data score under the usual assumptions, but it does not use the pair history observed before censoring, which can inform that unobserved future portion.

Another line of work extends win statistics to regression. Existing work includes proportional win-fractions regression and stratified extensions \citep{mao2021class,wang2022stratified}, as well as win-odds and generalized win-regression models \citep{song2023winodds,wang2026generalized,cao2026generalized}. Recent work on noncollapsibility and standardization further clarifies that covariate adjustment can change the interpretation of win-statistic summaries \citep{dong2026wincollapsibility}. These developments provide a natural starting point for win-regression analysis using electronic health record data, though observational data also require adjustment for baseline confounding.

We propose a win-ratio regression framework centered on future-score correction. The complete-data target is defined through an integrated residual process for prioritized pairs, so that the contribution of a resolved pair reflects how long that comparison remains resolved. When the time-dependent proportional win-fractions model holds throughout follow-up, this integrated equation targets the usual proportional win-regression coefficient \citep{mao2021class}; otherwise, it defines a duration-weighted projection for the pairwise residual process. The observed-data score replaces the unobserved future portion of the pairwise score after censoring by its conditional expectation given the observed pair histories. To support the observational EHR setting, we additionally incorporate treatment weighting and augmentation, following the logic of augmented inverse probability weighting \citep{robins1994estimation,bang2005doubly}. The resulting estimating equation achieves double robustness for both treatment assignment and censoring under the stated conditions.

We develop large-sample theory for the resulting ordered-pair U-statistic with estimated nuisance functions and derive Wald-type inference for the regression coefficients. Simulations quantify the information recovered by future-score correction as censoring increases, with relative efficiency reaching 1.50 in the heaviest-censoring scenario. An application to OneFlorida electronic health record data illustrates the method in a breast cancer cohort, comparing adjuvant chemotherapy groups with a prioritized outcome that ranks death before hospitalization.

\section{Complete-Data Estimand}
\label{sec:complete-data-target}

Win statistics are built from pairwise comparisons of a prioritized composite outcome, whose components are ordered from most to least clinically important \citep{hongyue2017win}. At any given time during follow-up, the two subjects are first compared on the most important outcome. If neither subject has a more favorable result, the next outcome in that order is considered. On the first outcome for which one subject has a more favorable result, that subject wins the comparison. If neither subject has a more favorable result on any outcome, the pair remains unresolved.

For example, when death is prioritized over hospitalization, a subject who is alive is favored over one who has died. If the death comparison is tied because both subjects are alive, their hospitalization histories are then compared. If one subject has been hospitalized by that time and the other has not, the subject who has not been hospitalized wins the comparison; if both or neither have been hospitalized, the comparison remains tied. As follow-up continues, later events can change which subject is favored. We record these changes on a discrete analysis grid, which may coincide with clinical assessment times or follow a prespecified schedule. At each grid point, the comparison uses the complete outcome histories accrued up to that time, so events between grid points enter the comparison at the next grid point. 

\subsection{Observed Data Structure}

Fix an analysis window $[0,\tau]$ together with the recorded visit times
\[
0=t_0<t_1<\cdots<t_M=\tau,
\qquad
\Delta t_\ell=t_\ell-t_{\ell-1},
\qquad
\ell=1,\ldots,M.
\]
The intervals between visits may vary due to the observational clinical study design. For subject $i$, let $A_i\in\{0,1\}$ denote treatment assignment and let $X_i\in\R^p$ denote the baseline covariate vector. Let $L_{i,\ell}$ denote the most recent longitudinal marker recorded at time $t_\ell$.

For treatment level $a\in\{0,1\}$, let
\[
T_{iv}(a),\qquad v=1,\ldots,V,\qquad C_i(a)
\]
denote the potential event times for the $V$ prioritized outcome components and the potential censoring time. The observed event times and censoring times are
\[
T_{iv}=T_{iv}(A_i),\qquad v=1,\ldots,V,\qquad
C_i=C_i(A_i).
\]
Baseline censoring is absent, $C_i>0$ almost surely. We record censoring on the analysis grid: a censoring time assigned to $t_\ell$ removes the subject from the risk set at the start of interval $\ell+1$, after the interval-$\ell$ contribution has been ascertained.

At the start of interval $\ell$, define the observed subject history
\begin{equation}
\label{eq:Bi}
\mathcal H_{i,\ell}
:=
\left(
\begin{array}{l}
A_i,\,
X_i,\,
L_{i,0},\ldots,L_{i,\ell-1},
\\[0.25em]
\ind\{T_{iv}\le t_r\}:v=1,\ldots,V,\,
0\le r\le \ell-1,
\\[0.25em]
\ind\{C_i\le t_r\}:0\le r\le \ell-1
\end{array}
\right).
\end{equation}

\subsection{Complete-Data Score}

Consider an ordered pair $(i,j)$ and write
\[
\Delta X_{ij}=X_i-X_j,
\qquad
\widetilde X_{ij}=(1,\Delta X_{ij}^{\top})^\top.
\]
Write the corresponding coefficient vector as
$\beta=(\alpha,\gamma^\top)^\top\in\mathbb R^{p+1}$, so that
$\beta^\top\widetilde X_{ij}=\alpha+\gamma^\top\Delta X_{ij}$.
For ordered treated-versus-control pairs, with the treated subject listed first, the intercept coefficient $\alpha$ summarizes the overall tendency of the treated subject to be favored by the prioritized comparison; $\gamma$ describes how this tendency varies with baseline covariate differences between the two subjects.
At each time $t$, the hierarchical comparison proceeds through the ordered components: the pair is first compared on the most important component and moves to the next component only if that comparison is tied. Let
\[
W_{ij}^{\circ}(t)\in\{0,1\},
\qquad
R_{ij}^{\circ}(t)\in\{0,1\},
\qquad
t\in[0,\tau],
\]
denote the complete-data win and determinacy processes for the ordered pair. At time $t$,
$R_{ij}^{\circ}(t)$ indicates whether the outcome histories accumulated
through $t$ identify a winner under the priority rule, and
$W_{ij}^{\circ}(t)$ indicates whether the identified winner is subject $i$.
When no winner is identified, $W_{ij}^{\circ}(t)$ is set to zero, so that $W_{ij}^{\circ}(t)+W_{ji}^{\circ}(t)=R_{ij}^{\circ}(t)$. In intermittent clinical follow-up, these processes are piecewise constant between recorded visits.
\[
W_{ij}^{\circ}(t)=W_{ij}^{\circ}(t_{\ell-1}),
\qquad
R_{ij}^{\circ}(t)=R_{ij}^{\circ}(t_{\ell-1}),
\qquad
t\in[t_{\ell-1},t_\ell),
\]
with the terminal value at $t=\tau$ defined by right continuity.
A superscript $\circ$ denotes the complete-data version of a quantity, namely
the value that would be available in the absence of censoring. At any time $t$
for which the priority rule identifies a winner, the pair contributes a binary
comparison: the winner is either subject $i$ or subject $j$. We model the
chance that the winner is subject $i$ by a logistic function of the baseline
covariate contrast,
\[
\expit\!\bigl\{\beta^\top\widetilde X_{ij}\bigr\}.
\]
The coefficient vector $\beta$ is constant throughout follow-up, so that
baseline covariate differences have a common interpretation across the evolving
prioritized comparisons; this is also the proportional win-fractions structure underlying \citep{mao2021class,wang2022stratified}. Equivalently,
\[
\Prb\{W_{ij}^{\circ}(t)=1
\mid R_{ij}^{\circ}(t)=1, X_i, X_j\}
=
\expit\!\bigl\{\beta^\top\widetilde X_{ij}\bigr\}.
\]
This leads to the complete-data residual process
\begin{equation}
\label{eq:complete-residual}
M_{ij}^{\circ}(t;\beta)
:=
W_{ij}^{\circ}(t)
-
R_{ij}^{\circ}(t)\expit\!\bigl\{\beta^\top\widetilde X_{ij}\bigr\}.
\end{equation}
At interval starts, write $M_{ij}^{\circ}(t_{\ell-1};\beta)$ for the corresponding residual value.

Let $e_0(x)=\Prb(A_i=1\mid X_i=x)$ denote the propensity score. For an ordered treated-versus-control pair, define the baseline weight
\begin{equation}
\label{eq:omega}
\omega_{ij}^{1,0}
:=
\frac{A_i(1-A_j)}{e_0(X_i)\{1-e_0(X_j)\}}.
\end{equation}

The complete-data score for the ordered pair is
\begin{equation}
\label{eq:full-score}
\varphi_{ij}^{\circ}(\beta)
:=
\omega_{ij}^{1,0}
\widetilde X_{ij}
\int_0^\tau
M_{ij}^{\circ}(t;\beta)\,dt
=
\omega_{ij}^{1,0}
\sum_{\ell=1}^{M}
\Delta t_\ell\,\widetilde X_{ij}\,
M_{ij}^{\circ}(t_{\ell-1};\beta).
\end{equation}
\Cref{eq:full-score} integrates the pairwise residual process over follow-up and weights each ordered treated-versus-control pair by the standard inverse-probability weight $\omega_{ij}^{1,0}$. The complete-data pair process is piecewise constant between recorded visits and the score can be written as a sum over visit intervals, with each residual contribution weighted by the interval length. Earlier resolution therefore receives greater weight, since the pair remains in the resolved state longer.

\begin{definition}[Time-Constant Logit Estimand]
\label{def:target}
Let subscripts $1$ and $2$ index a generic ordered pair of subjects independently sampled from the target population. The time-constant logit target $\beta_0\in\R^{p+1}$ satisfies
\begin{equation}
\label{eq:target}
\E\!\left\{\varphi_{12}^{\circ}(\beta_0)\right\}=0.
\end{equation}
\end{definition}

Thus $\beta_0$ balances the complete-data residual process for prioritized pairs within a time-constant logit class. The logit
contrast is common across the follow-up window, whereas the prioritized win
and determinacy processes are accumulated over time. If the proportional win-fractions assumption holds at every time point, the integrated equation has the same solution as in \citep{mao2021class,wang2022stratified}; otherwise it can be read as an overall summary of the pairwise win advantage accumulated across the follow-up window. \Cref{app:target-projection-reference} gives conditions under which
\cref{eq:target} has a unique root and under which the root coincides with the
coefficient obtained from proportional win-fractions regression.

\section{Observed-Data Scores}
\label{sec:observed-data-representation}
Definition~\ref{def:target} describes the coefficient that would be obtained
from complete follow-up of each ordered pair. With right censoring, some future
interval contributions to $\varphi_{ij}^{\circ}(\beta)$ are not observed. Under
sequentially independent censoring, these contributions can be recovered in
expectation from the observed pair history. We use this future-score correction,
together with augmented inverse-probability weighting for baseline confounding,
to construct an observed-data score with the same population root $\beta_0$.
\subsection{Censoring-Corrected Score}
\label{sec:full-correction}

For an ordered pair $(i,j)$ let
$Y_{ij}(t_\ell)=Y_i(t_\ell)Y_j(t_\ell)$, with
$Y_i(t_\ell)=\ind\{C_i>t_\ell\}$ under the interval convention in
Section~\ref{sec:complete-data-target}. Define the subject-level discrete
censoring hazard
\[
  \lambda_i^C(t_\ell)
  =
  \Pr\{C_i\in(t_{\ell-1},t_\ell]\mid C_i>t_{\ell-1},
  \mathcal H_{i,\ell}^{+}\},
\]
where $\mathcal H_{i,\ell}^{+}$ is the subject history through interval
$\ell$, after the interval-$\ell$ contribution has been ascertained and
immediately before censoring at $t_\ell$. Set $G_{ij,0}=1$ and write the pair-level censoring survival probability through
$t_\ell$ as
\[
  G_{ij,\ell}
  =
  \prod_{r=1}^{\ell}
  \{1-\lambda_i^C(t_r)\}\{1-\lambda_j^C(t_r)\},
\]
the probability that both subjects remain uncensored through $t_\ell$.
The subject-level hazards induce the pair-level censoring hazard
\[
  \lambda_{ij,\ell}^C
  =
  1-\{1-\lambda_i^C(t_\ell)\}\{1-\lambda_j^C(t_\ell)\}.
\]
Let
\[
  \Delta N_{ij,\ell}^{C}
  =
  Y_{ij}(t_{\ell-1})
  \ind\{Y_{ij}(t_\ell)=0\}
\]
be the censoring jump for the pair in interval $\ell$. The corresponding pairwise censoring martingale increment is
\[
  dM_{ij,\ell}^C(G)
  =
  \Delta N_{ij,\ell}^{C}
  -
  Y_{ij}(t_{\ell-1})\lambda_{ij,\ell}^C,
\]
which is zero after the pair has left the uncensored risk set.

Write the contribution from interval $\ell$ as
\[
S_{ij,\ell}^{\circ}(\beta)
=
\Delta t_\ell\,\widetilde X_{ij}\,
M_{ij}^{\circ}(t_{\ell-1};\beta),
\]
where $M_{ij}^{\circ}$ is the complete-data residual defined in
Section~\ref{sec:complete-data-target}, so that
\[
S_{ij}^{\circ}(\beta)
=
\sum_{\ell=1}^{M} S_{ij,\ell}^{\circ}(\beta)
\]
is the complete-data contribution of the pair across follow-up. The contribution
remaining after interval $\ell$ is
\[
\Psi_{ij,\ell}(\beta)
=
\sum_{r>\ell}S_{ij,r}^{\circ}(\beta).
\]
The future-score projection is the predictable projection
\[
  \Phi_{ij,\ell}(\beta)
  =
  E\{\Psi_{ij,\ell}(\beta)\mid \mathcal H_{ij,\ell}^{+}\},
\]
where $\mathcal H_{ij,\ell}^{+}=\sigma(\mathcal H_{i,\ell}^{+},\mathcal H_{j,\ell}^{+})$ denotes the pair history after the interval-$\ell$ outcomes have been observed. If censoring is recorded at the end of interval $\ell$, the pair still contributes the interval-$\ell$ score, but is no longer observed for later intervals. Thus the inverse-weighted interval score $S_{ij,\ell}^{\circ}(\beta)$ uses $G_{ij,\ell-1}$, the probability of remaining uncensored through $t_{\ell-1}$, whereas the correction $\Phi_{ij,\ell}(\beta)$ uses $G_{ij,\ell}$ and predicts only the remaining score from intervals $r>\ell$. The inverse-censoring-weighted score is
\[
  S_{ij}^{\mathrm{IPW}}(\beta;G)
  =
  \sum_{\ell=1}^{M}
  \frac{Y_{ij}(t_{\ell-1})}{G_{ij,\ell-1}}
  S_{ij,\ell}^{\circ}(\beta).
\]
The censoring-corrected score for the pair is
\begin{equation}
\label{eq:full-censor-score}
  S_{ij}^{\mathrm{FC}}(\beta;G,\Phi)
  =
  S_{ij}^{\mathrm{IPW}}(\beta;G)
  +
  \sum_{\ell=1}^{M-1}
  \frac{Y_{ij}(t_{\ell-1})}{G_{ij,\ell}}
  \Phi_{ij,\ell}(\beta)\,dM_{ij,\ell}^{C}(G).
\end{equation}
The first term is standard pairwise IPCW. The second follows from the
inverse-censoring telescoping identity: when a censoring jump removes a pair
from future risk, the predictable projection replaces the remaining
complete-data score that would otherwise be lost.
\begin{theorem}[Censoring Double Robustness]
\label{thm:future-score-dr}
Suppose Assumptions~\ref{ass:consistency}--\ref{ass:second-moment} hold. Then the score in \eqref{eq:full-censor-score} satisfies
\[
  E\{S_{12}^{\mathrm{FC}}(\beta;G_0,\Phi)\}
  =
  E\{S_{12}^{\circ}(\beta)\}
\]
for every square-integrable predictable function $\Phi$. If
$\Phi=\Phi_0$, the true future-score projection, then
\[
  E\{S_{12}^{\mathrm{FC}}(\beta;G,\Phi_0)\}
  =
  E\{S_{12}^{\circ}(\beta)\}
\]
for every predictable censoring survival $G$ bounded away from zero. Thus
the censoring correction is consistent if either the censoring survival is correct
or the future-score projection is correct.
\end{theorem}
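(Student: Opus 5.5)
The plan is to treat the censoring structure as a discrete-time filtration and write the augmentation as a sum of martingale-difference terms. Throughout, fix $\beta$ and work with a generic pair $(1,2)$. Consistency, positivity and sequentially independent censoring (Assumptions~\ref{ass:consistency}--\ref{ass:second-moment}) give the conditional law of the censoring jump given $\mathcal H_{12,\ell}^{+}$ and $Y_{12}(t_{\ell-1})=1$: the jump $\Delta N^C_{12,\ell}$ has conditional mean $\lambda^C_{12,\ell}$ under $G_0$. They also imply that the complete-data future $\Psi_{12,\ell}$ is conditionally independent of censoring given $\mathcal H_{12,\ell}^{+}$. The baseline weight $\omega^{1,0}_{12}$ enters only as a fixed $\mathcal H_{12,0}$-measurable multiplier and plays no role here. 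Square integrability from Assumption~\ref{ass:second-moment}, together with $G$ bounded away from zero, justifies every exchange of expectation and finite sum.

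\emph{First claim ($G=G_0$, arbitrary predictable $\Phi$).} I would handle the two pieces separately.

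For the IPW piece, condition each interval term on $\mathcal H_{12,\ell-1}^{+}$ together with the complete-data interval-$\ell$ contribution. By sequential independence,
\[
E\{Y_{12}(t_{\ell-1})\mid \text{complete data}\}=G_{0,12,\ell-1}.
\]
Hence $E\{S^{\mathrm{IPW}}_{12}(\beta;G_0)\}=E\{S^\circ_{12}(\beta)\}$. Formally, this is done by iterating the one-step identity $E\{Y_{12}(t_r)\mid Y_{12}(t_{r-1})=1,\mathcal H^{+}_{12,r}\}=1-\lambda^C_{12,r}$ down $r$, with the complete-data future carried along.

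For the augmentation piece, each summand $\{Y_{12}(t_{\ell-1})/G_{12,\ell}\}\,\Phi_{12,\ell}$ is $\mathcal H^{+}_{12,\ell}$-measurable because $\Phi$ is predictable. Meanwhile $E\{dM^C_{12,\ell}(G_0)\mid\mathcal H^{+}_{12,\ell}\}=0$. So every summand has mean zero, whatever $\Phi$ is.

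\emph{Second claim ($\Phi=\Phi_0$, arbitrary $G$).} This is the main obstacle, since neither piece is unbiased on its own. I would use the telescoping identity the paper alludes to. Write $\lambda_\ell$ for the pair hazard implied by $G$, so $G_{12,\ell}=G_{12,\ell-1}(1-\lambda_\ell)$ and $Y_{12}(t_\ell)=Y_{12}(t_{\ell-1})(1-\Delta N^C_{12,\ell})$. Then
\[
\frac{Y_{12}(t_{\ell-1})}{G_{12,\ell}}\,dM^C_{12,\ell}
=\frac{Y_{12}(t_{\ell-1})}{G_{12,\ell-1}}-\frac{Y_{12}(t_\ell)}{G_{12,\ell}}.
\]
The check is the algebra $\{Y(1-\lambda)-Y(1-\Delta N)\}/\{G_{\ell-1}(1-\lambda)\}=Y(\Delta N-\lambda)/G_\ell$, which is valid for any $G$. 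Using the decomposition $\Phi_{0,12,\ell}=S^\circ_{12,\ell+1}+\Phi_{0,12,\ell+1}-\epsilon_{\ell+1}$, I would regroup the score by summation by parts. Here the innovation is $\epsilon_{\ell+1}=\Psi_{12,\ell}-\Phi_{0,12,\ell}$, and the regrouping uses $\Psi_{12,\ell}=S^\circ_{12,\ell+1}+\Psi_{12,\ell+1}$. The aim is to rewrite $S^{\mathrm{FC}}_{12}(\beta;G,\Phi_0)$ as
\[
S^\circ_{12}(\beta)+\sum_{\ell}\Bigl(\frac{Y_{12}(t_{\ell-1})}{G_{12,\ell-1}}-1\Bigr)\times\bigl(\text{complete-data innovation at step }\ell\bigr),
\]
where the innovations are of the form $\Psi_{12,\ell-1}-\Phi_{0,12,\ell-1}$, or equivalently $S^\circ_{12,\ell}+\Phi_{0,12,\ell}-\Phi_{0,12,\ell-1}$. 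The starting term is $\Phi_{0,12,0}$, which is known at baseline since there is no baseline censoring.

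Each innovation has conditional mean zero given $\mathcal H^{+}_{12,\ell-1}$ and $Y_{12}(t_{\ell-1})$. This follows from sequential independence, which makes $\Phi_0$ the same whether or not one conditions on remaining uncensored. The weight factor is predictable, so each term has mean zero and the claim follows.

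I expect the bookkeeping of the index shift to take the most care. Specifically, one must reconcile $G_{\ell-1}$ on $S^\circ_\ell$ with $G_\ell$ on $\Phi_\ell$, handle the terminal term at $\ell=M$ (where $\Phi_{0,12,M}=0$), and confirm that the extra factor $\Delta N^C$ on censored pairs cancels exactly. If the direct regrouping stalls, I would instead show that the difference $S^{\mathrm{FC}}(G,\Phi_0)-S^{\mathrm{FC}}(G_0,\Phi_0)$ is a sum of terms of the form (predictable)$\times(\Psi_{12,\ell}-\Phi_{0,12,\ell})\times Y_{12}(t_\ell)$, and then apply the first claim.
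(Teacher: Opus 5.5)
\textbf{First claim.} You argue this exactly as the paper does. Each augmentation summand is an $\mathcal H^{+}_{12,\ell}$-measurable multiplier times $dM^{C}_{12,\ell}(G_0)$, and the inverse-weighted part is the usual IPCW identity.

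\textbf{Second claim: the innovation is misdefined.} Your pathwise route differs from the paper's and works, but one piece of bookkeeping is wrong. Put $W_\ell:=Y_{12}(t_\ell)/G_{12,\ell}$, so $W_0=1$ and your telescoping identity reads $Y_{12}(t_{\ell-1})G_{12,\ell}^{-1}\,dM^{C}_{12,\ell}(G)=W_{\ell-1}-W_\ell$.
\begin{itemize}
\item The choice $\epsilon_{\ell+1}=\Psi_{12,\ell}-\Phi_{0,12,\ell}$ does not satisfy $\Phi_{0,12,\ell}=S^{\circ}_{12,\ell+1}+\Phi_{0,12,\ell+1}-\epsilon_{\ell+1}$.
\item $\Psi_{12,\ell-1}-\Phi_{0,12,\ell-1}$ is not equivalent to $S^{\circ}_{12,\ell}+\Phi_{0,12,\ell}-\Phi_{0,12,\ell-1}$; they differ by $\Psi_{12,\ell}-\Phi_{0,12,\ell}$.
\end{itemize}
Against the coefficients $W_{\ell-1}-1$, only the Doob increment $D_\ell:=S^{\circ}_{12,\ell}+\Phi_{12,\ell}-\Phi_{12,\ell-1}$ (with $\Phi_{12,M}:=0$) gives a true identity. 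For $M=3$ the other choice is off by $(W_1-1)(S^{\circ}_{12,3}-\Phi_{0,12,2})$. With $D_\ell$, summation by parts gives, pathwise for every $G$ and every $\Phi$,
\[
S_{12}^{\mathrm{FC}}(\beta;G,\Phi)-S_{12}^{\circ}(\beta)=\sum_{\ell=2}^{M}(W_{\ell-1}-1)\,D_\ell .
\]
For $\Phi=\Phi_0$, write $E\{(W_{\ell-1}-1)D_\ell\}=E(W_{\ell-1}D_\ell)-E(D_\ell)$. The second term vanishes by the tower property. The first vanishes because $W_{\ell-1}$ is a function of $(\mathcal H^{+}_{12,\ell-1},Y_{12}(t_{\ell-1}))$ and, by sequential independence at $t_{\ell-1}$, $E\{D_\ell\mid\mathcal H^{+}_{12,\ell-1},Y_{12}(t_{\ell-1})=1\}=0$.

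The innovations $\Psi_{12,\ell}-\Phi_{12,\ell}$ are correct only against the weight increments:
\[
S_{12}^{\mathrm{FC}}(\beta;G,\Phi)-S_{12}^{\circ}(\beta)=-\sum_{\ell=1}^{M-1}\frac{Y_{12}(t_{\ell-1})}{G_{12,\ell}}\,dM^{C}_{12,\ell}(G)\,\bigl(\Psi_{12,\ell}-\Phi_{12,\ell}\bigr).
\]
This is what your fallback is reaching for; note its terms carry $Y_{12}(t_{\ell-1})$ as well as $Y_{12}(t_\ell)$.

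\textbf{Comparison with the paper.} The paper writes neither identity. It proves the one-interval conditional identity
\[
E\!\left\{\frac{Y_{12}(t_\ell)}{G_{12,\ell}}\Psi_{12,\ell}+\frac{Y_{12}(t_{\ell-1})}{G_{12,\ell}}\Phi_{0,12,\ell}\,dM^{C}_{12,\ell}(G)\,\middle|\,\mathcal H^{+}_{12,\ell}\right\}=\frac{Y_{12}(t_{\ell-1})}{G_{12,\ell-1}}\Phi_{0,12,\ell}.
\]
This is your telescoping step plus the conditional independence, given $\mathcal H^{+}_{12,\ell}$, of the censoring jump at $t_\ell$ and $\Psi_{12,\ell}-\Phi_{0,12,\ell}$. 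The paper then chains expectations backward from $\ell=M-1$ using $\Psi_{12,\ell-1}=S^{\circ}_{12,\ell}+\Psi_{12,\ell}$. Unrolled, that recursion is exactly the second display with $\Phi=\Phi_0$.

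\begin{itemize}
\item \emph{What the paper's route buys:} it is shorter and never needs the summation-by-parts bookkeeping you flagged as delicate. It does, however, handle the two claims by separate arguments.
\item \emph{What a pathwise identity buys:} conditioning the second display on $\mathcal H^{+}_{12,\ell}$ gives the exact bias
\[
-\sum_{\ell=1}^{M-1}E\{Y_{12}(t_{\ell-1})G_{12,\ell}^{-1}(\lambda_{0,\ell}-\lambda_\ell)(\Phi_{0,12,\ell}-\Phi_{12,\ell})\},
\]
with $\lambda_\ell=1-G_{12,\ell}/G_{12,\ell-1}$ and $\lambda_{0,\ell}$ its true counterpart. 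This yields both claims at once, and it is precisely the product remainder the paper derives separately in proving Theorem~\ref{thm:aipw-dr}.
\end{itemize}
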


\subsection{Treatment-Augmented Score}
\label{sec:treatment-aipw}

Recall that $e_0(X)=P(A=1\mid X)$ denotes the treatment propensity. Define the expected complete-data contribution of a treated-versus-control
pair with baseline covariates $(x_1,x_0)$ as
\[
  m_0(x_1,x_0;\beta)
  =
  E\{S_{12}^{\circ}(\beta)\mid X_1=x_1,X_2=x_0,A_1=1,A_2=0\}.
\]
The regression $m_0$, which captures how this expected contribution varies with baseline covariates, is used to augment inverse-probability treatment weighting in the estimating equation. For a nuisance vector $\eta=(e,G,\Phi,m)$, define the kernel for ordered
treated-versus-control pairs
\begin{equation}
\label{eq:aipw-kernel}
  K_{ij}(\beta;\eta)
  =
  \frac{A_i(1-A_j)}{e(X_i)\{1-e(X_j)\}}
  \{S_{ij}^{\mathrm{FC}}(\beta;G,\Phi)-m(X_i,X_j;\beta)\}
  +
  m(X_i,X_j;\beta).
\end{equation}
At the true nuisance functions,
\[
  E\{K_{12}(\beta;\eta_0)\}
  =
  E\{\omega_{12}^{1,0}S_{12}^{\circ}(\beta)\}
  =
  E\{\varphi_{12}^{\circ}(\beta)\}.
\]
Thus the observed-data estimating equation shares the root $\beta_0$ with the
complete-data equation in Definition~\ref{def:target}.

\begin{theorem}[Double Robustness]
\label{thm:aipw-dr}
Suppose Assumptions~\ref{ass:consistency}--\ref{ass:treatment} hold and the nuisance functions are square-integrable, with $\Phi_{ij,\ell}$ being $\mathcal H_{ij,\ell}^{+}$-measurable and $e$,
$1-e$, and $G$ bounded away from zero. Then
\[
  E\{K_{12}(\beta_0;\eta)\}=0
\]
whenever either $e=e_0$ or $m=m_0$, and either $G=G_0$ or $\Phi=\Phi_0$. At
$\eta_0=(e_0,G_0,\Phi_0,m_0)$, the Gateaux derivative of
$E\{K_{12}(\beta_0;\eta)\}$ with respect to each nuisance component is zero.
Consequently, nuisance estimation contributes only second-order drift:
\[
  E\{K_{12}(\beta_0;\eta)-K_{12}(\beta_0;\eta_0)\}
  =
  O\!\left(\|e-e_0\|\,\|m-m_0\|
  + \|G-G_0\|\,\|\Phi-\Phi_0\|\right).
\]
All norms are $L_2$ norms induced by $P_0$.
\end{theorem}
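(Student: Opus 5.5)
The plan is to decompose the kernel around the truth and handle the treatment layer and the censoring layer separately, using Theorem~\ref{thm:future-score-dr} as the engine for the censoring layer. Write $\pi(e)=e(X_1)\{1-e(X_2)\}$, $\pi_0=\pi(e_0)$, $D=A_1(1-A_2)$, and $\mu(X_1,X_2)=E\{S_{12}^{\mathrm{FC}}(\beta_0;G,\Phi)\mid X_1,X_2,A_1=1,A_2=0\}$. Conditioning on $(X_1,X_2)$ and using that the two subjects are independent gives $E(D\mid X_1,X_2)=\pi_0$. Consistency, positivity and no unmeasured confounding (Assumptions~\ref{ass:consistency}--\ref{ass:treatment}) let me identify the treated-versus-control conditional law of the observed pair data with the potential-outcome law. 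Then
\[
E\{K_{12}(\beta_0;\eta)\}=E\!\left[\frac{\pi_0}{\pi(e)}\{\mu-m\}+m\right]
=E\{\mu\}+E\!\left[\Bigl(\frac{\pi_0}{\pi(e)}-1\Bigr)(\mu-m)\right].
\]
Write $\mu_0$ for $\mu$ evaluated at $(G_0,\Phi_0)$, so that $\mu_0=m_0$.

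The censoring step is a conditional version of Theorem~\ref{thm:future-score-dr}, holding $(X_1,X_2,A_1=1,A_2=0)$ fixed. Its argument is the martingale identity $E\{\Phi_{12,\ell}\,dM^C_{12,\ell}(G_0)\mid\mathcal H^+_{12,\ell}\}=0$ together with the telescoping of $Y/G$. That argument goes through conditionally, because the baseline quantities are contained in $\mathcal H^+_{12,\ell}$ and sequential independent censoring holds given the history. It shows that $\mu=m_0$ whenever $G=G_0$ or $\Phi=\Phi_0$. Double robustness then follows from the display:
\begin{itemize}
\item If $e=e_0$, the second term vanishes and $E\{\mu\}=E\{m_0\}$.
\item If $m=m_0$, then $\mu-m=0$.
\end{itemize}
In either case we are left with $E\{m_0\}=E\{\pi_0^{-1}D\,S^\circ_{12}(\beta_0)\}=E\{\varphi^\circ_{12}(\beta_0)\}$, which equals zero by Definition~\ref{def:target}.

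For the second-order bound, I split
\[
\mu-m=(\mu-m_0)+(m_0-m).
\]
The cross term $E[(\pi_0/\pi(e)-1)(m_0-m)]$ is bounded by Cauchy--Schwarz. Because $e$ and $1-e$ are bounded away from zero, $|\pi_0/\pi(e)-1|\lesssim|e-e_0|(X_1)+|e-e_0|(X_2)$. This gives the term $\|e-e_0\|\,\|m-m_0\|$.

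The remaining drift is $E\{\mu-m_0\}$ plus a term of the form $(\pi_0/\pi-1)(\mu-m_0)$. It requires an exact censoring remainder identity, obtained by expanding $S^{\mathrm{FC}}(G,\Phi)-S^{\mathrm{FC}}(G_0,\Phi_0)$ interval by interval. Using the telescoping $Y_{12}(t_{\ell-1})/G_{\ell-1}-Y_{12}(t_\ell)/G_\ell=-Y_{12}(t_{\ell-1})\,dM^C_\ell/G_\ell$ at $G$ and taking conditional expectations given $\mathcal H^+_{12,\ell}$, I aim to show
\[
E\{\mu-m_0\}=\sum_\ell E\!\left[Y_{12}(t_{\ell-1})\Bigl(\frac{1}{G_\ell}-\frac{1}{G_{0,\ell}}\Bigr)\frac{G_{0,\ell}}{G_{0,\ell-1}}\cdots(\Phi_\ell-\Phi_{0,\ell})\right],
\]
that is, a sum of products of the form (ratio error in $G$)$\times(\Phi-\Phi_0)$. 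Cauchy--Schwarz, with $G$ bounded below, then yields $\|G-G_0\|\,\|\Phi-\Phi_0\|$. The third-order term $(\pi_0/\pi-1)(\mu-m_0)$ is absorbed by boundedness, so that it is at most of this order.

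The Gateaux derivatives vanish as a corollary. Perturbing a single component along $\eta_0+t h$, every product remainder is $O(t^2)$, since the partner factor is exactly zero at $\eta_0$. The main obstacle is the exact censoring remainder identity in the discrete-grid setting. Several details need careful bookkeeping there:
\begin{itemize}
\item the convention that interval $\ell$ is scored with $G_{\ell-1}$ while the correction uses $G_\ell$;
\item the pair hazard being a product of the two subject hazards;
\item the stopping of the sum at $M-1$.
\end{itemize}
I would first verify the identity for $M=2$ and then argue by induction over $\ell$ using the tower property.
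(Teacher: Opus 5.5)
Your architecture matches the paper's. You use the same conditional representation $E\{K_{12}(\beta_0;\eta)\mid X_1,X_2\}=r_e(\mu_{G,\Phi}-m)+m$, with $r_e=\pi_0/\pi(e)$, and the same conditional version of Theorem~\ref{thm:future-score-dr} for double robustness. You also use the same split of the drift into $E\{r_e(\mu_{G,\Phi}-m_0)\}+E\{(r_e-1)(m_0-m)\}$, with Cauchy--Schwarz on the treatment cross term.

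You diverge only on the censoring remainder, which you flag as the main obstacle and plan to get by expanding around $(G_0,\Phi_0)$ and inducting on $M$. The paper needs no induction. The second half of Theorem~\ref{thm:future-score-dr} gives $\mu_{G,\Phi_0}=m_0$ for every admissible $G$, so $\mu_{G,\Phi}-m_0=\mu_{G,\Phi}-\mu_{G,\Phi_0}$ and the IPCW parts cancel. Conditioning the remaining augmentation difference on $\mathcal H_\ell^{+}$, where $E\{dM_\ell^C(G)\mid\mathcal H_\ell^{+}\}=Y_{\ell-1}(\lambda_{0,\ell}-\lambda_\ell)$, gives
\[
\mu_{G,\Phi}-m_0=E\Bigl[\sum_{\ell=1}^{M-1}\frac{Y_{\ell-1}}{G_\ell}(\lambda_{0,\ell}-\lambda_\ell)(\Phi_\ell-\Phi_{0,\ell})\Bigm| X_1,X_2,A_1=1,A_2=0\Bigr],
\qquad
\lambda_{0,\ell}-\lambda_\ell=\frac{G_\ell}{G_{\ell-1}}-\frac{G_{0,\ell}}{G_{0,\ell-1}} .
\]

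Your own route also closes without induction once three slips are fixed.

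\emph{The telescoping identity has the wrong sign.} The correct identity is $Y_{\ell-1}/G_{\ell-1}-Y_\ell/G_\ell=+Y_{\ell-1}\,dM_\ell^C(G)/G_\ell$. With it, Abel summation using $S_\ell=\Psi_{\ell-1}-\Psi_\ell$, $\Psi_M=0$ and $Y_{12}(t_0)=G_{12,0}=1$ gives the pathwise identity
\[
S_{12}^{\mathrm{FC}}(\beta;G,\Phi)-S_{12}^{\circ}(\beta)=-\sum_{\ell=1}^{M-1}\frac{Y_{\ell-1}}{G_\ell}\,dM_\ell^C(G)\,(\Psi_\ell-\Phi_\ell).
\]
Sequential independence turns its conditional mean into exactly the display above. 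This proves both halves of Theorem~\ref{thm:future-score-dr} and the remainder bound in one step.

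\emph{Your displayed target is not this remainder.} Already for $M=2$, where both survivals equal one at $t_0$, your factor $(G_1^{-1}-G_{0,1}^{-1})G_{0,1}$ is the negative of the correct factor $1-G_{0,1}/G_1$. For general $\ell$ it also omits the lagged term $G_{\ell-1}^{-1}-G_{0,\ell-1}^{-1}$. The rate survives, because the correct factor is still $O(|G_\ell-G_{0,\ell}|+|G_{\ell-1}-G_{0,\ell-1}|)$, but you should derive the identity rather than aim at that form.

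\emph{State the identity conditionally on $(X_1,X_2,A_1=1,A_2=0)$.} You need $E\{r_e(\mu-m_0)\}$, not only $E\{\mu-m_0\}$. This costs nothing, since $r_e$ is $\mathcal H_\ell^{+}$-measurable.

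What your pathwise version buys over the paper is that it holds for all $(G,\Phi)$ jointly. The vanishing Gateaux derivative then follows along arbitrary joint directions, because each remainder term is a product of two $O(t)$ factors. The paper instead argues coordinate-wise constancy and then appeals to linearity of the derivative.
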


\subsection{Future-Score Projection}
\label{sec:future-projection}

By definition, the future-score projection
$\Phi_{ij,\ell}(\beta)$ is the conditional mean of the remaining complete-data
score $\Psi_{ij,\ell}(\beta)$ given the observed pair history
$\mathcal H_{ij,\ell}^{+}$ immediately before censoring. Equivalently, it is
the $L_2$-best predictable predictor of the latent future score. This
characterization permits estimation of $\Phi$ either directly by
regression on pair histories or indirectly from subject-level models for
future trajectories. We use $\widehat\Phi$ to denote an estimator of
$\Phi$ and add a corresponding superscript when the construction must be
distinguished.

For subject $i$, let $\mathcal U_{i,\ell}$ denote the complete future
trajectory on $\{t_\ell,\ldots,t_M\}$, including future event indicators
and longitudinal marker values. Define the conditional distribution of
$\mathcal U_{i,\ell}$ given $\mathcal H_{i,\ell}^{+}$ as
\[
  \Law\!\left(
  \mathcal U_{i,\ell}
  \middle|
  \mathcal H_{i,\ell}^{+}
  \right).
\]

Let $\Psi_{ij,\ell}(\beta;u_1,u_0)$ denote the remaining score obtained when
the future trajectories of subjects $i$ and $j$ are set to $(u_1,u_0)$.
Because subjects are sampled independently, the two future trajectories are
conditionally independent given their respective observed histories.
Therefore,
\begin{equation}
\label{eq:Phi-integral}
  \Phi_{ij,\ell}(\beta)
  =
  \int
  \Psi_{ij,\ell}(\beta;u_1,u_0)\,
  d\Law(u_1\mid\mathcal H_{i,\ell}^{+})\,
  d\Law(u_0\mid\mathcal H_{j,\ell}^{+}).
\end{equation}

To evaluate \eqref{eq:Phi-integral}, we estimate each conditional
future-trajectory distribution using fitted one-step transition
models. Let
$\widehat Q_{a,r}(\cdot\mid \mathcal H_{i,r-1}^{+})$
denote the fitted one-step transition distribution of the future trajectory at interval $r$ under
treatment $a$ conditional on the subject history.

Applying these one-step transitions successively yields the fitted conditional distribution of
$\mathcal U_{i,\ell}$ given $\mathcal H_{i,\ell}^{+}$:
\[
  \widehat{\Law}\!\left(
  \mathcal U_{i,\ell}
  \middle|
  \mathcal H_{i,\ell}^{+}
  \right)
  =
  \prod_{r=\ell+1}^{M}
  \widehat Q_{A_i,r}(\cdot\mid \mathcal H_{i,r-1}^{+}).
\]
Starting from $\mathcal H_{i,\ell}^{+}$, the future interval variables are
generated sequentially from $\widehat Q_{A_i,r}$, with the conditioning
history at interval $r$ consisting of $\mathcal H_{i,\ell}^{+}$ and all
variables generated at earlier future intervals. The fitted
future-score projection is
\begin{equation}
\label{eq:fitted-projection}
  \widehat\Phi_{ij,\ell}^{\mathrm{fit}}(\beta)
  =
  \int
  \Psi_{ij,\ell}(\beta;u_1,u_0)\,
    d\widehat{\Law}(u_1\mid\mathcal H_{i,\ell}^{+})\,
  d\widehat{\Law}(u_0\mid\mathcal H_{j,\ell}^{+}).
\end{equation}
When the fitted future-trajectory distributions have finite supports
$\mathcal S_{1,\ell}$ and $\mathcal S_{0,\ell}$, with probabilities
$\widehat p_{a,\ell}$ induced by the forward recursion, this projection is
\begin{equation}
\label{eq:recursive-projection}
  \widehat\Phi_{ij,\ell}^{\mathrm{det}}(\beta)
  =
  \sum_{u_1\in\mathcal S_{1,\ell}}
  \sum_{u_0\in\mathcal S_{0,\ell}}
  \Psi_{ij,\ell}(\beta;u_1,u_0)\,
  \widehat p_{1,\ell}(u_1\mid \mathcal H_{i,\ell}^{+})\,
  \widehat p_{0,\ell}(u_0\mid \mathcal H_{j,\ell}^{+}).
\end{equation}
For continuous or large future state spaces, we can approximate the fitted projection using the Monte Carlo approximation based on $B$ independent draws
$(\widehat{\mathcal U}_{i,\ell}^{(b)},
  \widehat{\mathcal U}_{j,\ell}^{(b)})$,
$b=1,\ldots,B$, from
$\widehat{\Law}(\mathcal U_{i,\ell}\mid\mathcal H_{i,\ell}^{+})
\otimes
\widehat{\Law}(\mathcal U_{j,\ell}\mid\mathcal H_{j,\ell}^{+})$:
\begin{equation}
\label{eq:mc-projection}
  \widehat\Phi_{ij,\ell}^{B}(\beta)
  =
  \frac{1}{B}
  \sum_{b=1}^{B}
  \Psi_{ij,\ell}
  \!\left(
  \beta;
  \widehat{\mathcal U}_{i,\ell}^{(b)},
  \widehat{\mathcal U}_{j,\ell}^{(b)}
  \right).
\end{equation}

\section{U-Statistic Inference}
\label{sec:ustat-inference}

We estimate $\beta_0$ by solving the following U-statistic estimating
equation \citep{kowalski2008modern}:
\begin{equation}
\label{eq:ustat-estimating-equation}
  \widehat U_n(\beta)
  =
  \frac{1}{n(n-1)}
  \sum_{i\ne j}K_{ij}(\beta;\widehat\eta_{ij}),
  \qquad
  \widehat U_n(\widehat\beta)=0.
\end{equation}

Here $\widehat\eta_{ij}$ denotes the cross-fitted nuisance estimate used for
pair $(i,j)$, fitted using subjects outside the folds containing $i$ and $j$. Hence neither subject contributes to the nuisance fit entering
$K_{ij}$; see
\Cref{ass:crossfit-regularity}.

Let $O_i$ denote the observed data for subject $i$ over $[0,\tau]$. Let $P$ denote a distribution of $O_i$ in the nonparametric observed-data
model, and let $P_0$ denote the true distribution of $O_i$. Under $P$, let
$e_P$ and $G_P$ denote its treatment and censoring conditional laws. Recall
that the complete-data future-score projection is $\Phi_{ij,\ell}(\beta) = E\{\Psi_{ij,\ell}(\beta)\mid\mathcal H_{ij,\ell}^{+}\}$.
For a given observed-data law $P$, this projection can be recovered by inverse probability weighting:
\[
  \Phi_{P,ij,\ell}(\beta)
  =
  E_{P\otimes P}\!\left[
    \left.
    \sum_{r>\ell}
    \frac{
      Y_{ij}(t_{r-1})G_{P,ij,\ell-1}
    }{
      G_{P,ij,r-1}
    }
    S_{ij,r}^{\circ}(\beta)
    \,\right|\,
    \mathcal H_{ij,\ell}^{+}
  \right],
\]
where sequentially independent censoring ensures
$\Phi_{P_0,ij,\ell}(\beta)=\Phi_{ij,\ell}(\beta)$.
Define
\[
  m_P(x_1,x_0;\beta)
  =
  E_P\{S_{12}^{\mathrm{FC}}(\beta;G_P,\Phi_P)
  \mid X_1=x_1,X_2=x_0,A_1=1,A_2=0\}.
\]
Set $\eta_P(\beta)=\{e_P,G_P,\Phi_P(\beta),m_P(\beta)\}$ and for any observed-data distribution $P$ define the moment functional at fixed $\beta$ by
\begin{equation}
\label{eq:observed-moment-functional}
  \psi_\beta(P)
  =
  E_{P\otimes P}
  \left[
  K_{12}\{\beta;\eta_P(\beta)\}
  \right].
\end{equation}
Under the identifying conditions and the true nuisance functions, $\psi_{\beta_0}(P_0)=0$.

At $(P_0,\beta_0)$, define the symmetrized kernel by
\[
  \bar K(O_1,O_2)
  =
  \frac{K_{12}(\beta_0;\eta_0)+K_{21}(\beta_0;\eta_0)}{2}.
\]
Define the first Hoeffding projection by
\[
  \kappa(o)
  =
  E\{\bar K(O_1,O_2)\mid O_1=o\}.
\]
Because $E\{\bar K(O_1,O_2)\}=0$, this projection is centered. This projection governs how perturbations of the observed-data distribution propagate through the estimating equation to the target parameter.

\begin{proposition}[Efficient Influence Function]
\label{prop:canonical-gradient}
Suppose Assumptions~\ref{ass:consistency}--\ref{ass:kernel-regularity} hold.
The efficient influence function, equivalently the canonical gradient, of the
fixed-$\beta_0$ moment $\psi_{\beta_0}(P)$ at $P_0$ is
\begin{equation}
\label{eq:moment-eif}
  D_\psi^{\mathrm{eff}}(O)
  =
  2\kappa(O).
\end{equation}
If $\beta(P)$ is the locally unique solution of
$\psi_{\beta(P)}(P)=0$, then its efficient influence function is
\begin{equation}
\label{eq:beta-eif}
  D_\beta^{\mathrm{eff}}(O)
  =
  -\left\{
    \left.
    \frac{\partial}{\partial\beta^\top}
    \psi_\beta(P_0)
    \right|_{\beta=\beta_0}
  \right\}^{-1}
  D_\psi^{\mathrm{eff}}(O)
  =
  -2\left\{
    \left.
    \frac{\partial}{\partial\beta^\top}
    \psi_\beta(P_0)
    \right|_{\beta=\beta_0}
  \right\}^{-1}
  \kappa(O).
\end{equation}
\end{proposition}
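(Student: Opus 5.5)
We plan to compute the pathwise derivative of $\psi_{\beta_0}(P)$ along a regular one-dimensional submodel $\{P_\epsilon\}$ through $P_0$ with score $s(O)$, and show that it equals $E_{P_0}\{2\kappa(O)s(O)\}$. In the nonparametric model the tangent space is all of $L_2^0(P_0)$. Since $2\kappa$ is centered and square-integrable, it is then automatically the canonical gradient.

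First write $\psi_{\beta_0}(P_\epsilon)=E_{P_\epsilon\otimes P_\epsilon}[K_{12}\{\beta_0;\eta_{P_\epsilon}\}]$ and split the derivative at $\epsilon=0$ by the chain rule into two parts:
\begin{itemize}
\item[(a)] the derivative from the product measure $P_\epsilon\otimes P_\epsilon$ with the kernel held at $\eta_0$;
\item[(b)] the derivative from the nuisance path $\eta_{P_\epsilon}$ with the measure held at $P_0\otimes P_0$.
\end{itemize}
For (a), differentiating the product density gives
\[
E_{P_0\otimes P_0}[K_{12}(\beta_0;\eta_0)\{s(O_1)+s(O_2)\}].
\]
Conditioning on $O_1$ and on $O_2$ respectively, and symmetrizing (both orderings appear, since $\bar K$ averages $K_{12}$ and $K_{21}$), this equals
\[
E[\kappa_{(1)}(O)s(O)]+E[\kappa_{(2)}(O)s(O)]=2E[\kappa(O)s(O)],
\]
where $\kappa_{(1)}(o)=E\{K_{12}\mid O_1=o\}$, $\kappa_{(2)}(o)=E\{K_{12}\mid O_2=o\}$, and $\kappa=(\kappa_{(1)}+\kappa_{(2)})/2$. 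Centering holds because $E\{K_{12}(\beta_0;\eta_0)\}=\psi_{\beta_0}(P_0)=0$, so subtracting the mean does not change the inner product.

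For (b), we invoke Theorem~\ref{thm:aipw-dr}: at $\eta_0$ the Gateaux derivative of $E\{K_{12}(\beta_0;\eta)\}$ in each nuisance direction ($e$, $G$, $\Phi$, $m$) vanishes. Under the differentiability in Assumption~\ref{ass:kernel-regularity}, the path $\epsilon\mapsto\eta_{P_\epsilon}$ is differentiable in $L_2$. The second-order bound in Theorem~\ref{thm:aipw-dr} then gives
\[
E\{K_{12}(\beta_0;\eta_{P_\epsilon})-K_{12}(\beta_0;\eta_0)\}=O\bigl(\|e_\epsilon-e_0\|\,\|m_\epsilon-m_0\|+\|G_\epsilon-G_0\|\,\|\Phi_\epsilon-\Phi_0\|\bigr)=O(\epsilon^2),
\]
so term (b) is zero. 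Two points need checking here:
\begin{itemize}
\item $m_P$ and $\Phi_P$ are defined through $P$ (including the IPW representation of $\Phi_P$ and the $\beta$-dependence), so we must verify that the paths $m_{P_\epsilon}$ and $\Phi_{P_\epsilon}$ remain in the class covered by Theorem~\ref{thm:aipw-dr}: $\Phi_{P_\epsilon,ij,\ell}$ must be $\mathcal H_{ij,\ell}^{+}$-measurable, and $e_\epsilon$, $1-e_\epsilon$, and $G_\epsilon$ must stay bounded away from zero for small $\epsilon$. Bounded scores and positivity give this.
\item The dominated-convergence interchange of derivative and expectation, which uses the second-moment condition.
\end{itemize}
This establishes \eqref{eq:moment-eif}.

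For \eqref{eq:beta-eif}, we apply the implicit function theorem to $\psi_{\beta(P_\epsilon)}(P_\epsilon)=0$. Differentiating at $\epsilon=0$ gives
\[
\partial_\beta\psi_{\beta_0}(P_0)\,\dot\beta+E\{D_\psi^{\mathrm{eff}}s\}=0.
\]
Here we assume joint differentiability in $(\beta,\epsilon)$, which follows from smoothness of $\expit$ and the kernel regularity, and invertibility of the Jacobian, which holds by local uniqueness. Solving gives $\dot\beta=E\{D_\beta^{\mathrm{eff}}s\}$, so $D_\beta^{\mathrm{eff}}$ is a gradient, and it is canonical because the model is nonparametric.

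The main obstacle is step (b). Theorem~\ref{thm:aipw-dr} is stated for generic nuisances at fixed $\beta_0$, while here the nuisances are the specific functionals $\eta_P$. We must confirm that the orthogonality holds for the joint, cross-coupled perturbation of all four components, not only component-wise. We will handle this by expanding the remainder $E\{K_{12}(\beta_0;\eta)-K_{12}(\beta_0;\eta_0)\}$ into the two product terms from the theorem plus a term linear in a single nuisance, and showing the linear term vanishes by the same conditioning arguments: the $m$-term by iterated expectation given $(X_1,X_2)$, and the $\Phi$-term because $dM^C$ is a martingale increment under $G_0$.
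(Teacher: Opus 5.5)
Your proposal is correct and follows essentially the same route as the paper's proof. Both split the pathwise derivative by the chain rule into a product-measure term, which becomes $E\{2\kappa(O)s(O)\}$ after exchanging subject labels, and a nuisance-path term that vanishes by the orthogonality and product-remainder bound of Theorem~\ref{thm:aipw-dr}; then both use the nonparametric tangent space and implicit differentiation of $\psi_{\beta(P_\varepsilon)}(P_\varepsilon)=0$. Three small corrections: your worry about joint, cross-coupled perturbations is already settled, because that remainder bound is stated for arbitrary $\eta$ and not only coordinate paths; nonsingularity of $\partial\psi_\beta(P_0)/\partial\beta^\top$ at $\beta_0$ does not follow from local uniqueness of the root but is assumed directly in Assumption~\ref{ass:kernel-regularity}; and the chain rule itself is supplied by Assumption~\ref{ass:pathwise-regularity}.
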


For theory to apply to the estimated $\widehat{\beta}$, we next establish that the root of the empirical estimating equation converges to the population root $\beta_0$ under mild conditions.

\begin{proposition}[Consistency]
\label{prop:estimator-consistency}
Suppose Assumptions~\ref{ass:consistency}--\ref{ass:crossfit-regularity}
hold. Suppose the cross-fitted nuisance estimators are uniformly $L_2(P_0)$ consistent, and the fitted propensity
scores are bounded away from zero and one with probability tending to one.
Let $\widehat\beta$ satisfy
\[
  \|\widehat U_n(\widehat\beta)\|=o_p(1).
\]
Then
\[
  \widehat\beta\xrightarrow{p}\beta_0.
\]
\end{proposition}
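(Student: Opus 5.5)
The plan is the standard argmin/Z-estimator consistency argument (van der Vaart, Theorem 5.9), adapted to a cross-fitted U-statistic. Let $U_0(\beta)=E\{K_{12}(\beta;\eta_0)\}=E\{\varphi_{12}^{\circ}(\beta)\}$ denote the population moment; the identity $E\{K_{12}(\beta;\eta_0)\}=E\{\varphi_{12}^\circ(\beta)\}$ holds for every $\beta$ by \cref{thm:future-score-dr} and the AIPW calculation preceding \cref{thm:aipw-dr}. It then suffices to establish two facts:
\begin{enumerate}
\item[(i)] $\sup_{\beta\in\mathcal B}\|\widehat U_n(\beta)-U_0(\beta)\|\xrightarrow{p}0$ over a compact parameter set $\mathcal B$ containing $\beta_0$;
\item[(ii)] the root $\beta_0$ is well separated, i.e.\ $\inf_{\|\beta-\beta_0\|\ge\epsilon}\|U_0(\beta)\|>0$ for every $\epsilon>0$.
\end{enumerate}
Given these, $\|U_0(\widehat\beta)\|\le\|\widehat U_n(\widehat\beta)\|+o_p(1)=o_p(1)$, which forces $\widehat\beta\xrightarrow{p}\beta_0$.

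For (ii), I will use the uniqueness conditions referenced in \cref{app:target-projection-reference} together with the structure of the score. The map $\beta\mapsto-\partial U_0/\partial\beta^\top=E\{\omega_{12}^{1,0}\widetilde X_{12}\widetilde X_{12}^\top\,\expit'(\beta^\top\widetilde X_{12})\int_0^\tau R^\circ_{12}(t)\,dt\}$ is positive definite under the identifiability assumption (nondegenerate $\widetilde X$ on pairs with positive resolved time). Hence $U_0$ is the gradient of a strictly concave function, so $\beta_0$ is the unique root, and continuity on the compact set $\mathcal B$ gives well-separatedness. If $\mathcal B$ is not assumed compact, concavity still yields the conclusion by the usual convexity lemma, since pointwise convergence of concave-gradient equations implies uniform convergence on compacts.

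For (i), I would decompose
\[
\widehat U_n(\beta)-U_0(\beta)=\{\widehat U_n(\beta)-U_n^0(\beta)\}+\{U_n^0(\beta)-U_0(\beta)\},
\]
where $U_n^0$ uses the true $\eta_0$.
\begin{itemize}
\item The second term is a U-statistic with kernel $K_{ij}(\beta;\eta_0)$, indexed by $\beta$. The kernel is Lipschitz in $\beta$ with square-integrable envelope: $\expit$ is 1-Lipschitz, the weights are bounded by positivity and by $G_0$ bounded below, and the second moments hold by \cref{ass:second-moment}. A uniform law of large numbers for U-processes (e.g.\ Arcones--Giné), or pointwise SLLN for U-statistics plus stochastic equicontinuity from the Lipschitz bound, then gives uniform convergence.
\item The first term is the main obstacle, because $\widehat\eta_{ij}$ is estimated and varies across pairs. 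Cross-fitting handles it. Conditional on the nuisance training folds, each $K_{ij}(\beta;\widehat\eta_{ij})-K_{ij}(\beta;\eta_0)$ is a U-statistic kernel on independent held-out subjects. Its conditional mean is bounded by Cauchy--Schwarz and the boundedness of $1/e$, $1/(1-e)$ and $1/G$ with probability tending to one, giving a constant times $\|\widehat e-e_0\|+\|\widehat G-G_0\|+\|\widehat\Phi-\Phi_0\|+\|\widehat m-m_0\|\to0$. First-order consistency suffices here; the product-rate structure of \cref{thm:aipw-dr} is not needed for this step. Its conditional variance is $O\{n^{-1}\times(\text{same }L_2\text{ errors})^2\}$, which vanishes in probability.
\end{itemize}
Uniformity in $\beta$ follows because $\beta$ enters $\widehat\Phi$ and $\widehat m$ through $\Psi_{ij,\ell}(\beta;\cdot)$ and $m(\cdot;\beta)$, which are Lipschitz in $\beta$ with integrable envelopes. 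A finite $\delta$-net of $\mathcal B$ plus the Lipschitz bound therefore upgrades pointwise convergence to uniform convergence, using the stated uniform $L_2(P_0)$ consistency over $\beta$.
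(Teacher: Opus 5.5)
Your proposal is correct and follows essentially the paper's argument. It combines uniform convergence of the cross-fitted U-statistic to $\psi_\beta(P_0)=E\{\varphi_{12}^{\circ}(\beta)\}$ on the compact $\Theta$ (pointwise law of large numbers given independent training folds, made uniform by the Lipschitz envelope) with a well-separated unique root from the negative-definite Jacobian, where the paper derives your ``identifiability assumption'' from the nonsingularity in Assumption~\ref{ass:kernel-regularity}, since $E[\omega_{12}^{1,0}\int_0^\tau R_{12}^{\circ}(t)\,p_\beta(1-p_\beta)(\rho^\top\widetilde X_{12})^2\,dt]$ with $p_\beta=\expit\{\beta^\top\widetilde X_{12}\}$ depends on $\beta$ only through the strictly positive factor $p_\beta(1-p_\beta)$; your non-compact aside would not work as stated, because the estimated AIPW-FC equation need not be the gradient of a concave function at finite $n$ (the scalar weight on $\widetilde X_{ij}\expit\{\beta^\top\widetilde X_{ij}\}$ can be negative through the factor $1-w_{ij}(\widehat e)$ on $\widehat m$ and the increments $dM_{ij,\ell}^{C}$), but since $\Theta$ is assumed compact the main argument is unaffected.
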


We can then establish the following result which characterizes the asymptotic properties of $\widehat\beta$.

\begin{theorem}[Asymptotic Linearity and Normality]
\label{thm:asymptotic-linearity}
Suppose the conditions of \Cref{prop:estimator-consistency} hold.
Further assume that
\[
  \|\widehat U_n(\widehat\beta)\|=o_p(n^{-1/2}),
\]
and that
\[
  \|\widehat e-e_0\|\,\|\widehat m-m_0\|=o_p(n^{-1/2}),
  \qquad
  \|\widehat G-G_0\|\,\|\widehat\Phi-\Phi_0\|=o_p(n^{-1/2}).
\]
Then
\[
  \sqrt n(\widehat\beta-\beta_0)
  =
  \frac{1}{\sqrt n}\sum_{i=1}^{n}
  D_\beta^{\mathrm{eff}}(O_i)+o_p(1),
\]
and
\[
  \sqrt n(\widehat\beta-\beta_0)
  \rightsquigarrow
  N\!\left(0,\,
  4\left\{
    \left.
    \frac{\partial}{\partial\beta^\top}
    \psi_\beta(P_0)
    \right|_{\beta=\beta_0}
  \right\}^{-1}
  \mathrm{Var}\{\kappa(O)\}
  \left\{
    \left.
    \frac{\partial}{\partial\beta^\top}
    \psi_\beta(P_0)
    \right|_{\beta=\beta_0}
  \right\}^{-\top}\right).
\]
\end{theorem}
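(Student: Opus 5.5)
The argument is the usual Z-estimator expansion, applied to a cross-fitted U-statistic. Write $U_0(\beta)=E\{K_{12}(\beta;\eta_0)\}=\psi_\beta(P_0)$ and $J_0=\partial U_0(\beta)/\partial\beta^\top|_{\beta_0}$, assumed invertible by \Cref{ass:kernel-regularity}. Around $\beta_0$ I will decompose
\[
\widehat U_n(\beta)=U_n(\beta;\eta_0)+\{\widehat U_n(\beta)-U_n(\beta;\eta_0)\},
\qquad
U_n(\beta;\eta_0)=\frac{1}{n(n-1)}\sum_{i\ne j}K_{ij}(\beta;\eta_0).
\]
The proof then has three parts: (i) a Hoeffding decomposition of the oracle U-statistic, (ii) a demonstration that replacing $\eta_0$ by the cross-fitted $\widehat\eta_{ij}$ changes $\widehat U_n(\beta_0)$ only by $o_p(n^{-1/2})$, and (iii) a stochastic-equicontinuity and Taylor argument in $\beta$. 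Consistency from \Cref{prop:estimator-consistency} lets me work in a shrinking neighbourhood of $\beta_0$.

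\textbf{Step (i).} At $\beta_0$, $U_n(\beta_0;\eta_0)$ is a U-statistic with a square-integrable kernel (\Cref{ass:second-moment}) and mean zero (Theorem~\ref{thm:aipw-dr}). It is unchanged if $K$ is replaced by its symmetrization $\bar K$. Hoeffding's decomposition gives
\[
U_n(\beta_0;\eta_0)=\frac{2}{n}\sum_{i=1}^n\kappa(O_i)+R_n,
\]
where $R_n$ is a degenerate U-statistic with $E R_n^2=O(n^{-2})$, so $R_n=o_p(n^{-1/2})$.

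\textbf{Step (ii): nuisance remainder, the main obstacle.} I split
\[
\widehat U_n(\beta_0)-U_n(\beta_0;\eta_0)
=\frac{1}{n(n-1)}\sum_{i\ne j}\bigl[\Delta_{ij}-E\{\Delta_{ij}\mid\widehat\eta\}\bigr]
+\frac{1}{n(n-1)}\sum_{i\ne j}E\{\Delta_{ij}\mid\widehat\eta\},
\]
where $\Delta_{ij}=K_{ij}(\beta_0;\widehat\eta_{ij})-K_{ij}(\beta_0;\eta_0)$.

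For the drift term, $\widehat\eta_{ij}$ is fitted on subjects outside the folds of $i$ and $j$. Conditional on the training data, $E\{\Delta_{ij}\mid\widehat\eta\}$ is therefore exactly the bias functional bounded in Theorem~\ref{thm:aipw-dr}. This gives
\[
O_p\bigl(\|\widehat e-e_0\|\|\widehat m-m_0\|+\|\widehat G-G_0\|\|\widehat\Phi-\Phi_0\|\bigr)=o_p(n^{-1/2})
\]
by the rate assumptions. The positivity bounds on $\widehat e$ and $\widehat G$ make the implied constants uniform.

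For the centered term, I work conditionally on the out-of-fold fits. The within-fold-pair sums are then centered U-statistics, whose variance is $O(n^{-1})\,E\|\Delta_{12}\|^2$ plus $O(n^{-2})$ terms. $E\|\Delta_{12}\|^2\to0$ by $L_2$ consistency, boundedness of the weights and square integrability, so the term is $o_p(n^{-1/2})$.

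The delicate point is that pairs with $i,j$ in different folds use a nuisance fitted on the remaining folds. Conditional independence must be argued fold-pair by fold-pair using \Cref{ass:crossfit-regularity}, with a finite number of folds so the union bound is harmless. Handling the cross terms in the conditional Hoeffding decomposition is where I expect most of the work.

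\textbf{Step (iii).} Under the smoothness in \Cref{ass:kernel-regularity} ($\beta$ enters through $\expit$ and is Lipschitz, with a square-integrable envelope), the same arguments give
\[
\sup_{\|\beta-\beta_0\|\le\delta_n}\bigl\|\{\widehat U_n(\beta)-U_0(\beta)\}-\{\widehat U_n(\beta_0)-U_0(\beta_0)\}\bigr\|=o_p(n^{-1/2})
\]
for $\delta_n\to0$. This can be shown via a maximal inequality for Lipschitz-in-parameter U-processes, or by a direct Taylor argument since $\beta$ is finite-dimensional.

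Combining this with $\widehat U_n(\widehat\beta)=o_p(n^{-1/2})$, $U_0(\beta_0)=0$ and differentiability of $U_0$ yields
\[
0=\frac{2}{n}\sum_i\kappa(O_i)+J_0(\widehat\beta-\beta_0)+o_p\bigl(n^{-1/2}+\|\widehat\beta-\beta_0\|\bigr).
\]
Invertibility of $J_0$ first gives $\sqrt n$-consistency and then the linear representation, with influence function $-2J_0^{-1}\kappa(O)=D_\beta^{\mathrm{eff}}(O)$ by \Cref{prop:canonical-gradient}. The multivariate CLT applied to the i.i.d. centered $\kappa(O_i)$, together with Slutsky's lemma, gives the stated normal limit with covariance $4J_0^{-1}\Var\{\kappa(O)\}J_0^{-\top}$.
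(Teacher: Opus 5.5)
Your proposal is correct and follows the paper's proof essentially step for step. Both arguments use the Hoeffding projection of the oracle U-statistic at $\beta_0$. Both split the cross-fitting error fold pair by fold pair into a drift, controlled by the product bound of Theorem~\ref{thm:aipw-dr}, and a conditionally centered U-statistic fluctuation of order $o_p(n^{-1/2})$. Both then linearize in $\beta$ and apply the CLT to the $\kappa(O_i)$.

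The only difference is in step (iii). You aim for a stochastic-equicontinuity bound at rate $o_p(n^{-1/2})$ over $\delta_n$-balls; note that your "direct Taylor" option does not literally deliver that bound for arbitrary $\delta_n\to0$. The paper instead uses the mean-value Jacobian $\widetilde J_n=\int_0^1\partial\widehat U_n\{\beta_0+t(\widehat\beta-\beta_0)\}/\partial\beta^\top\,dt$ together with the uniform derivative convergence in Assumption~\ref{ass:crossfit-regularity}. That yields exactly your final display with an $o_p(\|\widehat\beta-\beta_0\|)$ remainder, which is all the argument needs.
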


These rate requirements are met, for example, when $\widehat e$ and
$\widehat G$ are root-$n$ consistent while $\widehat m$ and
$\widehat\Phi$ are consistent. The Appendix gives sufficient
conditions under which the nuisance estimators used here attain these rates.

The empirical Hoeffding projection provides a variance estimator. Define
\[
  \widehat{\bar K}_{ij}
  =
  \frac{1}{2}
  \{K_{ij}(\widehat\beta;\widehat\eta_{ij})
  +K_{ji}(\widehat\beta;\widehat\eta_{ji})\},
\]
\[
  \widehat\kappa_i
  =
  \frac{1}{n-1}\sum_{j\ne i}
  \widehat{\bar K}_{ij}
  -
  \frac{1}{n}\sum_{r=1}^{n}\frac{1}{n-1}\sum_{s\ne r}
  \widehat{\bar K}_{rs}.
\]
After constructing the cross-fitted nuisance fits, set
\[
  \widehat\Sigma_\beta
  =
  4\left\{
    \left.
    \frac{\partial}{\partial\beta^\top}
    \widehat U_n(\beta)
    \right|_{\beta=\widehat\beta}
  \right\}^{-1}
  \left\{ \frac{1}{n}\sum_{i=1}^{n}
  \widehat\kappa_i\widehat\kappa_i^\top\right\}
  \left\{
    \left.
    \frac{\partial}{\partial\beta^\top}
    \widehat U_n(\beta)
    \right|_{\beta=\widehat\beta}
  \right\}^{-\top}.
\]

\begin{corollary}[Wald Inference]
\label{cor:wald-inference}
Suppose the conditions of \Cref{thm:asymptotic-linearity} hold. Then
\[
  \widehat\Sigma_\beta
  \xrightarrow{p}
  \Sigma_\beta^{\mathrm{eff}}
  :=
  E\!\left\{
  D_\beta^{\mathrm{eff}}(O)
  D_\beta^{\mathrm{eff}}(O)^\top
  \right\}
  =
  4\left\{
    \left.
    \frac{\partial}{\partial\beta^\top}
    \psi_\beta(P_0)
    \right|_{\beta=\beta_0}
  \right\}^{-1}
  \Var\{\kappa(O)\}
  \left\{
    \left.
    \frac{\partial}{\partial\beta^\top}
    \psi_\beta(P_0)
    \right|_{\beta=\beta_0}
  \right\}^{-\top}.
\]
Consequently, $\widehat\Sigma_\beta/n$ consistently estimates the sampling
covariance of $\widehat\beta$. For component $r$ with
$\Sigma_{\beta,rr}^{\mathrm{eff}}>0$ and $0<\alpha<1$, the Wald
interval
\[
  \widehat\beta_r
  \ \pm\
  z_{1-\alpha/2}
  \sqrt{\widehat\Sigma_{\beta,rr}/n}
\]
has limiting coverage $1-\alpha$.
\end{corollary}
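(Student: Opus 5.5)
The plan is to show that $\widehat\Sigma_\beta$ converges to $\Sigma_\beta^{\mathrm{eff}}$ by treating its two factors separately. Writing $\widehat J_n=\partial\widehat U_n(\beta)/\partial\beta^\top|_{\widehat\beta}$ and $J_0=\partial\psi_\beta(P_0)/\partial\beta^\top|_{\beta_0}$, and $\widehat V_n=n^{-1}\sum_i\widehat\kappa_i\widehat\kappa_i^\top$, it suffices to prove $\widehat J_n\to_p J_0$ and $\widehat V_n\to_p\Var\{\kappa(O)\}$. Invertibility of $J_0$ is already implicit in the conditions of \Cref{thm:asymptotic-linearity}. Matrix inversion and multiplication are continuous at these limits, so the continuous mapping theorem then gives $\widehat\Sigma_\beta\to_p\Sigma_\beta^{\mathrm{eff}}$.

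The identity $\Sigma^{\mathrm{eff}}_\beta=E\{D D^\top\}=4J_0^{-1}\Var\{\kappa\}J_0^{-\top}$ follows directly from \eqref{eq:beta-eif} together with $E\kappa=0$.

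\textbf{Jacobian.} I will decompose
\[
\widehat J_n-J_0=\bigl\{\widehat J_n-J_n(\widehat\beta;\eta_0)\bigr\}+\bigl\{J_n(\widehat\beta;\eta_0)-J(\widehat\beta)\bigr\}+\bigl\{J(\widehat\beta)-J_0\bigr\},
\]
where $J_n(\beta;\eta)$ is the U-statistic of $\partial_\beta K_{ij}(\beta;\eta)$ and $J(\beta)$ is its mean.

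\emph{Third term.} This vanishes by \Cref{prop:estimator-consistency} and continuity of $J$.

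\emph{Second term.} I will bound it with a uniform law of large numbers for U-statistics over a compact neighbourhood of $\beta_0$. The $\beta$-dependence of $K_{ij}$ enters only through the smooth factors $\expit(\beta^\top\widetilde X_{ij})$, and through $m$ and $\Phi$. This gives a Lipschitz envelope under \Cref{ass:kernel-regularity}.

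\emph{First term.} Cross-fitting makes $\widehat\eta_{ij}$ independent of $(O_i,O_j)$. Conditionally on the training folds, the expected absolute difference is therefore bounded by $C(\|\widehat e-e_0\|+\|\widehat G-G_0\|+\|\widehat\Phi-\Phi_0\|+\|\widehat m-m_0\|)$. Here I use that $\widehat e$, $1-\widehat e$ and $\widehat G$ are bounded away from zero and that the remaining factors are square integrable. This bound is $o_p(1)$ by the uniform $L_2$ consistency.

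\textbf{Variance.} Let $\kappa_i^{\ast}=(n-1)^{-1}\sum_{j\ne i}\bar K(O_i,O_j)$ be the oracle version at $(\beta_0,\eta_0)$. I will show three things.

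\emph{Oracle limit.} I will show $n^{-1}\sum_i\kappa_i^\ast\kappa_i^{\ast\top}\to_p E\{\kappa\kappa^\top\}$. Write $\kappa_i^\ast=\kappa(O_i)+\xi_i$, where $\xi_i$ is an average of degenerate terms. Then $E\|\xi_i\|^2=O(1/n)$ by Hoeffding orthogonality, and the cross terms vanish by Cauchy--Schwarz.

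\emph{Centring.} The centring term is the full U-statistic $\widehat U_n$-type average. It is $o_p(1)$ because $E\bar K=0$.

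\emph{Plug-in error.} I will show $n^{-1}\sum_i\|\widehat\kappa_i-\kappa_i^\ast\|^2\to_p0$. By Jensen this is bounded by $\{n(n-1)\}^{-1}\sum_{i\ne j}\|\widehat{\bar K}_{ij}-\bar K_{ij}\|^2$. I will split the difference into a $\beta$-part and an $\eta$-part. The $\beta$-part is controlled by a mean-value expansion with the envelope from the Jacobian step times $\|\widehat\beta-\beta_0\|$. The $\eta$-part is controlled by the same cross-fitted $L_2$ bound as before, now in squared form, which requires second moments of the kernel (\Cref{ass:second-moment}). A final Cauchy--Schwarz step then transfers convergence of the mean squared difference to convergence of $\widehat V_n$.

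\textbf{Coverage and main obstacle.} Given $\widehat\Sigma_\beta\to_p\Sigma^{\mathrm{eff}}_\beta$ and the normal limit from \Cref{thm:asymptotic-linearity}, Slutsky's lemma applied to $(\widehat\beta_r-\beta_{0r})/\sqrt{\widehat\Sigma_{\beta,rr}/n}$ with $\Sigma^{\mathrm{eff}}_{\beta,rr}>0$ gives a standard normal limit, and hence coverage $1-\alpha$.

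I expect the main obstacle to be the $\eta$-part of the plug-in error. Different pairs use different fold-specific nuisance fits, so the sum of squared errors is not a single U-statistic. I would handle this by conditioning on the fold partition and the training data. Then, for each of the finitely many fold combinations, I bound the sum by a conditional expectation plus a conditional Markov remainder. Only consistency is needed here, not rates, since this is a variance estimator.
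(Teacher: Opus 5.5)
Your proposal is correct and follows essentially the same route as the paper. Like the paper, you reduce the claim to consistency of the Jacobian and of $n^{-1}\sum_i\widehat\kappa_i\widehat\kappa_i^\top$. You bound the plug-in error by the Jensen average $\{n(n-1)\}^{-1}\sum_{i\ne j}$ of squared symmetrized-kernel errors. You split that into a $\beta$-perturbation part of order $O_p(\|\widehat\beta-\beta_0\|^2)$ and an $\eta$-part at $\beta_0$, the latter handled fold-block by fold-block with conditional Markov. You finish with continuous mapping and Slutsky.

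The differences are minor. The paper obtains $\widehat J_n\to_p J_0$ directly from the uniform derivative condition in Assumption~\ref{ass:crossfit-regularity}, together with consistency of $\widehat\beta$ and continuity, rather than through your three-term decomposition. The paper only asserts the oracle projection law of large numbers, which you actually prove via $\kappa_i^\ast=\kappa(O_i)+\xi_i$. The paper additionally covers the Monte Carlo projection $\widehat\Phi^B$ through an independent Monte Carlo evaluation. Finally, the kernel second moment you invoke comes from Assumption~\ref{ass:kernel-regularity}, not Assumption~\ref{ass:second-moment}.
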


The conclusions of \Cref{thm:asymptotic-linearity,cor:wald-inference} also
hold when $\widehat\Phi^{\mathrm{det}}$ is replaced by the Monte Carlo
approximation $\widehat\Phi^B$ defined in \eqref{eq:mc-projection}, provided the resulting Monte Carlo approximation error is $o_p(n^{-1/2})$. For Wald inference, the variance estimator uses a separate Monte Carlo approximation; see \Cref{lem:mc-error} and the proof of \Cref{cor:wald-inference}.
\begin{corollary}[Semiparametric Efficiency]
\label{cor:semiparametric-efficiency}
Under the conditions of \Cref{prop:canonical-gradient} and \Cref{thm:asymptotic-linearity}, the cross-fitted estimator based on the AIPW-FC score is regular and
attains the semiparametric efficiency bound $\Sigma_\beta^{\mathrm{eff}}$. In particular, if $\widetilde\beta$ is any
regular asymptotically linear estimator of $\beta_0$ in the same observed-data
model, with influence function $D_{\widetilde\beta}$, then
\[
  \Var(D_{\widetilde\beta})
  -
  \Sigma_\beta^{\mathrm{eff}}
  =
  \Var\!\left\{
  D_{\widetilde\beta}-D_\beta^{\mathrm{eff}}
  \right\}
  \succeq 0.
\]
Equality holds if and only if
$D_{\widetilde\beta}=D_\beta^{\mathrm{eff}}$ almost surely.
\end{corollary}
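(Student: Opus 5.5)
The plan is to combine the asymptotic linearity of \Cref{thm:asymptotic-linearity} with the pathwise-differentiability statement of \Cref{prop:canonical-gradient}, and then apply the standard convolution argument for regular asymptotically linear estimators.

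\emph{Step 1: the estimator has the efficient influence function.} By \Cref{thm:asymptotic-linearity},
\[
\sqrt n(\widehat\beta-\beta_0)=n^{-1/2}\sum_{i=1}^n D_\beta^{\mathrm{eff}}(O_i)+o_p(1).
\]
By \Cref{prop:canonical-gradient}, $D_\beta^{\mathrm{eff}}$ is the canonical gradient of $\beta(P)$ in the nonparametric observed-data model. Its covariance is $\Sigma_\beta^{\mathrm{eff}}$, as in \Cref{cor:wald-inference}.

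\emph{Step 2: regularity.} I would fix a smooth one-dimensional submodel $P_t$ with score $g\in L_2^0(P_0)$. The aim is to show that
\[
\sqrt n\{\widehat\beta-\beta(P_{1/\sqrt n})\}\rightsquigarrow N(0,\Sigma_\beta^{\mathrm{eff}})
\]
under $P_{1/\sqrt n}^n$. This follows in three moves.
\begin{itemize}
\item Pathwise differentiability from \Cref{prop:canonical-gradient} gives $\sqrt n\{\beta(P_{1/\sqrt n})-\beta_0\}\to E\{D_\beta^{\mathrm{eff}}(O)g(O)\}$. This uses the implicit-function form in \eqref{eq:beta-eif} and the invertibility of the Jacobian $\partial_\beta\psi_\beta(P_0)$ from the kernel-regularity assumption.
\item Le Cam's third lemma, applied to the joint normal limit of $n^{-1/2}\sum_i D_\beta^{\mathrm{eff}}(O_i)$ and the log-likelihood ratio, shifts the mean of the linear term by the same quantity $E(D_\beta^{\mathrm{eff}}g)$.
\item The $o_p(1)$ remainder stays $o_p(1)$ under the local alternatives, by contiguity.
\end{itemize}
The two shifts cancel, so the limit is centered and free of $g$.

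\emph{Step 3: efficiency inequality.} Let $\widetilde\beta$ be any regular asymptotically linear estimator with influence function $D_{\widetilde\beta}$. Because the model is nonparametric, regularity forces $D_{\widetilde\beta}$ to be a gradient: $E\{D_{\widetilde\beta}g\}=E\{D_\beta^{\mathrm{eff}}g\}$ for every score $g$. The tangent space is all of $L_2^0(P_0)$, and both functions are mean zero and lie in it. Hence $D_{\widetilde\beta}-D_\beta^{\mathrm{eff}}$ is orthogonal to every score, and in particular to each component of $D_\beta^{\mathrm{eff}}$. (In the nonparametric model this difference must in fact vanish; the displayed identity holds in any case.) The cross term therefore drops out of the covariance expansion:
\[
\Var(D_{\widetilde\beta})=\Sigma_\beta^{\mathrm{eff}}+\Var(D_{\widetilde\beta}-D_\beta^{\mathrm{eff}}).
\]
This difference is positive semidefinite. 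It is zero if and only if $D_{\widetilde\beta}=D_\beta^{\mathrm{eff}}$ almost surely, since a mean-zero vector with zero covariance vanishes almost surely.

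\emph{Main obstacle.} The delicate point is in Step 2: checking that the $o_p(1)$ remainder from \Cref{thm:asymptotic-linearity} survives under local alternatives. Contiguity handles this once $P_t$ is differentiable in quadratic mean, because $P_{1/\sqrt n}^n$ and $P_0^n$ are then mutually contiguous. A secondary point is confirming that the observed-data model is genuinely nonparametric, so that its tangent space is all of $L_2^0(P_0)$. The causal and censoring assumptions only identify the parameter and impose no restriction on the observed law, so the gradient is unique and equals $D_\beta^{\mathrm{eff}}$.
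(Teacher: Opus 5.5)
Your proposal is correct and follows essentially the same route as the paper. Both arguments take the asymptotic linearity with influence function $D_\beta^{\mathrm{eff}}$, establish regularity through contiguity, Le Cam's third lemma and the pathwise derivative of $\beta(P)$, and then use the orthogonality of $D_{\widetilde\beta}-D_\beta^{\mathrm{eff}}$ to the tangent space to remove the cross term. Your Step 3 simply makes explicit what the paper compresses into ``the projection property of the canonical gradient'' (namely that regularity forces $D_{\widetilde\beta}$ to be a gradient), and your remark that this difference must vanish in the nonparametric model is consistent with the paper's conclusion.
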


\section{Simulation Studies}
\label{sec:simulation-studies}

We conducted simulation studies to evaluate finite-sample bias, efficiency,
and interval estimation, with particular emphasis on the information recovered
by the future-score correction. Each dataset contained 500 independent
subjects. We generated a continuous baseline biomarker
$X\sim\mathrm{Uniform}(-1,1)$ and a baseline cardiovascular disease (CVD) indicator 
$Z^{\mathrm{cvd}}\sim\mathrm{Bernoulli}(0.30)$. Treatment assignment followed
\[
  \Pr(A=1\mid X,Z^{\mathrm{cvd}})
  =
  \expit(-0.30+0.40X+1.60Z^{\mathrm{cvd}}).
\]
Follow-up was divided into eight three-month intervals. Throughout the simulation studies and data application, prioritized comparisons
used a common two-level rule: death had first priority; among pairs with the
same death status, a patient free of prior hospitalization was favored over a
patient with prior hospitalization; otherwise the pair remained unresolved until
a later visit. This rule follows the hierarchical win/loss construction of the win-ratio
statistic \citep{pocock2012win}, evaluated here over repeated visits; it is used
in both empirical illustrations. The event hazards were calibrated to yield approximately 5\%
mortality and 60\% hospitalization by 24 months. Censoring
depended on baseline covariates and prior hospitalization; the intercept of
the discrete censoring hazard was calibrated to produce approximately 30\%,
50\%, and 65\% censoring, while all other features of the data-generating
mechanism were held fixed. Thus the three scenarios represent increasing censoring intensity, with more pairwise comparisons at later visits unobserved in the higher-censoring scenarios. The complete data-generating mechanism is described in
Appendix~\ref{app:simulation-supplement}.

We compared inverse probability weighting (IPW), IPW with future-score
correction (IPW-FC), augmented IPW (AIPW),
and the AIPW estimator with future-score correction (AIPW-FC);
\Cref{app:simulation-estimators} defines the four estimators used in the simulation tables. All working
models were correctly specified for the main simulation results. We additionally examined misspecification scenarios to assess the treatment
and censoring robustness mechanisms; those results are reported in
\Cref{app:simulation-dr}.
Treatment, censoring, death, and hospitalization models were fitted with
five-fold subject-level cross-fitting. The future-score projection and baseline outcome regression were evaluated by deterministic recursion under fitted transition models on the finite state space for death and
hospitalization.

For each censoring level, we generated 500 Monte Carlo datasets. We computed
the complete-data target independently of censoring by deterministic numerical
integration over $X$, summation over the two values of $Z^{\mathrm{cvd}}$, and
numerical solution of the population score equation. We report the true value, bias,
empirical standard deviation (ESD), average estimated standard error (ASE),
coverage of nominal 95\% intervals, and relative efficiency (RE). For IPW-FC,
RE is the squared ESD of IPW divided by that of IPW-FC; for AIPW-FC, the
reference is AIPW. For all four methods, inference used 200 nonparametric bootstrap
samples at the subject level, with every nuisance model and estimating equation refitted in each sample.
Bootstrap ASE and coverage are reported for all methods; the model-based
standard error and Wald coverage from Corollary~\ref{cor:wald-inference} are
reported in addition for AIPW-FC.

\begin{landscape}
\begin{table}[!htbp]
\centering
\caption{Simulation results under 30\%, 50\%, and 65\% censoring. ESD denotes empirical
standard deviation, ASE denotes average estimated standard error, and coverage
is for nominal 95\% intervals. For AIPW-FC, ASE and coverage are shown as
model/bootstrap; for the other estimators, bootstrap values are shown. RE
compares each estimator with future-score correction with the corresponding estimator without correction.}
\label{tab:simulation-main}
\scriptsize
\setlength{\tabcolsep}{3pt}
\begin{tabular}{cllrrrrrr}
\toprule
Censoring & Component & Estimator & True & Bias & ESD
& ASE & Coverage & RE \\
\midrule
30\% & Intercept & IPW & 0.58 & 0.01 & 0.18 & 0.18 & 0.95 & -- \\
     & & AIPW & 0.58 & 0.01 & 0.18 & 0.19 & 0.95 & -- \\
     & & IPW-FC & 0.58 & 0.02 & 0.18 & 0.18 & 0.95 & 1.06 \\
     & & AIPW-FC & 0.58 & 0.01 & 0.18 & 0.18/0.18 & 0.96/0.94 & 1.07 \\
\addlinespace
     & Biomarker & IPW & -0.54 & -0.03 & 0.16 & 0.16 & 0.94 & -- \\
     & & AIPW & -0.54 & -0.03 & 0.17 & 0.17 & 0.94 & -- \\
     & & IPW-FC & -0.54 & -0.02 & 0.16 & 0.16 & 0.95 & 1.09 \\
     & & AIPW-FC & -0.54 & -0.02 & 0.16 & 0.16/0.16 & 0.95/0.94 & 1.10 \\
\addlinespace
     & CVD & IPW & -0.27 & -0.01 & 0.20 & 0.20 & 0.94 & -- \\
     & & AIPW & -0.27 & 0.00 & 0.21 & 0.21 & 0.94 & -- \\
     & & IPW-FC & -0.27 & -0.01 & 0.19 & 0.19 & 0.94 & 1.09 \\
     & & AIPW-FC & -0.27 & -0.01 & 0.20 & 0.20/0.20 & 0.95/0.94 & 1.10 \\
\midrule
50\% & Intercept & IPW & 0.58 & 0.00 & 0.20 & 0.20 & 0.93 & -- \\
     & & AIPW & 0.58 & 0.00 & 0.21 & 0.20 & 0.92 & -- \\
     & & IPW-FC & 0.58 & 0.01 & 0.18 & 0.18 & 0.93 & 1.22 \\
     & & AIPW-FC & 0.58 & 0.00 & 0.19 & 0.19/0.18 & 0.95/0.92 & 1.24 \\
\addlinespace
     & Biomarker & IPW & -0.54 & 0.00 & 0.18 & 0.18 & 0.94 & -- \\
     & & AIPW & -0.54 & 0.00 & 0.18 & 0.18 & 0.92 & -- \\
     & & IPW-FC & -0.54 & 0.00 & 0.16 & 0.16 & 0.94 & 1.22 \\
     & & AIPW-FC & -0.54 & 0.00 & 0.16 & 0.17/0.16 & 0.94/0.92 & 1.26 \\
\addlinespace
     & CVD & IPW & -0.27 & -0.02 & 0.23 & 0.23 & 0.94 & -- \\
     & & AIPW & -0.27 & -0.02 & 0.23 & 0.24 & 0.93 & -- \\
     & & IPW-FC & -0.27 & -0.01 & 0.21 & 0.20 & 0.95 & 1.18 \\
     & & AIPW-FC & -0.27 & -0.01 & 0.21 & 0.21/0.21 & 0.95/0.93 & 1.20 \\
\midrule
65\% & Intercept & IPW & 0.58 & 0.03 & 0.22 & 0.22 & 0.93 & -- \\
     & & AIPW & 0.58 & 0.03 & 0.23 & 0.23 & 0.93 & -- \\
     & & IPW-FC & 0.58 & 0.02 & 0.19 & 0.19 & 0.95 & 1.44 \\
     & & AIPW-FC & 0.58 & 0.02 & 0.19 & 0.19/0.19 & 0.95/0.94 & 1.48 \\
\addlinespace
     & Biomarker & IPW & -0.54 & -0.01 & 0.19 & 0.19 & 0.93 & -- \\
     & & AIPW & -0.54 & -0.01 & 0.20 & 0.20 & 0.93 & -- \\
     & & IPW-FC & -0.54 & -0.01 & 0.17 & 0.17 & 0.94 & 1.34 \\
     & & AIPW-FC & -0.54 & -0.01 & 0.17 & 0.17/0.17 & 0.96/0.94 & 1.38 \\
\addlinespace
     & CVD & IPW & -0.27 & -0.02 & 0.25 & 0.25 & 0.95 & -- \\
     & & AIPW & -0.27 & -0.02 & 0.26 & 0.27 & 0.94 & -- \\
     & & IPW-FC & -0.27 & -0.02 & 0.21 & 0.21 & 0.95 & 1.43 \\
     & & AIPW-FC & -0.27 & -0.01 & 0.21 & 0.22/0.22 & 0.95/0.94 & 1.50 \\
\bottomrule
\end{tabular}
\end{table}
\end{landscape}

Table~\ref{tab:simulation-main} summarizes performance across the three censoring levels. Future-score correction improved precision for both IPW and AIPW at
each censoring level, and the gain increased as censoring became heavier. At
30\% censoring, RE ranged from 1.06 to 1.09 for IPW-FC and from 1.07 to 1.10
for AIPW-FC. At 50\% censoring, the respective ranges increased to
1.18--1.22 and 1.20--1.26. Under 65\% censoring, RE reached 1.34--1.44 for
IPW-FC and 1.38--1.50 for AIPW-FC. In contrast, AIPW had ESD similar to or
slightly larger than IPW. Given the modest differences and the finite sample
size, this pattern may reflect additional variability from estimating the
baseline outcome regression, particularly because the IPW estimator was
already nearly unbiased in these scenarios.
Biases were small relative to the Monte Carlo standard deviations.
For AIPW-FC, the model-based ASE tracked the ESD closely, and model-based
coverage remained near the nominal level across the three censoring scenarios.
These results agree with the mechanism of the method: heavier censoring leaves
more lower-priority comparisons unresolved in the observed data, and the
future-score projection recovers part of their otherwise discarded
information.

\section{Application to OneFlorida Data}
\label{sec:oneflorida-application}

We applied the proposed method to OneFlorida data from patients with breast
cancer who had pre-existing cardiovascular disease and were classified as belonging to the high-risk frailty group following \citet{yang2025impact}. The analysis examined whether adjuvant chemotherapy and
baseline frailty were associated with the chance of being favored under the
prioritized comparison rule. The date of breast cancer surgery served as the
index date. We adopted a standard 90-day landmark design \citep{anderson1983analysis}: patients who were alive
and under observation 90 days after surgery were classified according to
whether adjuvant chemotherapy had been initiated during the first 90 days, and
prioritized outcomes were evaluated from day 90 onward. This defines treatment
status before outcome follow-up begins and compares treated and untreated
patients from a common post-surgery time origin. The analytic cohort included 1,262 patients: 231 initiated treatment within the
landmark window, and 1,031 formed the control group. During three-year
follow-up, 59 patients died, 662 had an inpatient hospitalization, and 723 were
censored before the end of the analysis window.

The analysis used the same two-level priority rule on a 30-day grid, with death prioritized over hospitalization. Analyses were conducted over 12, 24, and 36 complete intervals, corresponding approximately to one, two, and three years, respectively. The win-regression target included two regressors: an intercept for the treated-versus-control comparison and the standardized frailty index difference. The treatment coefficient summarizes, over follow-up, the log odds that the
treated patient is favored under the prioritized comparison rule in a treated-versus-control pair. Under the identification conditions in
\Cref{app:identification-conditions}, this treated-versus-control comparison represents the causal effect of initiating adjuvant chemotherapy
within 90 days. Age, pre-index utilization,
individual cardiovascular conditions, race and ethnicity, payer, hospitalization
history, and baseline or time-updated utilization markers entered the relevant
cross-fitted nuisance models for treatment assignment, censoring, future
transitions, and outcome-regression augmentation.

The application used the four estimators IPW, IPW-FC, AIPW, and AIPW-FC. The
future-score projection was computed by a deterministic recursion under fitted
transition models over the four states defined by death status and
hospitalization history. The transition, censoring, treatment, and outcome
models were fitted by subject-level cross-fitting. Uncertainty was quantified with 200
subject-level nonparametric bootstrap samples, with all nuisance models and
estimating equations refitted in each sample. The implementation is
described in \Cref{app:oneflorida-implementation}.

\begin{table}[!htbp]
\centering
\caption{OneFlorida data analysis at the one-, two-, and three-year horizons.
The number of patients censored before each horizon is shown in parentheses.
Entries are coefficient estimates with 200 subject-bootstrap standard errors in
parentheses. Relative efficiency (RE) is the ratio of the bootstrap variance
without future-score correction to that with future-score correction for the
treated-versus-control coefficient.}
\label{tab:oneflorida-results}
\small
\setlength{\tabcolsep}{5pt}
\begin{tabular}{llccc}
\toprule
Horizon (censored)
& Estimator
& Treatment
& Frailty
& RE \\
\midrule
1 year (274) & IPW
& $0.02\ (0.21)$
& $-0.41\ (0.12)$
& -- \\
& IPW-FC
& $-0.01\ (0.19)$
& $-0.42\ (0.11)$
& $1.26$ \\
& AIPW
& $0.04\ (0.22)$
& $-0.40\ (0.13)$
& -- \\
& AIPW-FC
& $0.00\ (0.19)$
& $-0.41\ (0.11)$
& $1.35$ \\
\midrule
2 years (519) & IPW
& $0.08\ (0.21)$
& $-0.44\ (0.11)$
& -- \\
& IPW-FC
& $0.08\ (0.18)$
& $-0.48\ (0.11)$
& $1.29$ \\
& AIPW
& $0.10\ (0.21)$
& $-0.44\ (0.13)$
& -- \\
& AIPW-FC
& $0.10\ (0.19)$
& $-0.47\ (0.12)$
& $1.31$ \\
\midrule
3 years (723) & IPW
& $0.10\ (0.21)$
& $-0.45\ (0.12)$
& -- \\
& IPW-FC
& $0.12\ (0.17)$
& $-0.49\ (0.10)$
& $1.47$ \\
& AIPW
& $0.13\ (0.21)$
& $-0.44\ (0.13)$
& -- \\
& AIPW-FC
& $0.13\ (0.18)$
& $-0.48\ (0.12)$
& $1.48$ \\
\bottomrule
\end{tabular}
\end{table}

Treatment coefficient estimates were similar across the four estimators, and all bootstrap intervals included zero. At the three-year horizon, the AIPW-FC coefficient was 0.13, corresponding to an estimated odds ratio of 1.14 for being favored under the prioritized comparison rule ($95\%$ bootstrap confidence interval: 0.84, 1.71), with no clear evidence of a treatment-associated difference. Higher frailty was associated with a lower chance of being favored under the prioritized comparison rule: the AIPW-FC frailty coefficient was $-0.48$ ($95\%$ confidence interval: $-0.69$, $-0.22$).

The main difference among the estimators was precision. For the treatment
coefficient, the relative efficiencies of IPW-FC versus IPW were 1.26, 1.29,
and 1.47 at one, two, and three years, respectively; the corresponding relative
efficiencies for AIPW-FC versus AIPW were 1.35, 1.31, and 1.48. At three years,
future-score correction reduced the bootstrap standard error from 0.21 to 0.17 for IPW and from 0.21 to 0.18 for AIPW. The similar point estimates and smaller standard errors indicate that the gain came from recovering future pairwise score contributions after censoring while preserving the target comparison. Additional analyses under reduced specifications of the nuisance models yielded
similar conclusions and are reported in \Cref{app:oneflorida-sensitivity}.

\appendix

\section{Properties of the Complete-Data Estimand}
\label{app:target-projection-reference}

\begin{proposition}[Target Uniqueness]
\label{prop:target-unique}
Let the population score be the map
$\beta\mapsto \E\{\varphi_{12}^{\circ}(\beta)\}$. Suppose it is finite
and admits dominated differentiation on compact line segments in $\beta$.
Suppose also that, for every $\beta\in\R^{p+1}$ and every nonzero
$\rho\in\R^{p+1}$,
\begin{equation}
\label{eq:target-unique-cond}
\E\!\left[
\omega_{12}^{1,0}
\int_0^\tau
R_{12}^{\circ}(t)\,
\expit\!\bigl\{\beta^\top\widetilde X_{12}\bigr\}
\left\{
1-\expit\!\bigl\{\beta^\top\widetilde X_{12}\bigr\}
\right\}
\bigl(\rho^\top\widetilde X_{12}\bigr)^2
\;dt
\right]
>0.
\end{equation}
Then the population score map $\beta\mapsto
\E\{\varphi_{12}^{\circ}(\beta)\}$ is strictly monotone, so
\cref{eq:target} has at most one solution in $\R^{p+1}$.
\end{proposition}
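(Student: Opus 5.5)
Write $U(\beta)=\E\{\varphi_{12}^{\circ}(\beta)\}$ and $p(\beta)=\expit\{\beta^\top\widetilde X_{12}\}$. The plan is to show that $U$ is the negative of the gradient of a strictly convex function. Equivalently, $U$ is a strictly monotone (decreasing) map: for any $\beta_1\neq\beta_2$,
\[
(\beta_2-\beta_1)^\top\{U(\beta_2)-U(\beta_1)\}<0,
\]
which rules out two distinct roots.

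\textbf{Step 1: the Jacobian.} Differentiate the score under the expectation and the time integral. Only the term $-R_{12}^{\circ}(t)\,p(\beta)$ of $M_{12}^{\circ}(t;\beta)$ depends on $\beta$, and $\partial p/\partial\beta^\top=p(1-p)\widetilde X_{12}^\top$. Hence
\[
\frac{\partial U(\beta)}{\partial\beta^\top}
=-\E\!\left[\omega_{12}^{1,0}\int_0^\tau R_{12}^{\circ}(t)\,p(\beta)\{1-p(\beta)\}\,\widetilde X_{12}\widetilde X_{12}^\top\,dt\right]
=:-J(\beta).
\]
The interchange of derivative and expectation is justified by the assumed dominated differentiation on compact segments.

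\textbf{Step 2: positive definiteness.} For any $\rho\neq0$, the quadratic form $\rho^\top J(\beta)\rho$ is exactly the left-hand side of \eqref{eq:target-unique-cond}, so $J(\beta)$ is positive definite for every $\beta$. The weight $\omega_{12}^{1,0}\ge0$ and the indicator $R^{\circ}\in\{0,1\}$, so the integrand is nonnegative; the condition rules out degeneracy.

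\textbf{Step 3: integrate along a segment.} Set $\rho=\beta_2-\beta_1\neq0$ and $g(s)=\rho^\top U(\beta_1+s\rho)$ for $s\in[0,1]$. By dominated differentiation on the compact segment, $g$ is differentiable with $g'(s)=-\rho^\top J(\beta_1+s\rho)\rho<0$. Therefore
\[
\rho^\top\{U(\beta_2)-U(\beta_1)\}=\int_0^1 g'(s)\,ds<0,
\]
where the integral representation holds because $g'$ exists everywhere and is Lebesgue integrable under domination; alternatively, apply the mean value theorem directly to $g$. If $U(\beta_1)=U(\beta_2)=0$, the left side is zero, a contradiction. So \cref{eq:target} has at most one solution.

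\textbf{Main obstacle.} The only delicate point is Step 3's passage from the pointwise derivative to the segment difference. The mean value theorem handles this cleanly, since $g$ is a differentiable real function of one variable on $[0,1]$: there is some $s^*$ with $g(1)-g(0)=g'(s^*)<0$. This avoids needing continuity of $J$. Everything else is direct algebra.
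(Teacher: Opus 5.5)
Your proof is correct and follows essentially the same route as the paper. Both arguments differentiate under the expectation to get the Jacobian $-J(\beta)$, use the stated condition to make it negative definite, and then apply the mean value theorem to the scalar function $s\mapsto(\beta_2-\beta_1)^\top\E\{\varphi_{12}^{\circ}(\beta_1+s(\beta_2-\beta_1))\}$ to exclude two distinct roots. Your opening remark about a strictly convex potential is never used, so it does no harm.
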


\begin{proof}
Write
\[
\psi_0(\beta):=\E\!\left\{\varphi_{12}^{\circ}(\beta)\right\}.
\]
Differentiating \cref{eq:full-score} under the expectation yields
\[
\frac{\partial}{\partial\beta^\top}\psi_0(\beta)
=
-\E\!\left[
\omega_{12}^{1,0}
\int_0^\tau
R_{12}^{\circ}(t)\,
\expit\!\bigl\{\beta^\top\widetilde X_{12}\bigr\}
\left\{
1-\expit\!\bigl\{\beta^\top\widetilde X_{12}\bigr\}
\right\}
\widetilde X_{12}\widetilde X_{12}^\top
\;dt
\right].
\]
Hence for every nonzero $\rho\in\R^{p+1}$,
\begin{align*}
\rho^\top
\frac{\partial}{\partial\beta^\top}\psi_0(\beta)
\rho
&=
-\E\!\left[
\omega_{12}^{1,0}
\int_0^\tau
R_{12}^{\circ}(t)\,
\expit\!\bigl\{\beta^\top\widetilde X_{12}\bigr\}
\left\{
1-\expit\!\bigl\{\beta^\top\widetilde X_{12}\bigr\}
\right\}
\bigl(\rho^\top\widetilde X_{12}\bigr)^2
\;dt
\right] \\
&<0.
\end{align*}
By \cref{eq:target-unique-cond}, the Jacobian of $\psi_0$ is negative definite
at every $\beta$, and $\psi_0$ is strictly monotone. For
$\beta_1\neq\beta_2$, set $d=\beta_1-\beta_2$. The mean-value theorem applied to
$t\mapsto d^\top\psi_0(\beta_2+td)$ yields $c\in(0,1)$ and
$\bar\beta=\beta_2+cd$ satisfying
\[
d^\top
\left\{
\psi_0(\beta_1)-\psi_0(\beta_2)
\right\}
=
d^\top
\frac{\partial}{\partial\beta^\top}\psi_0(\bar\beta)
d
<0.
\]
Consequently, $\psi_0(\beta_1)=\psi_0(\beta_2)$ implies
$\beta_1=\beta_2$, and \cref{eq:target} has at most one solution.
\end{proof}

\begin{proposition}[Proportional Win-Fraction Target]
\label{prop:pwf-align}
Suppose there exists $\beta^{\ast}\in\R^{p+1}$ such that the complete-data
residual integrand in \cref{eq:full-score}, evaluated at $\beta^{\ast}$, is
integrable over $[0,\tau]$. Suppose further that, for every $t\in[0,\tau]$,
\begin{equation}
\label{eq:pwf-align-cond}
\E\!\left[
\omega_{12}^{1,0}\widetilde X_{12}
\left\{
W_{12}^{\circ}(t)
-
R_{12}^{\circ}(t)\expit\!\bigl\{(\beta^{\ast})^\top\widetilde X_{12}\bigr\}
\right\}
\right]
=
0.
\end{equation}
Then
\[
\E\!\left\{\varphi_{12}^{\circ}(\beta^{\ast})\right\}=0.
\]
If, in addition, \cref{eq:target-unique-cond} holds, then $\beta_0=\beta^{\ast}$.
\end{proposition}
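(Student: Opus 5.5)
The plan is to write the population score at $\beta^{\ast}$ as an integral over time and exchange expectation and time integration. By \cref{eq:full-score},
\[
\E\{\varphi_{12}^{\circ}(\beta^{\ast})\}
=
\E\!\left[\int_0^\tau \omega_{12}^{1,0}\widetilde X_{12}\,M_{12}^{\circ}(t;\beta^{\ast})\,dt\right].
\]
The first step is to justify Fubini--Tonelli. The integrability hypothesis on the residual integrand at $\beta^{\ast}$ is read as
\[
\E\int_0^\tau \bigl\|\omega_{12}^{1,0}\widetilde X_{12}M_{12}^{\circ}(t;\beta^{\ast})\bigr\|\,dt<\infty .
\]
Joint measurability in $(t,\omega)$ is automatic: on the analysis grid the processes $W_{12}^{\circ}$ and $R_{12}^{\circ}$ are piecewise constant with finitely many pieces, so the integrand is a finite sum of products of random variables with interval indicators. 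Under this reading, Fubini applies componentwise and gives
\[
\E\{\varphi_{12}^{\circ}(\beta^{\ast})\}=\int_0^\tau \E\!\left[\omega_{12}^{1,0}\widetilde X_{12}M_{12}^{\circ}(t;\beta^{\ast})\right]dt .
\]
By \cref{eq:pwf-align-cond}, the inner expectation is zero for every $t$, so the integral vanishes. Equivalently, using the grid representation, the score is $\sum_\ell \Delta t_\ell\,\E[\omega_{12}^{1,0}\widetilde X_{12}M_{12}^{\circ}(t_{\ell-1};\beta^{\ast})]$. This is a finite sum, so linearity of expectation suffices once each term is integrable, and each term is zero by \cref{eq:pwf-align-cond} at $t=t_{\ell-1}$.

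For the second claim, $\beta^{\ast}$ is a root of \cref{eq:target}. Under \cref{eq:target-unique-cond}, Proposition~\ref{prop:target-unique} says \cref{eq:target} has at most one solution. Hence $\beta_0$, defined as a root in Definition~\ref{def:target}, must equal $\beta^{\ast}$. There is a minor caveat: Proposition~\ref{prop:target-unique} also assumes finiteness and dominated differentiation of the population score. I would note that these are the standing regularity conditions under which $\beta_0$ is defined, and import them as part of invoking that proposition.

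The only delicate point is the interchange of expectation and time integral. What we need is $\E\int\|\cdot\|\,dt<\infty$, not merely that each path is integrable in $t$. I will make the interpretation explicit. If only pathwise integrability is intended, the grid structure rescues the argument: the residual takes values in $[-1,1]$, so integrability reduces to $\E\|\omega_{12}^{1,0}\widetilde X_{12}\|<\infty$. That condition is implicit in $\E\{\varphi_{12}^{\circ}\}$ being finite.
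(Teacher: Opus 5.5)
Your proof is correct and follows the paper's argument: you interchange expectation and the time integral by Fubini, apply \cref{eq:pwf-align-cond} pointwise in $t$, and invoke \cref{prop:target-unique} for uniqueness; your finite grid-sum version and your note about importing the finiteness and differentiability hypotheses of \cref{prop:target-unique} are more careful than the paper. One small quibble: finiteness of $\E\{\varphi_{12}^{\circ}\}$ does not by itself force $\E\|\omega_{12}^{1,0}\widetilde X_{12}\|<\infty$, because the time-integrated residual can vanish, but you do not need that remark, since \cref{eq:pwf-align-cond} at each $t_{\ell-1}$ already presupposes the termwise integrability your finite-sum argument uses.
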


\begin{proof}
By \cref{eq:pwf-align-cond}, the integrand in \cref{eq:full-score} has
mean zero at $\beta=\beta^{\ast}$ for every $t\in[0,\tau]$. Integrating
that identity over $t$ and applying Fubini's theorem yields
\[
\E\!\left[
\omega_{12}^{1,0}\widetilde X_{12}
\int_0^\tau
\left\{
W_{12}^{\circ}(t)
-
R_{12}^{\circ}(t)\expit\!\bigl\{(\beta^{\ast})^\top\widetilde X_{12}\bigr\}
\right\}\,dt
\right]
=
0.
\]
By \cref{eq:full-score}, this expectation equals
$\E\{\varphi_{12}^{\circ}(\beta^{\ast})\}$; hence $\beta^{\ast}$ solves
the population score equation. If \cref{eq:target-unique-cond} also holds,
\cref{prop:target-unique} implies uniqueness, and therefore
$\beta_0=\beta^{\ast}$.
\end{proof}

\section{Large-Sample Theory and Proofs}
\label{app:identification-conditions}

\subsection{Assumptions}

\begin{assumption}[Consistency and No Interference]
\label{ass:consistency}
For each subject, the observed outcome process equals the potential outcome
process under the realized treatment, and one subject's outcome process is not
affected by other subjects' treatments.
\end{assumption}

\begin{assumption}[Independent Subjects]
\label{ass:subject-independence}
Subjects are sampled independently from a common population.
\end{assumption}

\begin{assumption}[Sequentially Independent Censoring]
\label{ass:censoring}
For each subject and interval, conditional on
$\mathcal H_{i,\ell}^{+}$, the censoring jump at $t_\ell$ is independent of
the future complete-data outcome path.
\end{assumption}

\begin{assumption}[Censoring Positivity]
\label{ass:censorpositivity}
There exists $\epsilon>0$ such that every true or working censoring survival
for a pair used in the estimating equation is at least $\epsilon$ on the
support of the observed pair histories.
\end{assumption}

\begin{assumption}[Finite Second Moments]
\label{ass:second-moment}
The baseline covariates satisfy
\(
  E\!\left\{\|X\|^2\right\}<\infty.
\)
\end{assumption}

\begin{assumption}[Treatment Exchangeability and Positivity]
\label{ass:treatment}
Conditional on $X$, treatment assignment is independent of the
potential outcome and censoring processes, and there exists
$\epsilon>0$ such that
\[
  \epsilon \le e_0(X)\le 1-\epsilon
  \qquad\text{almost surely.}
\]
\end{assumption}

\begin{assumption}[Pathwise Regularity]
\label{ass:pathwise-regularity}
The observed-data functional
$\psi_\beta(P)$ is pathwise differentiable at $P_0$ along regular parametric
submodels, uniformly for $\beta$ in a neighborhood of $\beta_0$. For each such submodel $P_\varepsilon$, the map
\[
  (\varepsilon,\eta)
  \longmapsto
  E_{P_\varepsilon\otimes P_\varepsilon}
  \{K_{12}(\beta;\eta)\},
\]
where $K_{12}$ is the ordered-pair kernel defined in
\eqref{eq:aipw-kernel}, satisfies the chain rule along
$\eta=\eta_{P_\varepsilon}(\beta)$ and permits interchange of
differentiation and integration, with square-integrable derivatives.
\end{assumption}

\begin{assumption}[Parameter Space and Kernel Regularity]
\label{ass:kernel-regularity}
The estimating equation is solved over a compact, convex parameter set
$\Theta\subset\R^{p+1}$, with $\beta_0$ in its interior. The set $\Theta$ represents a sufficiently broad and bounded search
region for $\beta$. The kernel
satisfies
$E\{\sup_{\beta\in\Theta}\|K_{12}(\beta;\eta_0)\|^2\}<\infty$, and the maps
$\beta\mapsto\psi_\beta(P_0)$ and
$\beta\mapsto E\{K_{12}(\beta;\eta_0)\}$ are continuously differentiable on
a neighborhood of $\Theta$. The kernel and its $\beta$-derivative are locally
Lipschitz in the nuisance functions, with square-integrable envelopes, on
the nuisance neighborhoods used by the cross-fitted estimators. The path
$\beta\mapsto K_{12}\{\beta;\eta_{P_0}(\beta)\}$ is Lipschitz on $\Theta$
with a square-integrable Lipschitz envelope under $P_0\otimes P_0$. In
addition,
\[
  \left.
  \frac{\partial}{\partial\beta^\top}
  \psi_\beta(P_0)
  \right|_{\beta=\beta_0}
\]
is nonsingular.
\end{assumption}

\begin{assumption}[Cross-Fitting and Derivative Convergence]
\label{ass:crossfit-regularity}
The subjects are partitioned into $F$ folds, where $F\geq 3$ is fixed and each
fold has size proportional to $n$. For each pair of folds $(f,g)$, let
$\widehat\eta^{-(f,g)}$ denote the corresponding nuisance fit, which is based
only on subjects outside those folds and is independent of the records in the
evaluation folds. Conditional on its training sample, the kernel path $\beta\mapsto
K_{12}\{\beta;\widehat\eta^{-(f,g)}(\beta)\}$ is Lipschitz on $\Theta$ with an $L_2(P_0\otimes P_0)$ Lipschitz envelope
whose norm is uniformly bounded in probability. After construction of the cross-fitted nuisance fits, there exists a
fixed neighborhood $\mathcal N\subset\Theta$ of $\beta_0$ such that
\[
  \sup_{\beta\in\mathcal N}
  \left\|
    \frac{\partial}{\partial\beta^\top}\widehat U_n(\beta)
    -
    \frac{\partial}{\partial\beta^\top}\psi_\beta(P_0)
  \right\|
  =
  o_p(1).
\]
\end{assumption}

\subsection{Censoring Double Robustness}
Set $S_\ell=S_{12,\ell}^{\circ}(\beta)$, $Y_\ell=Y_{12}(t_\ell)$, $\mathcal H_\ell^{+}=\mathcal H_{12,\ell}^{+}$, and let
\(
  \Psi_\ell=\sum_{r>\ell}S_r.
\)
Under Assumption~\ref{ass:second-moment}, $\Psi_\ell$ is square-integrable, so
\(
  \Phi_{0,\ell}
  =
  E(\Psi_\ell\mid\mathcal H_\ell^{+})
\)
is well defined.

\begin{proof}[Proof of \headingcref{thm:future-score-dr}{Theorem}.]
Under the correctly specified censoring model $G=G_0$, sequentially independent censoring
renders $dM_{\ell}^C(G_0)$ a martingale increment with respect to the pair
history. Because $\Phi_\ell$ is predictable, the multiplier
$Y_{\ell-1}G_{0,\ell}^{-1}\Phi_\ell$ is
$\mathcal H_\ell^{+}$-measurable. Hence, by iterated expectation,
\[
\begin{aligned}
E\!\left\{
  \frac{Y_{\ell-1}}{G_{0,\ell}}
  \Phi_\ell\,dM_{\ell}^{C}(G_0)
\right\}
&=
E\!\left[
  \frac{Y_{\ell-1}}{G_{0,\ell}}
  \Phi_\ell\,
  E\{dM_{\ell}^{C}(G_0)\mid\mathcal H_\ell^{+}\}
\right] \\
&=0.
\end{aligned}
\]
The usual IPCW identity also implies
\[
  E\!\left\{
  \sum_{\ell=1}^{M}\frac{Y_{\ell-1}}{G_{0,\ell-1}}S_\ell
  \right\}
  =
  E\!\left\{\sum_{\ell=1}^{M}S_\ell\right\}.
\]
Therefore,
\[
\begin{aligned}
  E\{S_{12}^{\mathrm{FC}}(\beta;G_0,\Phi)\}
  &=
  E\!\left\{
    \sum_{\ell=1}^{M}
    \frac{Y_{\ell-1}}{G_{0,\ell-1}}S_\ell
  \right\}
  +
  \sum_{\ell=1}^{M-1}
  E\!\left\{
    \frac{Y_{\ell-1}}{G_{0,\ell}}
    \Phi_\ell\,dM_\ell^C(G_0)
  \right\}
  \\
  &=
  E\!\left\{
    \sum_{\ell=1}^{M}S_\ell
  \right\} = E\{S_{12}^{\circ}(\beta)\}.
\end{aligned}
\]

Under a correct predictable projection $\Phi=\Phi_0$, let
$G=(G_0,\ldots,G_M)$ be any sequence of censoring survival
probabilities bounded away from zero. For each interval $\ell$, let
\[
  \lambda_\ell^C
  =
  1-\frac{G_\ell}{G_{\ell-1}}
\]
be the $\mathcal H_\ell^{+}$-measurable censoring hazard implied
by $G$. Let $\Delta N_\ell^C$ denote the increment of the pair-level censoring
counting process over interval $\ell$, which equals one if the pair is
censored during that interval and zero otherwise. On the at-risk set $Y_{\ell-1}=1$,
\[
  dM_\ell^C(G)
  =
  \Delta N_\ell^C-\lambda_\ell^C.
\]

Recall that
$\Phi_{0,\ell}=E(\Psi_\ell\mid\mathcal H_\ell^{+})$.
Under sequentially independent censoring,
$1-\Delta N_\ell^C$ is independent of
$\Psi_\ell-\Phi_{0,\ell}$ given $\mathcal H_\ell^{+}$, while
\[
\begin{aligned}
  \frac{1-\Delta N_\ell^C}{1-\lambda_\ell^C}\Psi_\ell
  +
  \frac{\Delta N_\ell^C-\lambda_\ell^C}
       {1-\lambda_\ell^C}\Phi_{0,\ell}
  &=
  \Phi_{0,\ell}
  +
  \frac{1-\Delta N_\ell^C}{1-\lambda_\ell^C}
  \bigl(\Psi_\ell-\Phi_{0,\ell}\bigr).
\end{aligned}
\]
Therefore,
\[
\begin{aligned}
  &E\!\left[
    \frac{1-\Delta N_\ell^C}{1-\lambda_\ell^C}\Psi_\ell
    +
    \frac{\Delta N_\ell^C-\lambda_\ell^C}
         {1-\lambda_\ell^C}\Phi_{0,\ell}
    \,\middle|\,
    \mathcal H_\ell^{+}
  \right]
  \\
  &\quad=
  \Phi_{0,\ell}
  +
  E\!\left[
    \frac{1-\Delta N_\ell^C}{1-\lambda_\ell^C}
    \bigl(\Psi_\ell-\Phi_{0,\ell}\bigr)
    \,\middle|\,
    \mathcal H_\ell^{+}
  \right]
  \\
  &\quad=
  \Phi_{0,\ell}
  +
  \frac{
    E(1-\Delta N_\ell^C\mid\mathcal H_\ell^{+})
  }{
    1-\lambda_\ell^C
  }
  E\!\left(
    \Psi_\ell-\Phi_{0,\ell}
    \mid\mathcal H_\ell^{+}
  \right)
  \\
  &\quad=
  \Phi_{0,\ell}.
\end{aligned}
\]
Multiplying this identity by
$Y_{\ell-1}/G_{\ell-1}$ and using the interval updates
$Y_{\ell-1}(1-\Delta N_\ell^C)=Y_\ell$ and
$G_{\ell-1}(1-\lambda_\ell^C)=G_\ell$ yields
\[
  E\!\left[
  \frac{Y_{\ell}}{G_{\ell}}\Psi_\ell
  +
  \frac{Y_{\ell-1}}{G_{\ell}}
  \Phi_{0,\ell}\,dM_\ell^C(G)
  \middle|\mathcal H_{\ell}^{+}
  \right]
  =
  \frac{Y_{\ell-1}}{G_{\ell-1}}\Phi_{0,\ell}.
\]

Taking expectations of both sides in the preceding display gives
\[
\begin{aligned}
  &E\!\left[
    \frac{Y_{\ell}}{G_{\ell}}\Psi_\ell
    +
    \frac{Y_{\ell-1}}{G_{\ell}}
    \Phi_{0,\ell}\,dM_\ell^C(G)
  \right]
  \\
  &\quad=
  E\!\left[
    E\!\left\{
      \frac{Y_{\ell}}{G_{\ell}}\Psi_\ell
      +
      \frac{Y_{\ell-1}}{G_{\ell}}
      \Phi_{0,\ell}\,dM_\ell^C(G)
      \,\middle|\,
      \mathcal H_\ell^{+}
    \right\}
  \right]
  \\
  &\quad=
  E\!\left\{
    \frac{Y_{\ell-1}}{G_{\ell-1}}\Phi_{0,\ell}
  \right\}
  =
  E\!\left\{
    \frac{Y_{\ell-1}}{G_{\ell-1}}\Psi_\ell
  \right\}.
\end{aligned}
\]
Using $\Psi_{\ell-1}=S_\ell+\Psi_\ell$, we obtain
\[
\begin{aligned}
  E\!\left[
    \frac{Y_{\ell-1}}{G_{\ell-1}}\Psi_{\ell-1}
  \right]
  =
  E\!\left[
    \frac{Y_{\ell-1}}{G_{\ell-1}}S_\ell
    +
    \frac{Y_{\ell}}{G_{\ell}}\Psi_\ell
    +
    \frac{Y_{\ell-1}}{G_{\ell}}
    \Phi_{0,\ell}\,dM_\ell^C(G)
  \right].
\end{aligned}
\]

Starting from $\Psi_{M-1}=S_M$ and applying this recursion successively for
$\ell=M-1,\ldots,1$ yields
\[
  E\!\left[
  \sum_{\ell=1}^{M}
  \frac{Y_{\ell-1}}{G_{\ell-1}}S_\ell
  +
  \sum_{\ell=1}^{M-1}
  \frac{Y_{\ell-1}}{G_{\ell}}
  \Phi_{0,\ell}\,dM_\ell^C(G)
  \right]
  =
  E\!\left\{\sum_{\ell=1}^{M}S_\ell\right\}.
\]
Equivalently,
\[
  E\{S_{12}^{\mathrm{FC}}(\beta;G,\Phi_0)\}
  =
  E\{S_{12}^{\circ}(\beta)\},
\]
which proves the second assertion.
\end{proof}

The same double-robustness property holds conditional on $(X_1,X_2)$, since each of the preceding arguments can be carried out conditional on $(X_1,X_2)$. For
treated-versus-control pairs, the corrected score has conditional mean
$m_0(X_1,X_2;\beta)$ given $(X_1,X_2)$ whenever either $G=G_0$ or
$\Phi=\Phi_0$.

\subsection{Double Robustness}

\begin{proof}[Proof of \headingcref{thm:aipw-dr}{Theorem}.]
For $x=(x_1,x_0)$, write
\[
  r_e(x)
  =
  \frac{e_0(x_1)\{1-e_0(x_0)\}}{e(x_1)\{1-e(x_0)\}},
\]
and
\[
  \mu_{G,\Phi}(x)
  =
  E\{S_{12}^{\mathrm{FC}}(\beta_0;G,\Phi)
  \mid X_1=x_1,X_2=x_0,A_1=1,A_2=0\}.
\]
Conditioning first on $(X_1,X_2)$ and then on $(A_1,A_2)$ gives the
conditional representation
\[
  E\{K_{12}(\beta_0;\eta)\mid X_1=x_1,X_2=x_0\}
  =
  r_e(x)\{\mu_{G,\Phi}(x)-m(x)\}+m(x),
\]
where $m(x)=m(x_1,x_0;\beta_0)$.

By the censoring double robustness result in
\Cref{thm:future-score-dr}, $\mu_{G,\Phi}(x)=m_0(x)$ whenever either
$G=G_0$ or $\Phi=\Phi_0$. If, in addition, $e=e_0$, then $r_e(x)=1$ and the
conditional mean equals $m_0(x)$. When $m=m_0$, the difference
$\mu_{G,\Phi}(x)-m(x)$ vanishes, so the conditional mean again equals $m_0(x)$. Integrating over the target
covariate distribution yields
\[
  E\{K_{12}(\beta_0;\eta)\}
  =
  E\{m_0(X_1,X_2)\}
  =
  E\{\omega_{12}^{1,0}S_{12}^{\circ}(\beta_0)\}
  =
  0,
\]
where the final equality is the complete-data target equation.

Neyman orthogonality follows from the conditional representation of
\(E\{K_{12}(\beta_0;\eta)\mid X_1=x_1,X_2=x_0\}\). Consider regular
one-dimensional paths that vary one nuisance component at a time and leave the
remaining components fixed at their true values.
\[
  E\{K_{12}(\beta_0;e,G,\Phi,m)\mid X_1=x_1,X_2=x_0\}
  =
  r_e(x_1,x_0)\{\mu_{G,\Phi}(x_1,x_0)-m(x_1,x_0)\}+m(x_1,x_0).
\]
This conditional mean equals \(m_0(x_1,x_0)\) along each coordinate path. If
only the propensity score varies, \(\mu_{G_0,\Phi_0}=m_0\); if only \(m\)
varies, then \(r_{e_0}=1\) and \(\mu_{G_0,\Phi_0}=m_0\). If only \(G\) or
only \(\Phi\) varies, \Cref{thm:future-score-dr} implies
\(\mu_{G_t,\Phi_0}=m_0\) or \(\mu_{G_0,\Phi_t}=m_0\), respectively.
Thus the conditional mean is constant in the path parameter for each coordinate
direction,
\[
  \left.
  \frac{d}{dt}E\{K_{12}(\beta_0;\eta_t)\}
  \right|_{t=0}
  =0.
\]
Linearity of the Gateaux derivative then extends the conclusion to every
regular nuisance direction at \(\eta_0\).

It remains to establish the second-order drift bound. For general $\eta$,
the drift decomposes as
\[
\begin{aligned}
  &E\{K_{12}(\beta_0;\eta)-K_{12}(\beta_0;\eta_0)\}
  \\
  &\quad=
  E\!\left[
    E\{K_{12}(\beta_0;\eta)-K_{12}(\beta_0;\eta_0)
      \mid X_1,X_2\}
  \right]
  \\
  &\quad=
  E\!\left[
    r_e(X_1,X_2)
    \{\mu_{G,\Phi}(X_1,X_2)-m(X_1,X_2)\}
    +m(X_1,X_2)-m_0(X_1,X_2)
  \right]
  \\
  &\quad=
  E\!\left[
    r_e(X_1,X_2)
    \{\mu_{G,\Phi}(X_1,X_2)-m_0(X_1,X_2)\}
  \right]
  \\
  &\qquad+
  E\!\left[
    \{r_e(X_1,X_2)-1\}
    \{m_0(X_1,X_2)-m(X_1,X_2)\}
  \right].
\end{aligned}
\]
As $e$ and $e_0$ are bounded away from zero and one,
$\|r_e-1\|_2=\|\frac{
    e_0(x_1)\{1-e_0(x_0)\}
    -e(x_1)\{1-e(x_0)\}}
    {e(x_1)\{1-e(x_0)\}
  }\|_2
=O(\|e-e_0\|_2)$, so Cauchy--Schwarz bounds the second term by $O\!\left(\|e-e_0\|\,\|m-m_0\|\right)$.

For the first term, $\mu_{G,\Phi_0}=m_0$ gives
\[
  \mu_{G,\Phi}-m_0
  =
  \mu_{G,\Phi}-\mu_{G,\Phi_0}.
\]
The IPCW terms in
$S_{12}^{\mathrm{FC}}(\beta_0;G,\Phi)$ and
$S_{12}^{\mathrm{FC}}(\beta_0;G,\Phi_0)$ are identical, so subtracting the
two scores leaves only the difference between their augmentation terms.
Therefore, componentwise,
\[
  \mu_{G,\Phi}(X_1,X_2)-m_0(X_1,X_2)
  =
  E\!\left[
  \sum_{\ell=1}^{M-1}
  \frac{Y_{\ell-1}}{G_\ell}
  \{\Phi_\ell-\Phi_{0,\ell}\}
  dM_\ell^C(G)
  \middle|X_1,X_2,A_1=1,A_2=0
  \right].
\]
Let $\lambda_\ell$ and $\lambda_{0,\ell}$ denote the working and true
hazards for pairwise censoring in interval $\ell$. Under the true censoring model,
\[
  E\{dM_\ell^C(G)\mid \mathcal H_\ell^{+}\}
  =
  Y_{\ell-1}\{\lambda_{0,\ell}-\lambda_\ell\}.
\]

Since
$Y_{\ell-1}G_\ell^{-1}\{\Phi_\ell-\Phi_{0,\ell}\}$ is
$\mathcal H_\ell^{+}$-measurable, conditioning first on
$\mathcal H_\ell^{+}$ gives
\[
\begin{aligned}
  \mu_{G,\Phi}(X_1,X_2)-m_0(X_1,X_2)
  &=
  E\!\left[
  \sum_{\ell=1}^{M-1}
  \frac{Y_{\ell-1}}{G_\ell}
  \{\Phi_\ell-\Phi_{0,\ell}\}
  E\{dM_\ell^C(G)\mid\mathcal H_\ell^{+}\}
  \middle|X_1,X_2,A_1=1,A_2=0
  \right] \\
  &=
  E\!\left[
  \sum_{\ell=1}^{M-1}
  \frac{Y_{\ell-1}^2}{G_\ell}
  \{\lambda_{0,\ell}-\lambda_\ell\}
  \{\Phi_\ell-\Phi_{0,\ell}\}
  \middle|X_1,X_2,A_1=1,A_2=0
  \right] \\
  &=
  E\!\left[
  \sum_{\ell=1}^{M-1}
  \frac{Y_{\ell-1}}{G_\ell}
  \{\lambda_{0,\ell}-\lambda_\ell\}
  \{\Phi_\ell-\Phi_{0,\ell}\}
  \middle|X_1,X_2,A_1=1,A_2=0
  \right].
\end{aligned}
\]
Since
\[
  \lambda_\ell
  =
  1-\frac{G_\ell}{G_{\ell-1}},
  \qquad
  \lambda_{0,\ell}
  =
  1-\frac{G_{0,\ell}}{G_{0,\ell-1}},
\]
we have
\[
\begin{aligned}
  \lambda_{0,\ell}-\lambda_\ell
  &=
  \frac{G_\ell}{G_{\ell-1}}
  -
  \frac{G_{0,\ell}}{G_{0,\ell-1}} \\
  &=
  \frac{G_\ell-G_{0,\ell}}{G_{\ell-1}}
  +
  \frac{
    G_{0,\ell}\{G_{0,\ell-1}-G_{\ell-1}\}
  }{
    G_{\ell-1}G_{0,\ell-1}
  }.
\end{aligned}
\]
Positivity and the preceding expansion give
\[
\begin{aligned}
  \left\|
  \frac{Y_{\ell-1}}{G_\ell}
  \{\lambda_{0,\ell}-\lambda_\ell\}
  \right\|_2
  &=
  O\!\left(
  \|G_\ell-G_{0,\ell}\|_2
  +
  \|G_{\ell-1}-G_{0,\ell-1}\|_2
  \right) \\
  &=
  O\!\left(\|G-G_0\|_2\right).
\end{aligned}
\]
Multiplying the preceding identity by $r_e(X_1,X_2)$ and taking expectations
gives a sum over intervals in which the factor bounded above is multiplied
by $\Phi_\ell-\Phi_{0,\ell}$. Since $r_e$ is bounded, Cauchy--Schwarz under
the joint pair-history law and over the finite visit grid gives
\[
\begin{aligned}
  \left|
  E\!\left[r_e(X_1,X_2)
  \{\mu_{G,\Phi}(X_1,X_2)-m_0(X_1,X_2)\}
  \right]
  \right|
  &=
  O\!\left(
  \sum_{\ell=1}^{M-1}
  \left\|
  \frac{Y_{\ell-1}}{G_\ell}
  \{\lambda_{0,\ell}-\lambda_\ell\}
  \right\|_2
  \|\Phi_\ell-\Phi_{0,\ell}\|_2
  \right) \\
  &=
  O\!\left(
  \|G-G_0\|_2\,\|\Phi-\Phi_0\|_2
\right).
\end{aligned}
\]
Combining this bound with the bound for the treatment component yields
\[
  E\{K_{12}(\beta_0;\eta)-K_{12}(\beta_0;\eta_0)\}
  =
  O\!\left(
  \|e-e_0\|\,\|m-m_0\|
  +
  \|G-G_0\|\,\|\Phi-\Phi_0\|
  \right),
\]
as claimed.
\end{proof}

\subsection{Efficient Influence Function}

\begin{proof}[Proof of \headingcref{prop:canonical-gradient}{Proposition}.]
Fix $\beta=\beta_0$. Let
$\{P_\varepsilon:\varepsilon\in(-\delta,\delta)\}$ be an arbitrary regular
parametric submodel through $P_0$, with score
$s(O)\in L_2^0(P_0)$, and write
$\eta_\varepsilon=\eta_{P_\varepsilon}(\beta_0)$. Define
\[
  Q(\varepsilon,\eta)
  =
  E_{P_\varepsilon\otimes P_\varepsilon}
  \{K_{12}(\beta_0;\eta)\}.
\]
By \eqref{eq:observed-moment-functional},
$\psi_{\beta_0}(P_\varepsilon)=Q(\varepsilon,\eta_\varepsilon)$.
Assumption~\ref{ass:pathwise-regularity} permits the pathwise chain rule:
\begin{equation}
\label{eq:pathwise-chain-rule}
  \left.
  \frac{d}{d\varepsilon}
  \psi_{\beta_0}(P_\varepsilon)
  \right|_{\varepsilon=0}
  =
  \left.
  \frac{\partial}{\partial\varepsilon}
  Q(\varepsilon,\eta_0)
  \right|_{\varepsilon=0}
  +
  \partial_\eta Q(0,\eta_0)[\dot\eta],
\end{equation}
where $\dot\eta$ is the derivative of
$\eta_\varepsilon$ at zero. In the first term, $\eta$ is fixed at
$\eta_0$ and only $P_\varepsilon\otimes P_\varepsilon$ varies; the second
term is the derivative along the nuisance path $\eta_\varepsilon$.

The second term in \eqref{eq:pathwise-chain-rule} is zero by the
orthogonality calculation underlying Theorem~\ref{thm:aipw-dr}. Differentiating
the product-remainder expansion of the augmented moment at $\eta_0$ eliminates
all first-order nuisance perturbations, and hence
\[
  \partial_\eta Q(0,\eta_0)[\dot\eta]=0.
\]

Since the score of $P_\varepsilon\otimes P_\varepsilon$ is
$s(O_1)+s(O_2)$, the remaining derivative is
\begin{align*}
  \left.
  \frac{\partial}{\partial\varepsilon}
  Q(\varepsilon,\eta_0)
  \right|_{\varepsilon=0}
  &=
  E\!\left[
  K_{12}(\beta_0;\eta_0)\{s(O_1)+s(O_2)\}
  \right] \\
  &=
  E\!\left[
  \{K_{12}(\beta_0;\eta_0)+K_{21}(\beta_0;\eta_0)\}
  s(O_1)
  \right] \\
  &=
  E\!\left[
  2E\{\bar K(O_1,O_2)\mid O_1\}
  s(O_1)
  \right].
\end{align*}
The second equality follows by exchanging the labels of the two independent
subjects in the term containing $s(O_2)$. The third equality follows by
substituting
$K_{12}(\beta_0;\eta_0)+K_{21}(\beta_0;\eta_0)=2\bar K(O_1,O_2)$
and applying iterated expectation conditional on $O_1$. The definition of
$\kappa$ then gives
\[
  \left.
  \frac{d}{d\varepsilon}
  \psi_{\beta_0}(P_\varepsilon)
  \right|_{\varepsilon=0}
  =
  E\{D_\psi^{\mathrm{eff}}(O)s(O)\},
  \qquad
  D_\psi^{\mathrm{eff}}(O)=2\kappa(O).
\]

Jensen's inequality and the kernel second-moment condition imply that
$D_\psi^{\mathrm{eff}}\in L_2^0(P_0)$. For the nonparametric observed-data model considered here, the closure of
the observed-data tangent space at $P_0$ is $L_2^0(P_0)$. Therefore, $D_\psi^{\mathrm{eff}}$ is the canonical
gradient of the moment.

We next derive the canonical gradient of the root functional
$\beta(P)$. The pathwise chain rule in the $\beta$ direction yields
\[
  \begin{aligned}
  \left.
  \frac{\partial}{\partial\beta^\top}
  \psi_\beta(P_0)
  \right|_{\beta=\beta_0}
  &=
  \left.
  \frac{\partial}{\partial\beta^\top}
  E_{P_0\otimes P_0}\{K_{12}(\beta;\eta_0)\}
  \right|_{\beta=\beta_0} \\
  &\quad +
  \partial_\eta Q(0,\eta_0)
  \left[
  \left.
  \frac{\partial}{\partial\beta^\top}
  \eta_{P_0}(\beta)
  \right|_{\beta=\beta_0}
  \right] \\
  &=
  \left.
  \frac{\partial}{\partial\beta^\top}
  E_{P_0\otimes P_0}\{K_{12}(\beta;\eta_0)\}
  \right|_{\beta=\beta_0},
  \end{aligned}
\]

where the last equality follows from the previously established fact that at $\eta_0$ the first-order nuisance derivative vanishes in every nuisance
perturbation direction. Thus the
target derivative equals the derivative obtained by
holding the true nuisance values fixed at $\beta_0$. Let
$\beta_\varepsilon=\beta(P_\varepsilon)$. Differentiating
$\psi_{\beta_\varepsilon}(P_\varepsilon)=0$ at zero and using the
nonsingularity of this derivative yields
\[
  0
  =
  \left\{
    \left.
    \frac{\partial}{\partial\beta^\top}
    \psi_\beta(P_0)
    \right|_{\beta=\beta_0}
  \right\}\dot\beta
  +
  E\{D_\psi^{\mathrm{eff}}(O)s(O)\},
  \qquad
  \dot\beta
  =
  E\!\left[
    -\left\{
      \left.
      \frac{\partial}{\partial\beta^\top}
      \psi_\beta(P_0)
      \right|_{\beta=\beta_0}
    \right\}^{-1}
    D_\psi^{\mathrm{eff}}(O)s(O)
  \right].
\]
Thus
\[
  D_\beta^{\mathrm{eff}}
  =
  -\left\{
    \left.
    \frac{\partial}{\partial\beta^\top}
    \psi_\beta(P_0)
    \right|_{\beta=\beta_0}
  \right\}^{-1}
  D_\psi^{\mathrm{eff}}
  =
  -2\left\{
    \left.
    \frac{\partial}{\partial\beta^\top}
    \psi_\beta(P_0)
    \right|_{\beta=\beta_0}
  \right\}^{-1}
  \kappa
\]
is a gradient of the root functional. It belongs to the tangent-space
closure because $D_\psi^{\mathrm{eff}}$ does, and hence it is the
canonical gradient of $\beta(P)$.
\end{proof}

\subsection{Consistency}

\begin{proof}[Proof of \headingcref{prop:estimator-consistency}{Proposition}.]
At $P_0$,
\[
  \psi_\beta(P_0)
  =
  E\{\varphi_{12}^{\circ}(\beta)\}.
\]
Recall that
\[
  \varphi_{12}^{\circ}(\beta)
  =
  \omega_{12}^{1,0}
  \widetilde X_{12}
  \int_0^\tau
  \left\{
    W_{12}^{\circ}(t)
    -
    R_{12}^{\circ}(t)
    \expit\!\bigl(\beta^\top\widetilde X_{12}\bigr)
  \right\}
  \,dt.
\]
Write $p_\beta=\expit\{\beta^\top\widetilde X_{12}\}$. Differentiating the
complete-data score gives, for every nonzero $\rho\in\R^{p+1}$,
\[
  -\rho^\top
  \frac{\partial}{\partial\beta^\top}
  \psi_\beta(P_0)
  \rho
  =
  E\!\left[
    \omega_{12}^{1,0}
    \int_0^\tau
    R_{12}^{\circ}(t)\,
    p_\beta(1-p_\beta)
    \bigl(\rho^\top\widetilde X_{12}\bigr)^2
    \,dt
  \right]
  \ge 0,
\]
where the last inequality follows because
$\omega_{12}^{1,0}\ge 0$, $R_{12}^{\circ}(t)\ge 0$,
$p_\beta(1-p_\beta)>0$, and
$\bigl(\rho^\top\widetilde X_{12}\bigr)^2\ge 0$.
At $\beta_0$, the nonsingularity of
\[
  \left.
  \frac{\partial}{\partial\beta^\top}
  \psi_\beta(P_0)
  \right|_{\beta=\beta_0}
\]
under \Cref{ass:kernel-regularity} makes the right-hand side strictly positive
for every nonzero $\rho$. Note that
\[
  -\rho^\top
  \frac{\partial}{\partial\beta^\top}
  \psi_\beta(P_0)
  \rho
\]
depends on $\beta$ only through the strictly positive factor
$p_\beta(1-p_\beta)$. Hence \eqref{eq:target-unique-cond} holds for every
$\beta$, and \Cref{prop:target-unique} shows that $\beta_0$ is the unique zero
of $\beta\mapsto\psi_\beta(P_0)$ on $\Theta$.

For any fixed $\beta\in\Theta$, the evaluation pairs are independent of the
nuisance fits used in their kernels. The U-statistic law of large numbers
therefore gives
\[
\begin{aligned}
  \widehat U_n(\beta)
  &=
  \frac{1}{n(n-1)}
  \sum_{i\ne j}
  K_{ij}\{\beta;\eta_{P_0}(\beta)\}
  +o_p(1)
  \\
  &=
  E_{P_0\otimes P_0}
  \left[
    K_{12}\{\beta;\eta_{P_0}(\beta)\}
  \right]
  +o_p(1)
  \\
  &=
  \psi_\beta(P_0)+o_p(1).
\end{aligned}
\]
The fitted kernel path
$
\beta\longmapsto
K_{12}\{\beta;\widehat\eta^{-(f,g)}(\beta)\}
$
also satisfies the Lipschitz condition, so this convergence is uniform on the compact set $\Theta$:
\[
  \sup_{\beta\in\Theta}
  \|\widehat U_n(\beta)-\psi_\beta(P_0)\|
  =
  o_p(1).
\]

As $\beta_0$ is the unique zero of $\beta\mapsto\psi_\beta(P_0)$ on $\Theta$ and the population score is continuous in $\beta$, the
population score is bounded away from zero outside every neighborhood of
$\beta_0$: for each $\varepsilon>0$,
\[
  \inf_{\substack{\beta\in\Theta\\
                  \|\beta-\beta_0\|\ge\varepsilon}}
  \|\psi_\beta(P_0)\|
  >0.
\]
The approximate-root condition and the preceding uniform convergence give
\[
\begin{aligned}
  \|\psi_{\widehat\beta}(P_0)\|
  &=
  \left\|
    \{\psi_{\widehat\beta}(P_0)
    -\widehat U_n(\widehat\beta)\}
    +\widehat U_n(\widehat\beta)
  \right\|
  \\
  &\le
  \|\widehat U_n(\widehat\beta)\|
  +
  \sup_{\beta\in\Theta}
  \|\widehat U_n(\beta)-\psi_\beta(P_0)\|
  \\
  &=
  o_p(1).
\end{aligned}
\]
Consequently,
\[
  \widehat\beta\xrightarrow{p}\beta_0.
\]
\end{proof}

\subsection{Asymptotic Linearity and Normality}

\begin{lemma}[Monte Carlo Approximation Error]
\label{lem:mc-error}
Let $\widehat\Phi^{\mathrm{fit}}$ be the conditional projection under the fitted future
distributions, and let $\widehat\Phi^B$ be defined by
\eqref{eq:mc-projection}. Conditional on the observed data and fitted nuisance
functions, suppose the Monte Carlo approximation errors entering the average
over ordered subject pairs are independent, have conditional mean zero, and
have uniformly bounded conditional second moments after multiplication by the
score weights. Then, for fixed $B\ge1$,
\[
  \frac{1}{n(n-1)}
  \sum_{i\ne j}
  \left\{
  K_{ij}(\beta_0;\widehat e,\widehat G,\widehat\Phi^B,\widehat m)
  -
  K_{ij}(\beta_0;\widehat e,\widehat G,
  \widehat\Phi^{\mathrm{fit}},\widehat m)
  \right\}
  =
  O_p(n^{-1})
  =
  o_p(n^{-1/2}).
\]
\end{lemma}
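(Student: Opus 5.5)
The plan is to reduce the difference of kernels to a weighted average of centered Monte Carlo errors, and then bound its conditional second moment. Since the IPCW part of $S_{ij}^{\mathrm{FC}}$ and the outcome-regression term $\widehat m$ do not involve $\Phi$, the difference in \eqref{eq:aipw-kernel} is linear in the projection error. Specifically,
\[
  D_{ij}:=K_{ij}(\beta_0;\widehat e,\widehat G,\widehat\Phi^B,\widehat m)-K_{ij}(\beta_0;\widehat e,\widehat G,\widehat\Phi^{\mathrm{fit}},\widehat m)
  =\frac{A_i(1-A_j)}{\widehat e(X_i)\{1-\widehat e(X_j)\}}\sum_{\ell=1}^{M-1}\frac{Y_{ij}(t_{\ell-1})}{\widehat G_{ij,\ell}}\,dM_{ij,\ell}^C(\widehat G)\,\xi_{ij,\ell},
\]
where $\xi_{ij,\ell}=\widehat\Phi^B_{ij,\ell}(\beta_0)-\widehat\Phi^{\mathrm{fit}}_{ij,\ell}(\beta_0)$. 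By \eqref{eq:fitted-projection} and \eqref{eq:mc-projection}, $\widehat\Phi^B$ is an average of $B$ draws from exactly the fitted law whose mean defines $\widehat\Phi^{\mathrm{fit}}$. Hence $E(\xi_{ij,\ell}\mid\mathcal D)=0$, where $\mathcal D$ denotes the observed data together with the cross-fitted nuisance functions. The multiplier in front of $\xi_{ij,\ell}$ is $\mathcal D$-measurable, so $E(D_{ij}\mid\mathcal D)=0$.

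Next I compute the conditional variance of $\bar D_n=\{n(n-1)\}^{-1}\sum_{i\ne j}D_{ij}$ given $\mathcal D$. Under the stated hypotheses, the weighted errors $D_{ij}$ for distinct ordered pairs are conditionally independent and mean zero, so all cross terms vanish. This gives
\[
  E(\|\bar D_n\|^2\mid\mathcal D)=\frac{1}{n^2(n-1)^2}\sum_{i\ne j}E(\|D_{ij}\|^2\mid\mathcal D)\le\frac{C}{n(n-1)}.
\]
Here $C$ is the assumed uniform bound on the conditional second moments after multiplication by the score weights. The same bound also follows from \Cref{ass:censorpositivity} and the propensity bounds, combined with $B\ge1$ draws each having bounded conditional variance.

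Conditional Chebyshev then gives $\Pr(\|\bar D_n\|>Kn^{-1}\mid\mathcal D)\le C/K^2$ up to a factor $n/(n-1)$. Integrating over $\mathcal D$ yields $\bar D_n=O_p(n^{-1})$, and therefore $o_p(n^{-1/2})$, for fixed $B$.

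The main obstacle is the independence claim. A single subject's draws $\widehat{\mathcal U}^{(b)}_{i,\ell}$ could be reused across all partners $j$, which would make $D_{ij}$ and $D_{ik}$ correlated. In that case the sum of $O(n^3)$ covariance terms would give only $O_p(n^{-1/2})$. I would handle this exactly as the lemma's hypothesis does, by requiring pair-specific independent draws, and note that this is the reading needed for the displayed rate. If the draws are shared across pairs, the same argument with a Hoeffding-type decomposition yields only $O_p\{(nB)^{-1/2}\}$, and then $B\to\infty$ would be required.
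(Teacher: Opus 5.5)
Your proposal is correct and follows the paper's own argument. Conditional mean zero and independence of the weighted Monte Carlo errors across ordered pairs remove all cross terms, the conditional second moment of the pair average is $O(n^{-2})$, and conditional Markov/Chebyshev gives $O_p(n^{-1})$; your explicit reduction of the kernel difference to the augmentation term and your caveat about shared draws match the paper's remark that the lemma covers pair-specific independent draws.

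One minor point concerns your aside. Positivity and the propensity bounds alone do not give a \emph{uniform} second-moment bound, because $\Psi_{ij,\ell}$ carries the factor $\widetilde X_{ij}$, which is only square-integrable. Bounding $E(\|D_{ij}\|^2\mid\mathcal D)$ by a constant times $\|\widetilde X_{ij}\|^2/B$ and averaging over pairs still yields an $O_p(n^{-2})$ conditional second moment, so the conclusion is unaffected.
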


\begin{proof}[Proof of \headingcref{lem:mc-error}{Lemma}.]

Define
\[
  \Delta_{ij,B}
  =
  K_{ij}(\beta_0;
    \widehat e,\widehat G,\widehat\Phi^B,\widehat m)
  -
  K_{ij}(\beta_0;
    \widehat e,\widehat G,\widehat\Phi^{\mathrm{fit}},\widehat m),
\]
and let
\[
\mathcal E_{n,B}^{\mathrm{MC}}
=
n^{-1}(n-1)^{-1}\sum_{i\ne j}\Delta_{ij,B}
\]
denote its average. All expectations in this proof are conditional on the
observed data and fitted nuisance functions. For distinct ordered pairs,
independence and mean zero imply
\[
E\!\left(\Delta_{ij,B}^{\top}\Delta_{kl,B}\right)=0,
\qquad (i,j)\ne(k,l).
\]
Therefore,
\[
\begin{aligned}
E\{\|\mathcal E_{n,B}^{\mathrm{MC}}\|^2\}
&=
\frac{1}{n^2(n-1)^2}
\sum_{i\ne j}\sum_{k\ne l}
E\!\left(\Delta_{ij,B}^{\top}\Delta_{kl,B}\right)\\
&=
\frac{1}{n^2(n-1)^2}
\sum_{i\ne j}E\{\|\Delta_{ij,B}\|^2\}\\
&\le
\frac{1}{n(n-1)}
\sup_{i\ne j}E\{\|\Delta_{ij,B}\|^2\}\\
&=
O(n^{-2}).
\end{aligned}
\]
Markov's inequality therefore yields
$\mathcal E_{n,B}^{\mathrm{MC}}=O_p(n^{-1})=o_p(n^{-1/2})$ for fixed $B$.
\end{proof}

\cref{lem:mc-error} covers the usual implementation in which Monte Carlo draws are generated independently for each ordered subject pair. When these draws are sampled from the fitted future distribution, each pairwise Monte Carlo average is centered at the corresponding fitted projection, and the simulation errors average out across pairs.

\begin{proof}[Proof of \headingcref{thm:asymptotic-linearity}{Theorem}.]

With the nuisance functions fixed at $\eta_0$,
\[
  U_n(\beta_0;\eta_0)
  =
  \frac{1}{n(n-1)}\sum_{i\ne j}K_{ij}(\beta_0;\eta_0)
\]
is a second-order U-statistic with symmetrized kernel $\bar K$. Hoeffding's
decomposition and the kernel second-moment condition yield
\[
  U_n(\beta_0;\eta_0)
  =
  \frac{2}{n}\sum_{i=1}^{n}\kappa(O_i)+o_p(n^{-1/2}),
\]
because the degenerate component has variance of order $n^{-2}$.

The oracle statistic and the cross-fitted statistic are compared block by block.
Let $I_1,\ldots,I_F$ denote the subject-level folds, and let
$\widehat\eta^{-(f,g)}$ denote the nuisance fit based on subjects outside
$I_f\cup I_g$. Define
\[
  \mathcal B_{fg}=\{(i,j): i\in I_f,\ j\in I_g,\ i\ne j\},
\]
and let $\mathcal T_{fg}$ be the corresponding training sigma-field. 

Let
\[
  \bar H_{fg}
  =
  |\mathcal B_{fg}|^{-1}
  \sum_{(i,j)\in\mathcal B_{fg}}
  \left\{
    K_{ij}(\beta_0;\widehat\eta^{-(f,g)})
    -
    K_{ij}(\beta_0;\eta_0)
  \right\}.
\]
Conditional on $\mathcal T_{fg}$, $\bar H_{fg}$ is a U-statistic
formed by averaging the pairwise kernel difference over
$\mathcal B_{fg}$. Define its conditional mean by
\[
  \Delta_{fg}
  =
  E(\bar H_{fg}\mid\mathcal T_{fg}).
\]
Conditional on $\mathcal T_{fg}$,
$\widehat\eta^{-(f,g)}$ is fixed. Because $\mathcal T_{fg}$ is generated
by observations from subjects outside $I_f\cup I_g$, conditioning on
$\mathcal T_{fg}$ does not change the joint distribution of
$\{O_i:i\in I_f\cup I_g\}$. Therefore,
\[
  \Delta_{fg}
  =
  E_{P_0\otimes P_0}\!\left[
    K_{12}(\beta_0;\widehat\eta^{-(f,g)})
    -
    K_{12}(\beta_0;\eta_0)
  \right].
\]
Applying the product-remainder bound in \Cref{thm:aipw-dr} gives
\[
\begin{aligned}
  \|\Delta_{fg}\|
  =
  O_p\!\Bigl(
    &\|\widehat e^{-(f,g)}-e_0\|_2\,
     \|\widehat m^{-(f,g)}-m_0\|_2
    \\
    &+
    \|\widehat G^{-(f,g)}-G_0\|_2\,
     \|\widehat\Phi^{-(f,g)}-\Phi_0\|_2
  \Bigr).
\end{aligned}
\]
Because the product-rate conditions hold uniformly over the fixed number of
fold pairs,
\begin{equation}
\label{eq:block-drift-rate}
  \max_{f,g}\|\Delta_{fg}\|
  =
  o_p(n^{-1/2}).
\end{equation}

Conditional on $\mathcal T_{fg}$,
$\bar H_{fg}-\Delta_{fg}$ is a centered U-statistic. The second-moment
bound from the Hoeffding decomposition therefore gives
\[
  E\!\left\{
    \|\bar H_{fg}-\Delta_{fg}\|^2
    \mid\mathcal T_{fg}
  \right\}
  =
  O\!\left(
    \frac{1}{n}
    \left\|
      K_{12}(\beta_0;\widehat\eta^{-(f,g)})
      -
      K_{12}(\beta_0;\eta_0)
    \right\|_2^2
  \right)
  =
  o_p(n^{-1}).
\]
Conditional Markov's inequality and the fixed number of fold pairs then give
\begin{equation}
\label{eq:block-fluctuation-rate}
  \max_{f,g}
  \|\bar H_{fg}-\Delta_{fg}\|
  =
  o_p(n^{-1/2}).
\end{equation}
Combining \eqref{eq:block-drift-rate} and
\eqref{eq:block-fluctuation-rate} gives
\[
  \max_{f,g}\|\bar H_{fg}\|
  =
  o_p(n^{-1/2}).
\]
Averaging these finitely many
blocks,
\[
  \frac{1}{n(n-1)}
  \sum_{i\ne j}
  \{K_{ij}(\beta_0;\widehat\eta^{-(f_i,f_j)})
    -K_{ij}(\beta_0;\eta_0)\}
  =o_p(n^{-1/2}),
\]
where $f_i$ is the fold of subject $i$. Hence
\[
  \widehat U_n(\beta_0)
  =
  \frac{2}{n}\sum_{i=1}^{n}\kappa(O_i)+o_p(n^{-1/2}).
\]
Replacing $\widehat\Phi^{\mathrm{det}}$ by $\widehat\Phi^B$ adds only an
$o_p(n^{-1/2})$ perturbation to the estimating equation.

Let
\[
  \widetilde J_n
  =
  \int_0^1
  \frac{\partial}{\partial\beta^\top}
  \widehat U_n\!\left\{
    \beta_0+t(\widehat\beta-\beta_0)
  \right\}\,dt.
\]
Adding and
subtracting the population derivative inside the integral gives
\[
\begin{aligned}
  \widetilde J_n
  -
  \left.
  \frac{\partial}{\partial\beta^\top}
  \psi_\beta(P_0)
  \right|_{\beta=\beta_0}
  &=
  \int_0^1
  \left[
    \frac{\partial}{\partial\beta^\top}
    \widehat U_n\!\left\{
      \beta_0+t(\widehat\beta-\beta_0)
    \right\}
    -
    \left.
    \frac{\partial}{\partial\beta^\top}
    \psi_\beta(P_0)
    \right|_{\beta=\beta_0+t(\widehat\beta-\beta_0)}
  \right]\,dt
  \\
  &\quad+
  \int_0^1
  \left[
    \left.
    \frac{\partial}{\partial\beta^\top}
    \psi_\beta(P_0)
    \right|_{\beta=\beta_0+t(\widehat\beta-\beta_0)}
    -
    \left.
    \frac{\partial}{\partial\beta^\top}
    \psi_\beta(P_0)
    \right|_{\beta=\beta_0}
  \right]\,dt.
\end{aligned}
\]
Because $\widehat{\beta} \xrightarrow{p} \beta_0$, the first integral is $o_p(1)$ by
Assumption~\ref{ass:crossfit-regularity}, and the second is $o_p(1)$ by the
continuous differentiability in Assumption~\ref{ass:kernel-regularity}.
Thus
\[
  \widetilde J_n
  =
  \left.
  \frac{\partial}{\partial\beta^\top}
  \psi_\beta(P_0)
  \right|_{\beta=\beta_0}
  +o_p(1),
\]
and
\[
  \widetilde J_n^{-1}
  =
  \left\{
    \left.
    \frac{\partial}{\partial\beta^\top}
    \psi_\beta(P_0)
    \right|_{\beta=\beta_0}
  \right\}^{-1}
  +o_p(1).
\]
Taylor's formula gives
\[
  \widehat U_n(\widehat\beta)-\widehat U_n(\beta_0)
  =
  \widetilde J_n(\widehat\beta-\beta_0).
\]
At the approximate root,
$\widehat U_n(\widehat\beta)=o_p(n^{-1/2})$. Substituting this then yields
\[
  \sqrt n(\widehat\beta-\beta_0)
  =
  -\widetilde J_n^{-1}\sqrt n\,\widehat U_n(\beta_0)+o_p(1)
  =
  -\left\{
    \left.
    \frac{\partial}{\partial\beta^\top}
    \psi_\beta(P_0)
    \right|_{\beta=\beta_0}
  \right\}^{-1}
  \sqrt n\,\widehat U_n(\beta_0)+o_p(1).
\]

Combining the preceding display with the Hoeffding expansion and
\eqref{eq:beta-eif} gives
\[
  \sqrt n(\widehat\beta-\beta_0)
  =
  \frac{1}{\sqrt n}
  \sum_{i=1}^n D_\beta^{\mathrm{eff}}(O_i)
  +o_p(1).
\]
The central limit theorem for the independent first Hoeffding projections
establishes the stated normal limit.

\end{proof}

The rate requirements in \Cref{thm:asymptotic-linearity} are attainable under
standard conditions for the nuisance estimators considered here. For
correctly specified fixed-dimensional parametric propensity and censoring
models, such as logistic regression, standard parametric theory gives
root-$n$ consistency for $\widehat e$ and $\widehat G$. Because the number of
folds is fixed and each training sample contains a fixed proportion of the
subjects, these rates hold uniformly over the nuisance fits.

In our implementations, both $\widehat\Phi^{\mathrm{det}}$ and $\widehat m$
are evaluated by deterministic recursion under fitted transition models. On
the finite state space and fixed visit grid, these quantities are finite sums
and products of the fitted transition probabilities. Consistency of the fitted transition probabilities therefore implies
\[
  \|\widehat\Phi^{\mathrm{det}}-\Phi_0\|=o_p(1),
  \qquad
  \|\widehat m-m_0\|=o_p(1)
\]
by the continuous mapping theorem.

An alternative approach is to estimate $m_0$ and $\Phi_0$ directly. For each $\beta$,
define
\[
  \Psi_{ij,\ell}^{\mathrm{IPW}}(\beta;G)
  =
  \sum_{r>\ell}
  \frac{Y_{ij}(t_{r-1})G_{ij,\ell-1}}
       {G_{ij,r-1}}
  S_{ij,r}^{\circ}(\beta).
\]
The at-risk indicator retains an interval-$r$ score contribution only when
both subjects remain observed through $t_{r-1}$, and the survival ratio
reweights that contribution by the inverse conditional probability of
remaining uncensored from the pair history at interval $\ell$ through
$t_{r-1}$. Sequentially independent censoring therefore gives
\[
  E\!\left\{
    \Psi_{ij,\ell}^{\mathrm{IPW}}(\beta_0;G_0)
    \mid \mathcal H_{ij,\ell}^{+}
  \right\}
  =
  \Phi_{0,ij,\ell}(\beta_0).
\]
Let $\widehat\Phi_{\ell}^{\mathrm{reg}}(\beta)$ be the least-squares
regression, under squared Euclidean loss, of
$\Psi_{ij,\ell}^{\mathrm{IPW}}(\beta;\widehat G)$ on
$\mathcal H_{ij,\ell}^{+}$ over a function class $\mathcal F_\ell$.

Given $\widehat\Phi^{\mathrm{reg}}(\beta)$, let
$\widehat m^{\mathrm{reg}}(\cdot,\cdot;\beta)$ be the least-squares
regression, also under squared Euclidean loss, of
\[
  S_{ij}^{\mathrm{FC}}
  \bigl(\beta;\widehat G,\widehat\Phi^{\mathrm{reg}}\bigr)
\]
on $(X_i,X_j)$ among treated-versus-control pairs over a function class
$\mathcal M$. The conditional identification result following
\Cref{thm:future-score-dr} gives
\[
  E\!\left\{
    S_{ij}^{\mathrm{FC}}(\beta_0;G_0,\Phi_0)
    \mid X_i,X_j,A_i=1,A_j=0
  \right\}
  =
  m_0(X_i,X_j;\beta_0).
\]
The two conditional-mean identities therefore identify
$\Phi_{0,\ell}(\beta_0)$ and $m_0(\cdot,\cdot;\beta_0)$ as the targets of
the two regressions.

Suppose $\mathcal F_\ell$ contains $\Phi_{0,\ell}(\beta_0)$ and
$\mathcal M$ contains $m_0(\cdot,\cdot;\beta_0)$. Suppose the following convergences hold uniformly over the training samples:
\[
\begin{aligned}
  \bigl\|
    \Psi_{\ell}^{\mathrm{IPW}}(\beta_0;\widehat G)
    -
    \Psi_{\ell}^{\mathrm{IPW}}(\beta_0;G_0)
  \bigr\|_2
  &=o_p(1),\\
  \bigl\|
    S^{\mathrm{FC}}
    \bigl(\beta_0;\widehat G,\widehat\Phi^{\mathrm{reg}}\bigr)
    -
    S^{\mathrm{FC}}(\beta_0;G_0,\Phi_0)
  \bigr\|_2
  &=o_p(1).
\end{aligned}
\]
Suppose further that the squared Euclidean losses used to fit
$\widehat\Phi_{\ell}^{\mathrm{reg}}$ over $\mathcal F_\ell$ and
$\widehat m^{\mathrm{reg}}$ over $\mathcal M$ form bounded
U-Glivenko--Cantelli classes under the pair-history and
treated-versus-control pair distributions, respectively. For each regression, the empirical average of the squared Euclidean loss then converges uniformly to its population counterpart
\citep{arcones1993limit}. Together with the two conditional-mean identities
above, this gives
\[
  \|\widehat\Phi_{\ell}^{\mathrm{reg}}(\beta_0)
    -\Phi_{0,\ell}(\beta_0)\|_2=o_p(1),
  \qquad
  \|\widehat m^{\mathrm{reg}}(\cdot,\cdot;\beta_0)
    -m_0(\cdot,\cdot;\beta_0)\|_2=o_p(1).
\]
Because the visit grid is finite,
\[
  \|\widehat\Phi^{\mathrm{reg}}(\beta_0)
    -\Phi_0(\beta_0)\|_2=o_p(1).
\]

Together with the root-$n$ rates for $\widehat e$ and $\widehat G$, these
consistency results imply the product-rate requirements in
\Cref{thm:asymptotic-linearity}, with
$\widehat\Phi=\widehat\Phi^{\mathrm{reg}}$ and
$\widehat m=\widehat m^{\mathrm{reg}}$.

\begin{proof}[Proof of \headingcref{cor:wald-inference}{Corollary}.]

Let
\[
  \widehat\kappa_i^\ast
  =
  \frac{1}{n-1}\sum_{j\ne i}\bar K(O_i,O_j)
  -
  \frac{1}{n}\sum_{r=1}^n
  \frac{1}{n-1}\sum_{s\ne r}\bar K(O_r,O_s)
\]
be the centered empirical estimate of the first Hoeffding projection
$
  \kappa(O_i)
  =
  E\!\left\{\bar K(O_i,O_2)\mid O_i\right\}.
$
Let
\[
\begin{aligned}
  \widehat\kappa_i^{\mathrm{fit}}
  &=
  \frac{1}{n-1}\sum_{j\ne i}
  \frac{1}{2}
  \Big\{
    K_{ij}(\widehat\beta;
      \widehat e,\widehat G,\widehat\Phi^{\mathrm{fit}},\widehat m)
    +
    K_{ji}(\widehat\beta;
      \widehat e,\widehat G,\widehat\Phi^{\mathrm{fit}},\widehat m)
  \Big\}
  \\
  &\quad-
  \frac{1}{n}\sum_{r=1}^n
  \frac{1}{n-1}\sum_{s\ne r}
  \frac{1}{2}
  \Big\{
    K_{rs}(\widehat\beta;
      \widehat e,\widehat G,\widehat\Phi^{\mathrm{fit}},\widehat m)
    +
    K_{sr}(\widehat\beta;
      \widehat e,\widehat G,\widehat\Phi^{\mathrm{fit}},\widehat m)
  \Big\}.
\end{aligned}
\]
Thus, $\widehat\kappa_i^{\mathrm{fit}}$ replaces the oracle pair kernel
$\bar K(O_i,O_j)$ in $\widehat\kappa_i^\ast$ by its fitted counterpart
evaluated at $\widehat\beta$.
The law of large
numbers for projections of second-order U-statistics yields
\[
  \frac{1}{n}\sum_{i=1}^n
  \widehat\kappa_i^\ast\widehat\kappa_i^{\ast\top}
  \xrightarrow{p}
  \Var\{\kappa(O)\}.
\]

For \(i\ne j\), define
\[
  d_{ij}
  =
  \frac{1}{2}
  \Big\{
    K_{ij}(\widehat\beta;
      \widehat e,\widehat G,\widehat\Phi^{\mathrm{fit}},\widehat m)
    +
    K_{ji}(\widehat\beta;
      \widehat e,\widehat G,\widehat\Phi^{\mathrm{fit}},\widehat m)
  \Big\}
  -
  \bar K(O_i,O_j).
\]
Subtracting the two empirical projections gives
\[
  \widehat\kappa_i^{\mathrm{fit}}
  -
  \widehat\kappa_i^\ast
  =
  \frac{1}{n-1}\sum_{j\ne i}d_{ij}
  -
  \frac{1}{n(n-1)}\sum_{r\ne s}d_{rs}.
\]
The second term is the sample mean of the first term over \(i\), since
\[
  \frac{1}{n}
  \sum_{i=1}^n
  \left(
    \frac{1}{n-1}\sum_{j\ne i}d_{ij}
  \right)
  =
  \frac{1}{n(n-1)}
  \sum_{r\ne s}d_{rs}.
\]
Consequently,
\[
\begin{aligned}
  &\frac{1}{n}\sum_{i=1}^n
  \Biggl\|
    \widehat\kappa_i^{\mathrm{fit}}
    -
    \widehat\kappa_i^\ast
  \Biggr\|^2
  \\
  &=
  \frac{1}{n}\sum_{i=1}^n
  \Biggl\|
    \frac{1}{n-1}\sum_{j\ne i}d_{ij}
    -
    \frac{1}{n(n-1)}\sum_{r\ne s}d_{rs}
  \Biggr\|^2
  \\
  &=
  \frac{1}{n}\sum_{i=1}^n
  \Biggl[
    \Biggl\|
      \frac{1}{n-1}\sum_{j\ne i}d_{ij}
    \Biggr\|^2
    -
    2
    \left(
      \frac{1}{n-1}\sum_{j\ne i}d_{ij}
    \right)^\top
    \left(
      \frac{1}{n(n-1)}\sum_{r\ne s}d_{rs}
    \right)
    +
    \Biggl\|
      \frac{1}{n(n-1)}\sum_{r\ne s}d_{rs}
    \Biggr\|^2
  \Biggr]
  \\
  &=
  \frac{1}{n}\sum_{i=1}^n
  \Biggl\|
    \frac{1}{n-1}\sum_{j\ne i}d_{ij}
  \Biggr\|^2
  -
  \Biggl\|
    \frac{1}{n(n-1)}\sum_{r\ne s}d_{rs}
  \Biggr\|^2
  \\
  &\le
  \frac{1}{n}\sum_{i=1}^n
  \Biggl\|
    \frac{1}{n-1}\sum_{j\ne i}d_{ij}
  \Biggr\|^2.
\end{aligned}
\]
For each \(i\), Jensen's inequality gives
\[
  \Biggl\|
    \frac{1}{n-1}\sum_{j\ne i}d_{ij}
  \Biggr\|^2
  \le
  \frac{1}{n-1}\sum_{j\ne i}\|d_{ij}\|^2.
\]
Therefore,
\[
\begin{aligned}
  &\frac{1}{n}\sum_{i=1}^n
  \Biggl\|
    \widehat\kappa_i^{\mathrm{fit}}
    -
    \widehat\kappa_i^\ast
  \Biggr\|^2
  \\
  &\le
  \frac{1}{n(n-1)}
  \sum_{i\ne j}
  \Biggl\|
    \frac{1}{2}
    \Big\{
      K_{ij}(\widehat\beta;
        \widehat e,\widehat G,\widehat\Phi^{\mathrm{fit}},\widehat m)
      +
      K_{ji}(\widehat\beta;
        \widehat e,\widehat G,\widehat\Phi^{\mathrm{fit}},\widehat m)
    \Big\}
    -
    \bar K(O_i,O_j)
  \Biggr\|^2.
\end{aligned}
\]

For the following argument, write the dependence of the fitted nuisance
functions on $\beta$ explicitly:
\[
  K_{ij}(\beta;
    \widehat e,\widehat G,\widehat\Phi^{\mathrm{fit}},\widehat m)
  =
  K_{ij}\!\left\{
    \beta;
    \widehat e,\widehat G,
    \widehat\Phi^{\mathrm{fit}}(\beta),
    \widehat m(\cdot,\cdot;\beta)
  \right\}.
\]

Inside each norm on the right, add and subtract
\[
  \frac{1}{2}
  \Big[
    K_{ij}\!\left\{
      \beta_0;
      \widehat e,\widehat G,
      \widehat\Phi^{\mathrm{fit}}(\beta_0),
      \widehat m(\cdot,\cdot;\beta_0)
    \right\}
    +
    K_{ji}\!\left\{
      \beta_0;
      \widehat e,\widehat G,
      \widehat\Phi^{\mathrm{fit}}(\beta_0),
      \widehat m(\cdot,\cdot;\beta_0)
    \right\}
  \Big],
\]
and use $\|a+b\|^2\leq 2\|a\|^2+2\|b\|^2$ to obtain
\begin{equation}
\label{eq:projection-error-bound}
\begin{aligned}
  &\frac{1}{n}\sum_{i=1}^n
  \Biggl\|
    \widehat\kappa_i^{\mathrm{fit}}
    -
    \widehat\kappa_i^\ast
  \Biggr\|^2
  \\
  &\leq
  \frac{2}{n(n-1)}
  \sum_{i\ne j}
  \Biggl\|
    \frac{1}{2}
    \Biggl[
      K_{ij}(\widehat\beta;
        \widehat e,\widehat G,
        \widehat\Phi^{\mathrm{fit}},\widehat m)
      -
      K_{ij}(\beta_0;
        \widehat e,\widehat G,
        \widehat\Phi^{\mathrm{fit}},\widehat m)
  \\
  &\hspace{12em}
      +
      K_{ji}(\widehat\beta;
        \widehat e,\widehat G,
        \widehat\Phi^{\mathrm{fit}},\widehat m)
      -
      K_{ji}(\beta_0;
        \widehat e,\widehat G,
        \widehat\Phi^{\mathrm{fit}},\widehat m)
    \Biggr]
  \Biggr\|^2
  \\
  &\quad+
  \frac{2}{n(n-1)}
  \sum_{i\ne j}
  \Biggl\|
    \frac{1}{2}
    \Big\{
      K_{ij}(\beta_0;
        \widehat e,\widehat G,
        \widehat\Phi^{\mathrm{fit}},\widehat m)
      +
      K_{ji}(\beta_0;
        \widehat e,\widehat G,
        \widehat\Phi^{\mathrm{fit}},\widehat m)
    \Big\}
    -
    \bar K(O_i,O_j)
  \Biggr\|^2.
\end{aligned}
\end{equation}
For each ordered pair,
\[
\begin{aligned}
  &K_{ij}(\widehat\beta;
    \widehat e,\widehat G,
    \widehat\Phi^{\mathrm{fit}},\widehat m)
  -
  K_{ij}(\beta_0;
    \widehat e,\widehat G,
    \widehat\Phi^{\mathrm{fit}},\widehat m)
  \\
  &=
  \left[
    \int_0^1
    \frac{\partial}{\partial\beta^\top}
    K_{ij}\!\left\{
      \beta_0+t(\widehat\beta-\beta_0);
      \widehat e,\widehat G,
      \widehat\Phi^{\mathrm{fit}}
        \bigl(\beta_0+t(\widehat\beta-\beta_0)\bigr),
      \widehat m
        \bigl(\cdot,\cdot;
          \beta_0+t(\widehat\beta-\beta_0)\bigr)
    \right\}
    \,dt
  \right]
  (\widehat\beta-\beta_0).
\end{aligned}
\]
Therefore, by Assumption~\ref{ass:kernel-regularity},
\begin{equation}
\label{eq:beta-perturbation-bound}
\begin{aligned}
  &\frac{1}{n(n-1)}
  \sum_{i\ne j}
  \Biggl\|
    \frac{1}{2}
    \Biggl[
      K_{ij}(\widehat\beta;
        \widehat e,\widehat G,
        \widehat\Phi^{\mathrm{fit}},\widehat m)
      -
      K_{ij}(\beta_0;
        \widehat e,\widehat G,
        \widehat\Phi^{\mathrm{fit}},\widehat m)
  \\
  &\hspace{12em}
      +
      K_{ji}(\widehat\beta;
        \widehat e,\widehat G,
        \widehat\Phi^{\mathrm{fit}},\widehat m)
      -
      K_{ji}(\beta_0;
        \widehat e,\widehat G,
        \widehat\Phi^{\mathrm{fit}},\widehat m)
    \Biggr]
  \Biggr\|^2
  \\
  &=
  O_p\!\left(\|\widehat\beta-\beta_0\|^2\right)
  =
  o_p(1).
\end{aligned}
\end{equation}

Recall from the proof of \Cref{thm:asymptotic-linearity} that
$I_1,\ldots,I_F$ are the subject-level folds, with $F$ fixed, and that for $f,g\in\{1,\ldots,F\}$,
$
  \mathcal B_{fg}
  =
  \{(i,j):i\in I_f,\ j\in I_g,\ i\ne j\}.
$
The sigma-field $\mathcal T_{fg}$ is generated by the training
observations outside $I_f\cup I_g$, where the nuisance functions used
for pairs in $\mathcal B_{fg}$ are fitted.

At $\beta_0$, the conditional kernel convergence established in that
proof gives, for $(i,j)\in\mathcal B_{fg}$,
\[
\begin{aligned}
  &E\!\Biggl[
    \Biggl\|
      K_{ij}(\beta_0;
        \widehat e,\widehat G,
        \widehat\Phi^{\mathrm{fit}},\widehat m)
      -
      K_{ij}(\beta_0;\eta_0)
    \Biggr\|^2
    \,\Biggm|\,
    \mathcal T_{fg}
  \Biggr]
  =
  o_p(1),
  \\
  &E\!\Biggl[
    \Biggl\|
      K_{ji}(\beta_0;
        \widehat e,\widehat G,
        \widehat\Phi^{\mathrm{fit}},\widehat m)
      -
      K_{ji}(\beta_0;\eta_0)
    \Biggr\|^2
    \,\Biggm|\,
    \mathcal T_{fg}
  \Biggr]
  =
  o_p(1).
\end{aligned}
\]
Since
\[
  \bar K(O_i,O_j)
  =
  \frac{1}{2}
  \left\{
    K_{ij}(\beta_0;\eta_0)
    +
    K_{ji}(\beta_0;\eta_0)
  \right\},
\]
conditioning on $\mathcal T_{fg}$ and using
\[
  \Biggl\|
    \frac{a+b}{2}
  \Biggr\|^2
  \leq
  \frac{1}{2}\|a\|^2
  +
  \frac{1}{2}\|b\|^2
\]
give
\[
\begin{aligned}
  &E\!\Biggl[
    \Biggl\|
      \frac{1}{2}
      \Big\{
        K_{ij}(\beta_0;
          \widehat e,\widehat G,
          \widehat\Phi^{\mathrm{fit}},\widehat m)
        +
        K_{ji}(\beta_0;
          \widehat e,\widehat G,
          \widehat\Phi^{\mathrm{fit}},\widehat m)
      \Big\}
      -
      \bar K(O_i,O_j)
    \Biggr\|^2
    \,\Biggm|\,
    \mathcal T_{fg}
  \Biggr]
  \\
  &\leq
  \frac{1}{2}
  E\!\Biggl[
    \Biggl\|
      K_{ij}(\beta_0;
        \widehat e,\widehat G,
        \widehat\Phi^{\mathrm{fit}},\widehat m)
      -
      K_{ij}(\beta_0;\eta_0)
    \Biggr\|^2
    \,\Biggm|\,
    \mathcal T_{fg}
  \Biggr]
  \\
  &\quad+
  \frac{1}{2}
  E\!\Biggl[
    \Biggl\|
      K_{ji}(\beta_0;
        \widehat e,\widehat G,
        \widehat\Phi^{\mathrm{fit}},\widehat m)
      -
      K_{ji}(\beta_0;\eta_0)
    \Biggr\|^2
    \,\Biggm|\,
    \mathcal T_{fg}
  \Biggr]
  \\
  &=
  o_p(1).
\end{aligned}
\]
Therefore, for every $\varepsilon>0$, conditional Markov's inequality
gives
\[
\begin{aligned}
  &\Pr\!\Biggl(
    \frac{1}{|\mathcal B_{fg}|}
    \sum_{(i,j)\in\mathcal B_{fg}}
    \Biggl\|
      \frac{1}{2}
      \Big\{
        K_{ij}(\beta_0;
          \widehat e,\widehat G,
          \widehat\Phi^{\mathrm{fit}},\widehat m)
        +
        K_{ji}(\beta_0;
          \widehat e,\widehat G,
          \widehat\Phi^{\mathrm{fit}},\widehat m)
      \Big\}
      -
      \bar K(O_i,O_j)
    \Biggr\|^2
    >
    \varepsilon
    \,\Biggm|\,
    \mathcal T_{fg}
  \Biggr)
  \\
  &\leq
  \frac{1}{\varepsilon|\mathcal B_{fg}|}
  \sum_{(i,j)\in\mathcal B_{fg}}
  E\!\Biggl[
    \Biggl\|
      \frac{1}{2}
      \Big\{
        K_{ij}(\beta_0;
          \widehat e,\widehat G,
          \widehat\Phi^{\mathrm{fit}},\widehat m)
        +
        K_{ji}(\beta_0;
          \widehat e,\widehat G,
          \widehat\Phi^{\mathrm{fit}},\widehat m)
      \Big\}
      -
      \bar K(O_i,O_j)
    \Biggr\|^2
    \,\Biggm|\,
    \mathcal T_{fg}
  \Biggr]
  \\
  &=
  o_p(1).
\end{aligned}
\]
Hence, for each $f,g\in\{1,\ldots,F\}$,
\[
  \frac{1}{|\mathcal B_{fg}|}
  \sum_{(i,j)\in\mathcal B_{fg}}
  \Biggl\|
    \frac{1}{2}
    \Big\{
      K_{ij}(\beta_0;
        \widehat e,\widehat G,
        \widehat\Phi^{\mathrm{fit}},\widehat m)
      +
      K_{ji}(\beta_0;
        \widehat e,\widehat G,
        \widehat\Phi^{\mathrm{fit}},\widehat m)
    \Big\}
    -
    \bar K(O_i,O_j)
  \Biggr\|^2
  =
  o_p(1).
\]
Each pair $(i,j)$ with $i\ne j$ appears in exactly one
$\mathcal B_{fg}$. Therefore,
\begin{equation}
\label{eq:beta0-kernel-bound}
\begin{aligned}
  &\frac{1}{n(n-1)}
  \sum_{i\ne j}
  \Biggl\|
    \frac{1}{2}
    \Big\{
      K_{ij}(\beta_0;
        \widehat e,\widehat G,
        \widehat\Phi^{\mathrm{fit}},\widehat m)
      +
      K_{ji}(\beta_0;
        \widehat e,\widehat G,
        \widehat\Phi^{\mathrm{fit}},\widehat m)
    \Big\}
    -
    \bar K(O_i,O_j)
  \Biggr\|^2
  \\
  &=
  \sum_{f=1}^F\sum_{g=1}^F
  \frac{|\mathcal B_{fg}|}{n(n-1)}
  \Biggl[
    \frac{1}{|\mathcal B_{fg}|}
    \sum_{(i,j)\in\mathcal B_{fg}}
    \Biggl\|
      \frac{1}{2}
      \Big\{
        K_{ij}(\beta_0;
          \widehat e,\widehat G,
          \widehat\Phi^{\mathrm{fit}},\widehat m)
        +
        K_{ji}(\beta_0;
          \widehat e,\widehat G,
          \widehat\Phi^{\mathrm{fit}},\widehat m)
      \Big\}
      -
      \bar K(O_i,O_j)
    \Biggr\|^2
  \Biggr]
  \\
  &=
  o_p(1).
\end{aligned}
\end{equation}
Combining
\eqref{eq:projection-error-bound},
\eqref{eq:beta-perturbation-bound}, and
\eqref{eq:beta0-kernel-bound} gives
\[
  \frac{1}{n}\sum_{i=1}^n
  \Biggl\|
    \widehat\kappa_i^{\mathrm{fit}}
    -
    \widehat\kappa_i^\ast
  \Biggr\|^2
  =
  o_p(1).
\]

Recall that
\[
  \widehat\Sigma_\beta
  =
  4\left\{
    \left.
    \frac{\partial}{\partial\beta^\top}
    \widehat U_n(\beta)
    \right|_{\beta=\widehat\beta}
  \right\}^{-1}
  \left\{
    \frac{1}{n}\sum_{i=1}^{n}
    \widehat\kappa_i\widehat\kappa_i^\top
  \right\}
  \left\{
    \left.
    \frac{\partial}{\partial\beta^\top}
    \widehat U_n(\beta)
    \right|_{\beta=\widehat\beta}
  \right\}^{-\top},
\]
where $\widehat\kappa_i$ is the centered row average of $\widehat{\bar K}_{ij}$, the symmetrized
estimating-function contribution from pair $(i,j)$:
\[
  \widehat\kappa_i
  =
  \frac{1}{n-1}\sum_{j\ne i}
  \widehat{\bar K}_{ij}
  -
  \frac{1}{n}\sum_{r=1}^{n}
  \frac{1}{n-1}\sum_{s\ne r}
  \widehat{\bar K}_{rs}.
\]
When the fitted future-trajectory distributions have finite supports,
\[
\begin{aligned}
  \widehat\Phi_{ij,\ell}^{\mathrm{det}}(\beta)
  &=
  \sum_{u_1\in\mathcal S_{1,\ell}}
  \sum_{u_0\in\mathcal S_{0,\ell}}
  \Psi_{ij,\ell}(\beta;u_1,u_0)\,
  \widehat p_{1,\ell}(u_1\mid\mathcal H_{i,\ell}^{+})\,
  \widehat p_{0,\ell}(u_0\mid\mathcal H_{j,\ell}^{+})
  \\
  &=
  \int
  \Psi_{ij,\ell}(\beta;u_1,u_0)\,
  d\widehat{\Law}(u_1\mid\mathcal H_{i,\ell}^{+})\,
  d\widehat{\Law}(u_0\mid\mathcal H_{j,\ell}^{+})
  \\
  &=
  \widehat\Phi_{ij,\ell}^{\mathrm{fit}}(\beta).
\end{aligned}
\]
Therefore, under the deterministic implementation,
\[
\begin{aligned}
  \widehat\kappa_i
  &=
  \frac{1}{n-1}\sum_{j\ne i}
  \frac{1}{2}
  \Big\{
    K_{ij}(\widehat\beta;
      \widehat e,\widehat G,\widehat\Phi^{\mathrm{det}},\widehat m)
    +
    K_{ji}(\widehat\beta;
      \widehat e,\widehat G,\widehat\Phi^{\mathrm{det}},\widehat m)
  \Big\}
  \\
  &\quad-
  \frac{1}{n}\sum_{r=1}^{n}
  \frac{1}{n-1}\sum_{s\ne r}
  \frac{1}{2}
  \Big\{
    K_{rs}(\widehat\beta;
      \widehat e,\widehat G,\widehat\Phi^{\mathrm{det}},\widehat m)
    +
    K_{sr}(\widehat\beta;
      \widehat e,\widehat G,\widehat\Phi^{\mathrm{det}},\widehat m)
  \Big\}
  \\
  &=
  \frac{1}{n-1}\sum_{j\ne i}
  \frac{1}{2}
  \Big\{
    K_{ij}(\widehat\beta;
      \widehat e,\widehat G,\widehat\Phi^{\mathrm{fit}},\widehat m)
    +
    K_{ji}(\widehat\beta;
      \widehat e,\widehat G,\widehat\Phi^{\mathrm{fit}},\widehat m)
  \Big\}
  \\
  &\quad-
  \frac{1}{n}\sum_{r=1}^{n}
  \frac{1}{n-1}\sum_{s\ne r}
  \frac{1}{2}
  \Big\{
    K_{rs}(\widehat\beta;
      \widehat e,\widehat G,\widehat\Phi^{\mathrm{fit}},\widehat m)
    +
    K_{sr}(\widehat\beta;
      \widehat e,\widehat G,\widehat\Phi^{\mathrm{fit}},\widehat m)
  \Big\}
  \\
  &=
  \widehat\kappa_i^{\mathrm{fit}}.
\end{aligned}
\]

For Wald inference under
the Monte Carlo implementation $\widehat\Phi^B$, compute the symmetrized pair
kernels $\widehat{\bar K}_{ij}$ and the resulting empirical Hoeffding
projections $\widehat\kappa_i$ using an evaluation of
\eqref{eq:mc-projection} independent of the one used to compute
$\widehat\beta$. Condition throughout the following calculation on the
observed data, fitted models, and Monte Carlo draws used to compute
$\widehat\beta$.

Define the symmetrized Monte Carlo kernel error
\[
\begin{aligned}
  \Delta_{ij}^{\mathrm{MC}}
  &=
  \frac{1}{2}
  \Big[
    K_{ij}(\widehat\beta;
      \widehat e,\widehat G,\widehat\Phi^B,\widehat m)
    -
    K_{ij}(\widehat\beta;
      \widehat e,\widehat G,\widehat\Phi^{\mathrm{fit}},\widehat m)
    \\
  &\hspace{4.8em}+
    K_{ji}(\widehat\beta;
      \widehat e,\widehat G,\widehat\Phi^B,\widehat m)
    -
    K_{ji}(\widehat\beta;
      \widehat e,\widehat G,\widehat\Phi^{\mathrm{fit}},\widehat m)
  \Big].
\end{aligned}
\]
After symmetrization, the conditions of \Cref{lem:mc-error} imply
\[
  E\!\left(\Delta_{ij}^{\mathrm{MC}}\right)=0,
  \qquad
  \sup_{i\ne j}
  E\!\left(\|\Delta_{ij}^{\mathrm{MC}}\|^2\right)
  =
  O_p(1),
\]
and for each fixed $i$ the collection
$\{\Delta_{ij}^{\mathrm{MC}}:j\ne i\}$ is conditionally independent. Subtracting
the corresponding formulas for $\widehat\kappa_i$ and
$\widehat\kappa_i^{\mathrm{fit}}$ gives
\[
\begin{aligned}
  \widehat\kappa_i-\widehat\kappa_i^{\mathrm{fit}}
  &=
  \frac{1}{n-1}\sum_{j\ne i}\Delta_{ij}^{\mathrm{MC}}
  -
  \frac{1}{n}\sum_{r=1}^n
  \frac{1}{n-1}\sum_{s\ne r}\Delta_{rs}^{\mathrm{MC}}.
\end{aligned}
\]
Since the second term is the average of the first terms across subjects,
\[
  \frac{1}{n}\sum_{i=1}^n
  \|\widehat\kappa_i-\widehat\kappa_i^{\mathrm{fit}}\|^2
  \le
  \frac{1}{n}\sum_{i=1}^n
  \left\|
    \frac{1}{n-1}\sum_{j\ne i}\Delta_{ij}^{\mathrm{MC}}
  \right\|^2.
\]
For each $i$,
\[
\begin{aligned}
  E\!\left[
    \left\|
      \frac{1}{n-1}\sum_{j\ne i}\Delta_{ij}^{\mathrm{MC}}
    \right\|^2
  \right]
  &=
  \frac{1}{(n-1)^2}
  \sum_{j\ne i}
  E\!\left(\|\Delta_{ij}^{\mathrm{MC}}\|^2\right)
  \\
  &\quad+
  \frac{1}{(n-1)^2}
  \sum_{\substack{j\ne i,\ k\ne i\\j\ne k}}
  E\!\left\{
    (\Delta_{ij}^{\mathrm{MC}})^\top
    \Delta_{ik}^{\mathrm{MC}}
  \right\}
  \\
  &=
  \frac{1}{(n-1)^2}
  \sum_{j\ne i}
  E\!\left(\|\Delta_{ij}^{\mathrm{MC}}\|^2\right)
  =
  O_p(n^{-1}).
\end{aligned}
\]
The second equality follows because, for fixed $i$, the cross terms involve
conditionally independent centered errors. Averaging over $i$ and applying the preceding inequality
therefore give
\[
  E\!\left[
    \frac{1}{n}\sum_{i=1}^n
    \|\widehat\kappa_i-\widehat\kappa_i^{\mathrm{fit}}\|^2
  \right]
  =
  O_p(n^{-1}).
\]
Conditional Markov's inequality then yields
\[
  \frac{1}{n}\sum_{i=1}^n
  \|\widehat\kappa_i-\widehat\kappa_i^{\mathrm{fit}}\|^2
  =
  o_p(1).
\]
Hence
\[
  \frac{1}{n}\sum_{i=1}^n
  \widehat\kappa_i\widehat\kappa_i^\top
  \xrightarrow{p}
  \Var\{\kappa(O)\}.
\]
Together with
\[
  \left.
  \frac{\partial}{\partial\beta^\top}
  \widehat U_n(\beta)
  \right|_{\beta=\widehat\beta}
  \xrightarrow{p}
  \left.
  \frac{\partial}{\partial\beta^\top}
  \psi_\beta(P_0)
  \right|_{\beta=\beta_0},
\]
the continuous mapping theorem yields
$\widehat\Sigma_\beta\xrightarrow{p}\Sigma_\beta^{\mathrm{eff}}$. Slutsky's theorem and
\Cref{thm:asymptotic-linearity} then establish the stated Wald pivot and coverage.
\end{proof}

\begin{proof}[Proof of \headingcref{cor:semiparametric-efficiency}{Corollary}.]

By \Cref{prop:canonical-gradient},
$D_\beta^{\mathrm{eff}}$ is the canonical gradient of the root functional; by
\Cref{thm:asymptotic-linearity}, the estimator has this same influence
function. For any regular parametric submodel $\{P_\varepsilon\}$ through
$P_0$ with score $s$ and any fixed scalar $t$,
$P_{t/\sqrt n}^{\otimes n}$ is contiguous to $P_0^{\otimes n}$.
Hence the $o_{P_0^{\otimes n}}(1)$ remainder in
\Cref{thm:asymptotic-linearity} is also
$o_{P_{t/\sqrt n}^{\otimes n}}(1)$. Under $P_0^{\otimes n}$, the local likelihood-ratio expansion and the
multivariate central limit theorem for $D_\beta^{\mathrm{eff}}(O)$ and $s(O)$
imply that the influence-function sum
$n^{-1/2}\sum_{i=1}^n D_\beta^{\mathrm{eff}}(O_i)$ and the log-likelihood
ratio of $P_{t/\sqrt n}^{\otimes n}$ to $P_0^{\otimes n}$ are jointly
asymptotically normal, with limiting cross-covariance
$tE\{D_\beta^{\mathrm{eff}}(O)s(O)\}$. Le Cam's third lemma then yields, under
$P_{t/\sqrt n}^{\otimes n}$,
\[
  \sqrt n(\widehat\beta-\beta_0)
  \rightsquigarrow
  N\!\left(
  tE\{D_\beta^{\mathrm{eff}}(O)s(O)\},
  \Sigma_\beta^{\mathrm{eff}}
  \right),
\]
whereas pathwise differentiability yields
\[
  \sqrt n\{\beta(P_{t/\sqrt n})-\beta_0\}
  \longrightarrow
  tE\{D_\beta^{\mathrm{eff}}(O)s(O)\}.
\]
The centered limit of
$\sqrt n\{\widehat\beta-\beta(P_{t/\sqrt n})\}$ is therefore invariant over
such submodels, establishing regularity and variance
$\Sigma_\beta^{\mathrm{eff}}=\Var(D_\beta^{\mathrm{eff}})$.

If $D_{\widetilde\beta}$ is the influence function of any other regular
asymptotically linear estimator in the same model, the projection property of
the canonical gradient yields
\[
  E\!\left[
  \{D_{\widetilde\beta}-D_\beta^{\mathrm{eff}}\}
  D_\beta^{\mathrm{eff}\top}
  \right]
  =0.
\]
Consequently,
\[
  \Var(D_{\widetilde\beta})
  =
  \Var(D_\beta^{\mathrm{eff}})
  +
  \Var\{D_{\widetilde\beta}-D_\beta^{\mathrm{eff}}\},
\]
which establishes the lower bound and the equality condition.
\end{proof}

\section{Simulation Supplement}
\label{app:simulation-supplement}

\subsection{Simulation Estimators}
\label{app:simulation-estimators}

The four estimators in the simulation tables differ in which components of the observed-data score
are included; all four estimators target the same complete-data logit
projection in Definition~\ref{def:target} when their required nuisance
conditions hold. Write
\[
  w_{ij}(e)=\frac{A_i(1-A_j)}{e(X_i)\{1-e(X_j)\}},
\]
and let $S_{ij}^{\mathrm{IPW}}(\beta;G)$ denote the interval score weighted by
inverse censoring probabilities in Section~\ref{sec:full-correction}. Let
$S_{ij}^{\mathrm{FC}}(\beta;G,\Phi)$ denote this inverse-censoring-weighted
score augmented by the future-score correction in \eqref{eq:full-censor-score}. For
$r\in\{\mathrm{IPW},\mathrm{FC}\}$, define the corresponding baseline outcome regression
\[
  m_0^{r}(x_1,x_0;\beta)
  =
  E\{S_{12}^{r}(\beta)\mid X_1=x_1,X_2=x_0,A_1=1,A_2=0\},
\]
where $S_{12}^{\mathrm{FC}}$ is evaluated at the chosen censoring survival and
future-score projection. Under the correctly specified censoring law,
$m_0^{\mathrm{IPW}}$ equals the complete-data baseline outcome regression $m_0$ in
Section~\ref{sec:treatment-aipw}; under the correctly specified censoring law
or correctly specified future-score projection, the same is true for
$m_0^{\mathrm{FC}}$. Let $m^{r}$ denote the corresponding outcome regression used in
the following kernels.

The four simulation estimators are the roots of
\[
  \widehat U_n^{q}(\beta)
  =
  \frac{1}{n(n-1)}\sum_{i\ne j} K_{ij}^{q}(\beta;\widehat\eta_{ij})=0,
\]
with cross-fitted nuisance estimates, where $q$ takes one of the following
forms:
\[
\begin{array}{ll}
\mathrm{IPW}:&
K_{ij}^{\mathrm{IPW}}(\beta)
=
 w_{ij}(e)S_{ij}^{\mathrm{IPW}}(\beta;G),\\[0.6em]
\mathrm{IPW\mbox{-}FC}:&
K_{ij}^{\mathrm{IPW\mbox{-}FC}}(\beta)
=
 w_{ij}(e)S_{ij}^{\mathrm{FC}}(\beta;G,\Phi),\\[0.6em]
\mathrm{AIPW}:&
K_{ij}^{\mathrm{AIPW}}(\beta)
=
 w_{ij}(e)\{S_{ij}^{\mathrm{IPW}}(\beta;G)-m^{\mathrm{IPW}}(X_i,X_j;\beta)\}
 +m^{\mathrm{IPW}}(X_i,X_j;\beta),\\[0.6em]
\mathrm{AIPW\mbox{-}FC}:&
K_{ij}^{\mathrm{AIPW\mbox{-}FC}}(\beta)
=
 w_{ij}(e)\{S_{ij}^{\mathrm{FC}}(\beta;G,\Phi)-m^{\mathrm{FC}}(X_i,X_j;\beta)\}
 +m^{\mathrm{FC}}(X_i,X_j;\beta).
\end{array}
\]
IPW and AIPW use the interval score weighted by inverse censoring probabilities.
IPW-FC and AIPW-FC add the predictable future-score correction. The augmented
estimators, AIPW and AIPW-FC, also include baseline outcome regression.
The main text focuses on AIPW-FC, the estimator combining both
augmentation components.

\subsection{Data-Generating Mechanism}

The analysis grid was $t_\ell=3\ell$ months, $\ell=0,\ldots,8$. Let $D_\ell$ and $H_\ell$
denote death and hospitalization history by $t_\ell$. Baseline variables were
generated as $X\sim\mathrm{Uniform}(-1,1)$ and
$Z^{\mathrm{cvd}}\sim\mathrm{Bernoulli}(0.30)$, independently, followed by
\[
 A\mid X,Z^{\mathrm{cvd}}
 \sim
 \mathrm{Bernoulli}\{\expit(-0.30+0.40X+1.60Z^{\mathrm{cvd}})\}.
\]
Among subjects alive at the start of interval $\ell$, the discrete death hazard
was
\[
 \lambda_\ell^D
 =
 1-\exp\left[-\exp\{-5.12+0.20X+0.45Z^{\mathrm{cvd}}-0.05A\}\right].
\]
Among subjects alive with no prior hospitalization, the hospitalization
hazard was
\[
 \lambda_\ell^H
 =
 1-\exp\left[
 -\exp\{-1.965+0.45X+0.65Z^{\mathrm{cvd}}-0.20A-1.00A Z^{\mathrm{cvd}}\}
 \right].
\]
Within each interval, death was simulated before hospitalization; hospitalization
was then simulated among subjects who remained alive. Once death occurred, no further outcomes were generated for that subject, whereas a first hospitalization set $H_\ell=1$ thereafter.

The discrete censoring hazard among uncensored subjects alive through interval
$\ell$ was
\[
 \lambda_\ell^C
 =
 1-\exp\left[
 -\exp\{a_C+0.75H_{\ell-1}+0.25X+0.55Z^{\mathrm{cvd}}\}
 \right].
\]
Events at $t_\ell$ were ascertained before censoring at $t_\ell$, in agreement with
the interval convention used in Section~\ref{sec:complete-data-target}. For
each scenario, only the intercept $a_C$ of this censoring hazard was calibrated
so that the marginal censoring proportion by 24 months was approximately the
targeted 30\%, 50\%, or 65\%; the covariate effects in the censoring model and
the event-generating mechanism were otherwise unchanged. Across replications,
the realized censoring proportions were essentially at their targets, and the
mortality and hospitalization rates remained approximately 5\% and 60\%,
respectively.

Pairwise win and determinacy processes followed the two-level
priority rule that ranks death before hospitalization, as described in Section~\ref{sec:simulation-studies}.
The pair design vector was
\[
 \widetilde X_{ij}
 =
 (1, X_i-X_j, Z_i^{\mathrm{cvd}}-Z_j^{\mathrm{cvd}})^\top.
\]

The correctly specified treatment model was logistic in $(X,Z^{\mathrm{cvd}})$. The
censoring, death, and hospitalization hazards were fitted by pooled
complementary log-log models using the covariates in their respective
data-generating hazards. We used five subject-level
folds throughout. For a pair whose members belonged to folds $f$
and $g$, the transition fits and the outcome-regression fits excluded both folds. The
final estimating equations included all ordered pairs.

The future state was $(D_\ell,H_\ell)\in\{0,1\}^2$. Fitted one-step death and hospitalization hazards were used to update state probabilities recursively under $A=1$ and $A=0$, and the two subject distributions were then combined to obtain expected future win and resolution contributions. The same finite-state recursion supplied both the future-score projection and the baseline outcome regression for AIPW and AIPW-FC. 
For bootstrap inference, subjects were sampled with replacement, all nuisance
models and pair indices were reconstructed, and each estimating equation was
solved. Basic bootstrap intervals were formed by reflecting the 0.025 and
0.975 bootstrap quantiles about the original estimate.

The benchmark value of the complete-data target was calculated independently of
the observed censoring process. We used deterministic integration over the continuous
biomarker distribution, probabilities 0.7 and 0.3 for the two values of
$Z^{\mathrm{cvd}}$, and deterministic forward recursion under $A=1$ and $A=0$. The resulting
weighted population logistic score was solved for
$(\alpha,\gamma_X,\gamma_{\mathrm{cvd}})$. This procedure yielded the common benchmark value
$(0.583789,-0.538483,-0.273921)^\top$ for all three censoring scenarios.

\subsection{Double-Robustness Study}
\label{app:simulation-dr}

The double-robustness study focused on the 50\% censoring scenario and
examined the two robustness components in Theorem~\ref{thm:aipw-dr}. Four
misspecification scenarios were considered. For
treatment robustness, the misspecified propensity model omitted $Z^{\mathrm{cvd}}$ while the
outcome regression $m$ was correctly specified, or the propensity model was
correct while $m$ was set to zero. For censoring robustness, the misspecified
censoring model contained only the interval offset and omitted
$(H_{\ell-1},X,Z^{\mathrm{cvd}})$ while the future-score projection was correct, or the
censoring model was correct while the death and hospitalization transition
models used for $\Phi$ omitted $Z^{\mathrm{cvd}}$ and the $A Z^{\mathrm{cvd}}$ interaction. All remaining nuisance models were correctly specified. Each scenario used
500 replications. For this auxiliary study, the intercept of the same
censoring hazard was calibrated in the same manner to maintain approximately
50\% marginal censoring; all other data-generating parameters were identical. The model-based variance estimator was implemented only for AIPW-FC, and
its ASE and coverage under these misspecification scenarios are reported
as an empirical assessment.

\begin{landscape}
\small
\setlength{\tabcolsep}{5pt}
\begin{longtable}{lllrrrrr}
\caption{Treatment double-robustness results under 50\% censoring.}
\label{tab:simulation-treatment-dr}\\
\toprule
Scenario & Component & Estimator & True & Bias & ESD & ASE & Coverage \\
\midrule
\endfirsthead
\multicolumn{8}{c}{\tablename\ \thetable\ (continued)}\\
\toprule
Scenario & Component & Estimator & True & Bias & ESD & ASE & Coverage \\
\midrule
\endhead
\bottomrule
\endlastfoot
$e$ wrong, $m$ correct & Intercept & IPW & 0.58 & -0.08 & 0.19 & -- & -- \\
                       & & AIPW & 0.58 & 0.00 & 0.20 & -- & -- \\
                       & & IPW-FC & 0.58 & -0.08 & 0.18 & -- & -- \\
                       & & AIPW-FC & 0.58 & 0.00 & 0.18 & 0.17 & 0.94 \\
\addlinespace
                       & Biomarker & IPW & -0.54 & 0.00 & 0.16 & -- & -- \\
                       & & AIPW & -0.54 & 0.00 & 0.17 & -- & -- \\
                       & & IPW-FC & -0.54 & 0.00 & 0.15 & -- & -- \\
                       & & AIPW-FC & -0.54 & 0.00 & 0.15 & 0.15 & 0.95 \\
\addlinespace
                       & CVD & IPW & -0.27 & 0.20 & 0.22 & -- & -- \\
                       & & AIPW & -0.27 & -0.02 & 0.24 & -- & -- \\
                       & & IPW-FC & -0.27 & 0.21 & 0.20 & -- & -- \\
                       & & AIPW-FC & -0.27 & -0.01 & 0.21 & 0.22 & 0.95 \\
\midrule
$e$ correct, $m$ wrong & Intercept & IPW & 0.58 & 0.00 & 0.20 & -- & -- \\
                       & & AIPW & 0.58 & 0.00 & 0.20 & -- & -- \\
                       & & IPW-FC & 0.58 & 0.00 & 0.18 & -- & -- \\
                       & & AIPW-FC & 0.58 & 0.00 & 0.18 & 0.18 & 0.95 \\
\addlinespace
                       & Biomarker & IPW & -0.54 & 0.00 & 0.18 & -- & -- \\
                       & & AIPW & -0.54 & 0.00 & 0.18 & -- & -- \\
                       & & IPW-FC & -0.54 & 0.00 & 0.16 & -- & -- \\
                       & & AIPW-FC & -0.54 & 0.00 & 0.16 & 0.16 & 0.94 \\
\addlinespace
                       & CVD & IPW & -0.27 & -0.02 & 0.23 & -- & -- \\
                       & & AIPW & -0.27 & -0.02 & 0.23 & -- & -- \\
                       & & IPW-FC & -0.27 & -0.01 & 0.21 & -- & -- \\
                       & & AIPW-FC & -0.27 & -0.01 & 0.21 & 0.20 & 0.94 \\
\end{longtable}
\end{landscape}

\begin{landscape}
\small
\setlength{\tabcolsep}{5pt}
\begin{longtable}{lllrrrrr}
\caption{Censoring double-robustness results under 50\% censoring.}
\label{tab:simulation-censoring-dr}\\
\toprule
Scenario & Component & Estimator & True & Bias & ESD & ASE & Coverage \\
\midrule
\endfirsthead
\multicolumn{8}{c}{\tablename\ \thetable\ (continued)}\\
\toprule
Scenario & Component & Estimator & True & Bias & ESD & ASE & Coverage \\
\midrule
\endhead
\bottomrule
\endlastfoot
$G$ wrong, $\Phi$ correct & Intercept & IPW & 0.58 & -0.06 & 0.19 & -- & -- \\
                          & & AIPW & 0.58 & -0.07 & 0.20 & -- & -- \\
                          & & IPW-FC & 0.58 & 0.00 & 0.18 & -- & -- \\
                          & & AIPW-FC & 0.58 & 0.00 & 0.18 & 0.19 & 0.96 \\
\addlinespace
                          & Biomarker & IPW & -0.54 & 0.02 & 0.17 & -- & -- \\
                          & & AIPW & -0.54 & 0.03 & 0.18 & -- & -- \\
                          & & IPW-FC & -0.54 & -0.01 & 0.16 & -- & -- \\
                          & & AIPW-FC & -0.54 & 0.00 & 0.16 & 0.17 & 0.96 \\
\addlinespace
                          & CVD & IPW & -0.27 & 0.03 & 0.22 & -- & -- \\
                          & & AIPW & -0.27 & 0.04 & 0.23 & -- & -- \\
                          & & IPW-FC & -0.27 & -0.01 & 0.21 & -- & -- \\
                          & & AIPW-FC & -0.27 & 0.00 & 0.21 & 0.21 & 0.94 \\
\midrule
$G$ correct, $\Phi$ wrong & Intercept & IPW & 0.58 & 0.00 & 0.20 & -- & -- \\
                          & & AIPW & 0.58 & 0.00 & 0.20 & -- & -- \\
                          & & IPW-FC & 0.58 & 0.01 & 0.18 & -- & -- \\
                          & & AIPW-FC & 0.58 & 0.00 & 0.18 & 0.19 & 0.95 \\
\addlinespace
                          & Biomarker & IPW & -0.54 & 0.00 & 0.18 & -- & -- \\
                          & & AIPW & -0.54 & 0.00 & 0.18 & -- & -- \\
                          & & IPW-FC & -0.54 & 0.00 & 0.16 & -- & -- \\
                          & & AIPW-FC & -0.54 & 0.00 & 0.16 & 0.17 & 0.95 \\
\addlinespace
                          & CVD & IPW & -0.27 & -0.02 & 0.23 & -- & -- \\
                          & & AIPW & -0.27 & -0.01 & 0.23 & -- & -- \\
                          & & IPW-FC & -0.27 & -0.01 & 0.21 & -- & -- \\
                          & & AIPW-FC & -0.27 & 0.00 & 0.21 & 0.21 & 0.96 \\
\end{longtable}
\end{landscape}

The treatment misspecification scenarios evaluate robustness of the treatment
adjustment. When $e$ omitted the strong baseline CVD predictor, IPW and IPW-FC
had biases of about
$-0.08$ for the intercept and 0.20 for the CVD coefficient, whereas both
augmented estimators with correct $m$ were close to the target value. When $e$ was
correct and $m$ was set to zero, all four estimators remained close to the
target value. The censoring misspecification scenarios evaluate robustness of the censoring
adjustment. When $G$ omitted all history and baseline predictors, IPW and AIPW
were biased in all three
components, while IPW-FC and AIPW-FC with correct $\Phi$ had much smaller
bias. When $G$ was correct and $\Phi$ was misspecified, all estimators again
remained close to the target value. Across the four scenarios, model-based
coverage for AIPW-FC ranged from 0.94 to 0.96.

\section{OneFlorida Application Details}
\label{app:oneflorida-implementation}

\subsection{Cohort and Outcome}

The index date was the date of breast cancer surgery. The analysis required
at least one recorded encounter in each of the years before and after the
index date, pre-index cardiovascular disease, membership in the high-risk
frailty group, and no other cancer recorded in the preceding year. Patients
were also required to be observed and alive at the 90-day landmark and to have
nonmissing age and frailty measurements.
Initiation of adjuvant chemotherapy between surgery and the landmark defined
$A=1$; absence of initiation during that window defined $A=0$.

Follow-up after the landmark was divided into 36 complete intervals of 30 days;
the one- and two-year analyses truncated follow-up after 12 and 24 complete
intervals, respectively. Let
$D_{i,\ell}$ denote death status and $H_{i,\ell}$ indicate whether patient
$i$ had experienced an inpatient hospitalization before interval $\ell$. For an ordered pair $(i,j)$, the
interval comparison was determined by
\[
R_{ij,\ell}
=
\ind(D_{i,\ell}\ne D_{j,\ell})
+
\ind(D_{i,\ell}=D_{j,\ell})
\ind(H_{i,\ell}\ne H_{j,\ell}),
\]
with $W_{ij,\ell}=1$ when the first member was alive while the second had
died, or, conditional on equal death status, when the first member had no
hospitalization history while the second did. 

The pair design vector was
\[
\widetilde X_{ij}
=
\left(
1,\,
X^{F}_i-X^{F}_j
\right)^\top,
\]
where $X^{F}$ was the standardized frailty index. The first
component of $\beta$ is therefore the treated-versus-control coefficient.
The analysis contained
238,161 treated-versus-control ordered pairs and 1,591,382 ordered pairs in
the average used to estimate the outcome regression.

\subsection{Nuisance Models and Future-Score Recursion}

All nuisance models were fitted with five-fold cross-fitting at the subject level. The propensity
score model was logistic in standardized age, standardized frailty,
standardized log prior-year encounter utilization, indicators for congestive
heart failure, pre-index cardiomyopathy, pre-index atrial fibrillation or other
arrhythmia/cardiac-arrest diagnosis, diabetes, renal disease, myocardial
infarction, and stroke or transient ischemic attack, and race and ethnicity
group and payer group.
The discrete censoring hazard was modeled by a pooled complementary log-log
regression on interval, treatment, age, frailty, hospitalization history,
prior-year encounter utilization, log-transformed histories of marker
abnormality and acute-care use available at the start of the interval, and race
and ethnicity group and payer group.

For the future-score projection, the recursion was based on the state
$(D_{i,\ell},H_{i,\ell})\in\{0,1\}^2$. One-step death and hospitalization
probabilities were estimated by pooled complementary log-log models. Both models included interval, treatment, age, frailty, and landmark summaries
of marker abnormality and acute-care utilization; the death model additionally
included current hospitalization history. The hospitalization model was fitted among
patients alive and free of prior hospitalization. In the recursion, death was evaluated first, and hospitalization was updated only for patients who survived through the interval. Once death occurred, no further state updates were made.

For a patient with covariate history summarized by the fitted transition
models, let $\pi_{i,r}^{a}(d,h)$ denote the predicted probability of state
$(D,H)=(d,h)$ at the start of interval $r$ under treatment value $a$. With
fitted one-step hazards $\widehat p^D_{i,r}(a,h)$ and
$\widehat p^H_{i,r}(a,0)$, the application followed the forward recursion
\[
\begin{aligned}
\pi_{i,r+1}^{a}(1,h)
&=
\pi_{i,r}^{a}(1,h)
+
\pi_{i,r}^{a}(0,h)\widehat p^D_{i,r}(a,h),\qquad h=0,1,\\
\pi_{i,r+1}^{a}(0,1)
&=
\pi_{i,r}^{a}(0,1)\{1-\widehat p^D_{i,r}(a,1)\}
+
\pi_{i,r}^{a}(0,0)\{1-\widehat p^D_{i,r}(a,0)\}
\widehat p^H_{i,r}(a,0),\\
\pi_{i,r+1}^{a}(0,0)
&=
\pi_{i,r}^{a}(0,0)\{1-\widehat p^D_{i,r}(a,0)\}
\{1-\widehat p^H_{i,r}(a,0)\}.
\end{aligned}
\]
Let $\mathcal W$ and $\mathcal R$ be the $4\times4$ matrices obtained by applying the
same priority rule, ranking death before hospitalization, to two states. Their
entries equal one when the ordered pair is a win or is resolved, respectively.
For a pair still observed at censoring interval $\ell$, the components of the future win and resolution scores
that do not depend on $\beta$ were
\[
\widehat{FW}_{ij,\ell}
=
\sum_{r=\ell+1}^{M}\Delta t_r
\{\widehat\pi_{i,r}^{A_i}\}^\top \mathcal W
\widehat\pi_{j,r}^{A_j},
\qquad
\widehat{FR}_{ij,\ell}
=
\sum_{r=\ell+1}^{M}\Delta t_r
\{\widehat\pi_{i,r}^{A_i}\}^\top \mathcal R
\widehat\pi_{j,r}^{A_j}.
\]
These future components determine the future-score projection through
\[
\widehat\Phi_{ij,\ell}(\beta)
=
\widetilde X_{ij}
\left[
\widehat{FW}_{ij,\ell}
-
\widehat{FR}_{ij,\ell}
\expit\{\beta^\top\widetilde X_{ij}\}
\right].
\]
This finite-state recursion implements
\Cref{eq:recursive-projection} deterministically.

The baseline outcome regression used the same recursion over four states under
$A=1$ and $A=0$. Its death and hospitalization transition models included
interval, treatment, age, frailty, and prior-year encounter utilization, with
hospitalization history additionally included in the death model. The resulting
win and resolution components, which are independent of $\beta$, were averaged
over all ordered pairs in the AIPW score.

For a pair whose members belonged to folds $f$ and $g$, transition and
outcome-regression predictions for the pair came from models fitted outside
both folds; subject-level treatment and censoring predictions were likewise
evaluated out of fold. Each of the 200 nonparametric bootstrap samples resampled
patients with replacement, reconstructed all pair indices, refitted every
nuisance model, and solved each estimating equation. Basic bootstrap
intervals were calculated as
\[
\left[
2\widehat\beta-
q_{0.975}(\widehat\beta^{\ast}),\,
2\widehat\beta-
q_{0.025}(\widehat\beta^{\ast})
\right].
\]

\section{Sensitivity Analyses for the OneFlorida Application}
\label{app:oneflorida-sensitivity}

We examined three structured changes to the nuisance models at the three-year
horizon, one component at a time. These analyses assessed sensitivity
to the treatment-assignment model, the censoring model, and the future-score
transition model. The reduced propensity model omitted cardiomyopathy, race
and ethnicity group, and payer group from the primary propensity model. The reduced censoring model omitted
post-landmark hospitalization, marker, and acute-care histories from the
censoring model. The reduced future-score transition model omitted marker
abnormality and acute-care histories.

\begin{table}[!htbp]
\centering
\caption{OneFlorida analyses under alternative model specifications at the three-year horizon.
The primary analysis is included for reference. Each reduced analysis changes
only the indicated model component. Entries are coefficient estimates
with 200 subject-bootstrap standard errors in parentheses. Relative efficiency
(RE) is the ratio of the bootstrap variance without future-score correction
to that with future-score correction for the treated-versus-control
coefficient.}
\label{tab:oneflorida-sensitivity}
\small
\setlength{\tabcolsep}{5pt}
\begin{tabular}{llccc}
\toprule
Model specification
& Estimator
& Treatment
& Frailty
& RE \\
\midrule
Primary analysis & IPW
& $0.10\ (0.21)$
& $-0.45\ (0.12)$
& -- \\
& IPW-FC
& $0.12\ (0.17)$
& $-0.49\ (0.10)$
& $1.47$ \\
& AIPW
& $0.13\ (0.21)$
& $-0.44\ (0.13)$
& -- \\
& AIPW-FC
& $0.13\ (0.18)$
& $-0.48\ (0.12)$
& $1.48$ \\
\midrule
Reduced propensity model & IPW
& $0.11\ (0.21)$
& $-0.45\ (0.11)$
& -- \\
& IPW-FC
& $0.13\ (0.16)$
& $-0.49\ (0.09)$
& $1.59$ \\
& AIPW
& $0.14\ (0.20)$
& $-0.44\ (0.11)$
& -- \\
& AIPW-FC
& $0.15\ (0.16)$
& $-0.47\ (0.10)$
& $1.51$ \\
\midrule
Reduced censoring model & IPW
& $0.14\ (0.21)$
& $-0.44\ (0.11)$
& -- \\
& IPW-FC
& $0.12\ (0.17)$
& $-0.50\ (0.11)$
& $1.43$ \\
& AIPW
& $0.17\ (0.21)$
& $-0.43\ (0.12)$
& -- \\
& AIPW-FC
& $0.13\ (0.18)$
& $-0.49\ (0.12)$
& $1.43$ \\
\midrule
Reduced future-score transition model & IPW
& $0.10\ (0.21)$
& $-0.45\ (0.12)$
& -- \\
& IPW-FC
& $0.15\ (0.17)$
& $-0.49\ (0.10)$
& $1.44$ \\
& AIPW
& $0.13\ (0.21)$
& $-0.44\ (0.13)$
& -- \\
& AIPW-FC
& $0.16\ (0.18)$
& $-0.48\ (0.12)$
& $1.45$ \\
\bottomrule
\end{tabular}
\end{table}

Across all three reduced specifications, treatment estimates remained
close to the primary three-year estimate, frailty remained negative, and
future-score correction retained comparable precision gains. For AIPW-FC,
treatment estimates under the reduced specifications ranged from 0.13 to 0.16,
compared with 0.13 in the primary analysis, and frailty estimates ranged from
$-0.49$ to $-0.47$, compared with $-0.48$. Every corresponding 95\% bootstrap
confidence interval for treatment included zero. Relative efficiency ranged
from 1.43 to 1.59 for IPW-FC versus IPW and from 1.43 to 1.51 for AIPW-FC
versus AIPW.

\bibliographystyle{plainnat}
\bibliography{References}
\end{document}